\newcommand{\bver}{\bar{\ver}}
\newcommand{\ver}{\nu}
\newcommand{\ys}{{y^*}}
\renewcommand{\r}{r}
\newcommand{\Epzero}{\E^\mu_{\pi_0}}
\newcommand{\bpi}{\bar{\pi}}
\newcommand{\q}{{q}}
\newcommand{\D}{\mathcal{\bar{P}}}
\newcommand{\hy}{\eta}
\newcommand{\Pb}{\bar{\mathcal{P}}}
\newcommand{\f}{\mathbf{f}}
\renewcommand{\i}{\mathbf{i}}
\renewcommand{\j}{\mathbf{j}}
\newcommand{\Cb}{\bar{C}}
\newcommand{\Vb}{\bar{V}}
\newcommand{\pizero}{\pi_0}
\newcommand{\Stop}{\mathcal{S}}
\newcommand{\uJ}{\underline{J}}
\newcommand{\umu}{\underline{\mu}}
\newcommand{\uV}{\underline{V}}
\newcommand{\uQ}{\underline{Q}}
\renewcommand{\H}{\mathcal{H}}
\newcommand{\G}{\mathcal{G}}
\newcommand{\Ep}{\E^\mu_{\pi_0}}
\newcommand{\cb}{\bar{c}_a}
\newcommand{\F}{\mathcal{F}}
\newcommand{\Bs}{R^\pi} 
\newcommand{\Bsl}{R^l}
\newcommand{\Bsi}{R}
\newcommand{\T}{T^\pi}
\newcommand{\Tk}{T^{\pi_{k-1}}}
\newcommand{\sigs}{\sigma}
\newcommand{\sigp}{\sigma}
\newcommand{\priv}{\pi^P}
\newcommand{\discount}{\rho}
\newcommand{\glX}{\geq_{L_X}}
\newcommand{\Mu}{\boldsymbol{\mu}}
\newcommand{\U}{\mathbb{U}}
\newcommand{\A}{\mathbb{A}}
\newcommand{\X}{\mathbb{X}}
\newcommand{\Y}{\mathbb{Y}}
\newcommand{\ta}{\tilde{a}}
\newcommand{\ca}{c_a}
\renewcommand{\S}{\mathbf{S}}
\newcommand{\ones}{\mathbf{1}}
\newcommand{\tpi}{\tilde{\pi}}
\renewcommand{\l}{\mathcal{L}}
\newcommand{\p}{\prime}
\renewcommand{\l}{\mathcal{L}}
\newcommand{\gtp}{\underset{\text{\tiny TP2}}{\geq}}
\newcommand{\gl}{\geq_{L_1}}
\newcommand{\glx}{\geq_{L_X}}
\newcommand{\E}                 {\Bbb{E}}
\renewcommand{\P}                 {\Bbb{P}}
\newcommand{\reals}{\mathbb{R}}
\newcommand{\I}{\Pi(X)}
\renewcommand{\u}{\{1,2\}}
\newcommand{\argmin}{\operatorname{argmin}}
\newcommand{\ole}{\stackrel{\triangle}{=}}
\newtheorem{theorem}{Theorem}
\newtheorem{corollary}{Corollary}
\newtheorem{result}{Result}
\newtheorem{lemma}{Lemma}
\newtheorem{definition}{Definition}
\newcommand{\gr}{\geq_r}
\newcommand{\lr}{\leq_r}
\newcommand{\gs}{\geq_s}
\newcommand{\bp}{{\bar{\pi}}}
\newcommand{\beq}{\begin{equation}}
\newcommand{\eeq}{\end{equation}}
\renewcommand{\qed} {{$\hfill\blacksquare$}}
\newtheorem*{rep@theorem}{\rep@title}
\newcommand{\newreptheorem}[2]{%
\newenvironment{rep#1}[1]{%
 \def\rep@title{#2 \ref{##1}}%
 \begin{rep@theorem}}%
 {\end{rep@theorem}}}
\begin{document}

\title{Quickest Detection with  Social Learning: Interaction of  local and global decision makers}
\author{Vikram Krishnamurthy  {\em Fellow, IEEE} 
\thanks{This work was partially supported by NSERC.}
\thanks{V. Krishnamurthy is
 with the Department of Electrical and Computer
Engineering, University of British Columbia, Vancouver, V6T 1Z4, Canada. 
(email:  vikramk@ece.ubc.ca).}}

\maketitle

\begin{abstract} We consider how local and global decision policies interact in stopping time problems such as quickest time
change detection.
Individual agents make myopic local decisions via
social learning, that is,
each agent records a private  observation of a noisy
underlying state process,
 selfishly optimizes
its local utility
and then broadcasts its local decision. Given these local decisions,
how can a global decision maker achieve quickest time  change detection when the underlying  state changes  according to  a phase-type distribution? The paper presents four results.
First, using Blackwell dominance of measures, it is shown that
 the optimal cost incurred in social learning based quickest detection is always larger than that of classical quickest detection.
Second, it is shown that in general
 the optimal  decision policy for social learning based quickest detection is characterized by 
multiple thresholds within the space
of Bayesian distributions. Third, using lattice programming and stochastic dominance,
sufficient conditions are given for the optimal decision policy to consist
of a single linear hyperplane, or, more generally, a threshold curve. Estimation of the
 optimal linear approximation to this threshold curve is formulated as a simulation-based stochastic optimization problem. 
Finally, the paper   shows that in multi-agent sensor management  with quickest detection, where each
 agent views the world according to its
 prior, the optimal policy has a similar  structure to  social learning.

\end{abstract}

\begin{keywords}  
Quickest time Bayesian  change  detection, social learning, 
phase-type  distribution, stochastic dominance, Blackwell dominance, multi-agent sensor scheduling, partially observed Markov decision process
\end{keywords}

\section{Introduction} \label{sec:intro}

Classical Bayesian quickest time detection  \cite{Shi63,Shi78}
involves detecting a geometrically distributed change time by  optimizing the tradeoff
between false alarm frequency and  delay penalty. 
The literature is vast, with applications in biomedical signal processing, machinery
monitoring and finance \cite{PH08,BN93,Mou86,Shi78}, see also  \cite{PPA89} for team detection, and
\cite{Yak97,YKP99}.
Classical quickest detection  can be formulated
as the following sequential protocol involving   a countable number of agents: Suppose
each agent acts once in a pre-determined sequential order indexed by $k=1,2,\ldots$.
Agent $k$ receives an observation of the underlying state at time $k$ and computes the posterior probability that the state has changed.
It then reveals this posterior probability to subsequent agents. This process repeats until  a stopping time
at which the global decision maker
 announces a change. It is well known \cite{Shi63,Shi78} that the optimal policy  to  declare a change has a threshold (monotone) structure: 
if the posterior probability (belief state)  exceeds a  threshold, then a change is announced; otherwise agents continue making observations.

\subsection{Context} 
Motivated by understanding how local decisions  affect global decision-making in multi-agent systems, this paper considers
a  generalization of the above classical quickest detection setup.  Given   local decisions from agents that are performing social learning,  how can a 
global decision maker achieve quickest time change detection?  In other words, how can a stochastic control problem
(stopping time problem)
be solved to make global decisions based on local decisions of agents? We consider phase-type distributed change times
and interaction between local and global decision-makers as outlined in the following two examples:

{\em  Example 1. Social Learning based Quickest-time detection}: Suppose that a multi-agent system  performs
social learning \cite{Cha04}\footnote{Another way of viewing the  social learning model
is that there are finite number of agents that act repeatedly in some pre-defined order. If each
 agent picks its local decision using the current public belief, then the setup is identical to the social learning
 setup.
We also refer reader to \cite{AO10,AOT10} for several recent results in social learning
over several types of network adjacency matrices.} to estimate an underlying state as follows: Just as in the
classical quickest detection protocol above, agents  act sequentially in a pre-determined order.
 However, instead of revealing its posterior distribution of change, each agent reveals
 its local decision to subsequent agents. The  agent chooses its
local 
decision  by  optimizing a local utility function (which depends on the public belief of the state and its  local observation).  Subsequent agents update their public belief based on these local
decisions (in a Bayesian setting), and the sequential procedure continues.
Given these local decisions, how can such a multi-agent system detect a change in the underlying state
and make a global decision to stop?

{\em Example 2: Quickest-time detection with adaptive sensing}:  Consider a multi-sensor system where each adaptive sensor is equipped with
a local sensor manager (controller). The multi-sensor system acts sequentially
as follows: Based on the existing belief of the underlying state, the local  sensor-manager chooses (adapts) the 
sensor mode e.g., low resolution or high resolution. The sensor then views the world based on this
mode. 
Given the  belief states and local sensor-manager decisions, how can such a multi-agent system achieve quickest time change detection?\footnote{The information flow patterns of Example 1 and 2 are similar. 
In Example 1, the sequence of events is
$\text{prior} \rightarrow \text{observation} \rightarrow \text{local decision} \rightarrow \text{posterior}$.
In Example 2, the sequence of events is
$\text{prior} \rightarrow \text{local decision} \rightarrow \text{observation} \rightarrow \text{posterior}$.}
Quickest detection  with such sensor management is of importance in automated tracking and surveillance systems
\cite{ASS02,AT96,CZK07}. In such cases, if 
individual agents or  cluster heads  are polled sequentially (e.g. round-robin fashion) then
the resulting dynamics are very similar to the social learning setup.

Classical quickest 
detection   
 is a trivial case of the above examples where  agents reveal their local observation (instead of local decision) to subsequent agents.
The above examples are non-trivial generalizations 
due to the interaction of  the local and global decision makers\footnote{A signal processing interpretation of social learning is as follows. Instead of using the posterior distribution to achieve quickest time detection, the
decision maker (or individual agents) computes the maximum aposteriori (MAP) estimate of the underlying state at each time instant. Given these hard decision MAP state estimates (local decisions), how can the global decision maker achieve quickest  change detection?}.  In both examples, the local decision determines the belief state which determines
the global decision (stop or continue) which determines the local decision at the next time instant and so on.
This interaction of local and global decision-making 
leads to discontinuous dynamics for the posterior probabilities (belief state) and unusual behavior as  outlined below.
We will show that the optimal decision policy has multiple thresholds and the stopping regions are non-convex.

Fig.\ref{fig:redgreen}(a) gives a  visual description of the optimal policy of social learning based quickest detection.
It illustrates a 
{\em triple threshold policy} for  geometric distributed change time.
Complete details of this numerical example are given in Sec.\ref{sec:numerical}.
The horizontal axis $\pi(2)$ is the posterior probability
of no change.
The vertical axis denotes the optimal decision:   $u=1$ denotes stop and declare change, while
 $u=2$ denotes continue.
The multi-threshold behavior of Fig.\ref{fig:redgreen}(a) is unusual: if it  is optimal to declare a change for a particular posterior probability, it may not be optimal to declare a change  when the posterior probability of a change is larger!
 Thus, the global decision (stop or continue) is a non-monotone function of the posterior probability obtained from local decisions.
Fig.\ref{fig:redgreen}(b) shows the associated value function obtained via stochastic dynamic programming.
Unlike standard sequential detection problems where the value function is concave, the figure shows 
that the value function is non-concave and
discontinuous.
To summarize, Fig.\ref{fig:redgreen}  shows
that  social learning based quickest detection results in fundamentally different decision policies compared to classical quickest time detection (which has a single threshold). Thus making global decisions (stop or continue) based on local decisions (from social learning) is non-trivial.

\begin{figure}\centering
\mbox{\subfigure[Optimal global decision policy $\mu^*(\pi)$]
{\epsfig{figure=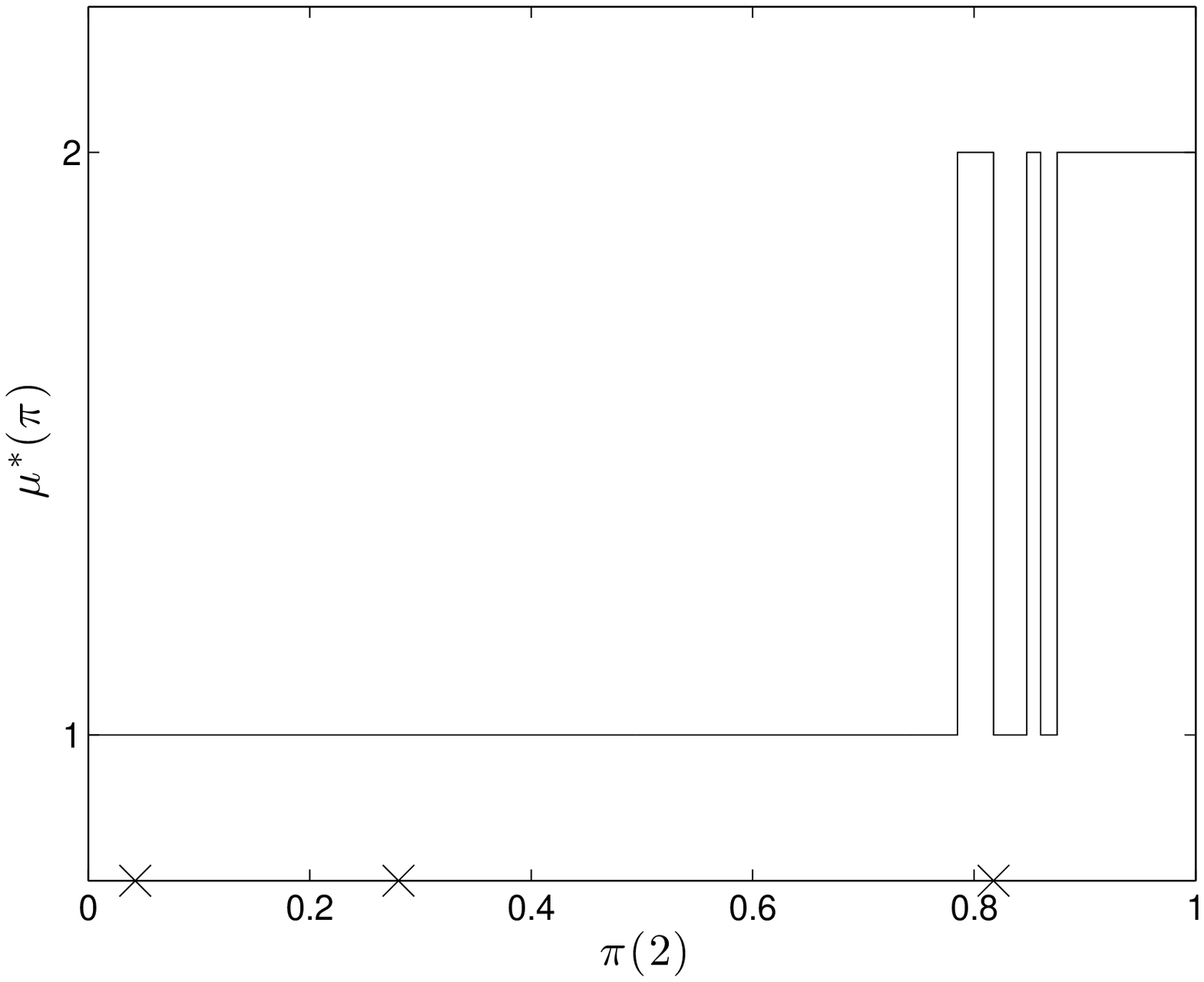,width=0.45\linewidth}} \quad
\subfigure[Value function $V(\pi)$ for  global decision policy]
{\epsfig{figure=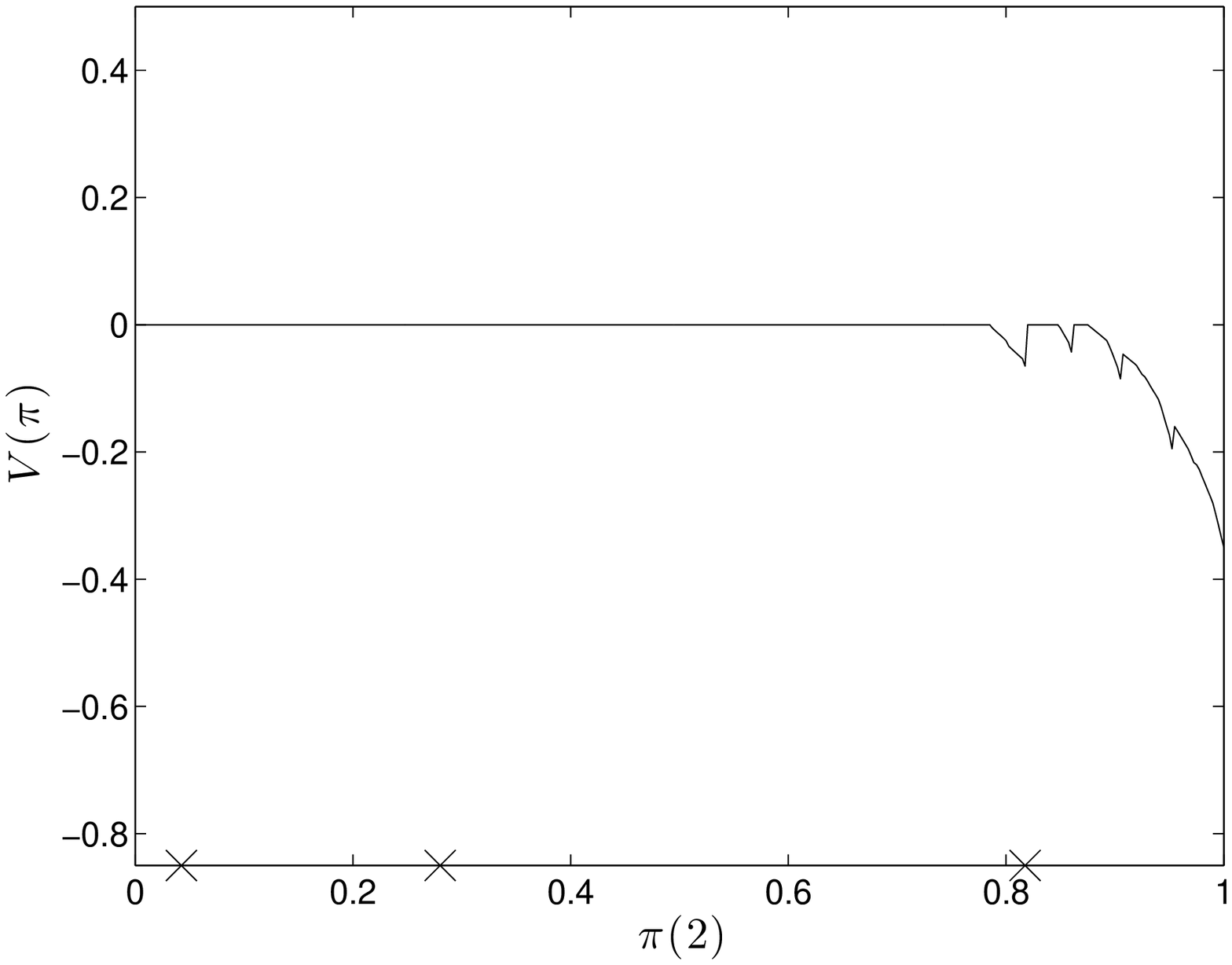,width=0.45\linewidth}}  }
\caption{Optimal decision policy for social learning based quickest time change detection 
for geometric distributed  change time, see
Example 1 of Sec.\ref{sec:numerical} for details.  The optimal policy $\mu^*(\pi)$ is characterized by a triple threshold. The value function $V(\pi)$ is non-concave and discontinuous.}
\label{fig:redgreen}
\end{figure}

\subsection{Motivation and Related Works}  
{\em Social Learning}:
In the last decade, social learning has been studied widely in economics to model the behavior of financial markets, 
crowds and social networks, see \cite{AO10,AOT10,Cha04,SS00,LADO07} and numerous references therein.
The social  learning framework
is similar to Hellman's and Cover's seminal papers \cite{CH70,HC70} which analyze learning with limited memory.
\cite[Chapters 3 and 4]{Cha04}
gives an excellent exposition of social learning.
An important result in social learning \cite{Ban92,BHW92} is that if the underlying state is a random variable and the observation
and local decision spaces are finite, then
agents eventually herd and end up making  the same local decision  irrespective of their observation.  Such information cascades
have been used in \cite{Cha04} to model sequences of financial trades, crashes and booms, and  auctions.
There is
strong motivation to understand the interaction of local and global decision makers in social learning.
Global decision making with social learning has recently been studied by several economists; for example \cite{CG94,BV02,Cha04,SS97,Kri11}  describe how information externalities affect global and local decision making
in social learning.
 The current paper
can be viewed as addressing a related problem: if individual agents make (simple) decisions by optimizing
a local utility, how can the global system achieve the (complex) task of detecting a change.
In a non-Bayesian setting such problems of designing sophisticated global behavior given simple local behavior
have also been studied in game-theoretic learning \cite{HM00,Har05,KMY08} involving correlated equilibria.

{\em PH-distributed change time}:  This paper deals with quickest detection for PH-distributed change times.
 PH-distributions are used widely 
 in queuing theory \cite{Neu89} and include geometric distributions as a special case.
The optimal detection of a PH-distributed change point is useful since the family of all PH-distributions forms a dense subset for the set of all distributions,
i.e., for any given distribution function $F$ such that $F(0) = 0$, one can find a sequence of PH-distributions	
$\{F_n , n	\geq	1\}$	 to	uniformly	approximate	$F$	over $[0, \infty)$;	see	\cite{Neu89}. Therefore there is strong
motivation to analyze quickest detection
with PH-distributed change times and social learning. Quickest time change detection for  PH-distributed change times is analyzed in \cite{Kri11}. The current paper generalizes these results to include social learning. A systematic investigation of the statistical properties of PH-distributions can be found in \cite{Neu89}.

\subsection{Main Results and Organization}
This paper deals with characterizing the structure of the global quickest-time change detection policy in  multi-agent
systems where individual agents make local myopic decisions when performing social learning.
The main results and organization of the paper are as follows:\\
1. {\em Multi-agent Protocol}: Sec.\ref{sec:prob} presents the multi-agent social learning protocol. The quickest time detection
problem is formulated.  We also point out in (\ref{eq:ksd}) the difference between the social learning model
and the classical Kolmogorov-Shiryaev  model for quickest change detection.
\\
2. {\em Dynamic Programming Formulation and Dominance of Classical Detection}:
In Sec.\ref{sec:dpf},  the optimal stopping policy is characterized in terms of stochastic dynamic programming.
It is shown that the value function is in general non-concave. Also  Theorem \ref{thm:blackwell} uses Blackwell ordering of measures to show that the optimal cost incurred in social learning based  quickest detection
is always
larger than classical quickest detection. Although such a result might appear intuitive (decision making using social learning is based on less information than classical quickest detection), the proof
is nontrivial. One needs to show that the expected cost of the entire trajectory of a stochastic
dynamical system (driven by the social learning protocol) is larger than that of classical quickest detection. 
 \\
3. {\em Main assumptions and Multi-threshold Policies}: Sec.\ref{sec:threshold} starts
with the main assumptions required to analyze the structure of the optimal quickest detection policy.
These assumptions allow us to  decompose the belief
space into polytopes (Theorem \ref{lem:polytopes1}).  On each of these polytopes, the conditional probability of a local decision given the underlying
state and posterior distribution is a constant.

The main result of Sec.\ref{sec:threshold} is to
characterize quickest time change detection
policies when the probability of change, denoted $\epsilon$, is small.  When the probability of change equal to zero, Theorem~\ref{thm:ep0}   characterizes explicitly
the multi-threshold structure of the optimal decision policy and 
non-concave behavior of the value function for sequential detection of a fixed state.
Then 
Corollary \ref{cor:qdslow}  shows that the optimal quickest-time detection policy  for change probability $\epsilon$,
yields a cost that is within $O(\epsilon)$ 
of the optimal cost for zero change probability.
An important ingredient in the proof of this
result is characterization of fixed points of the social learning filter update (Lemma \ref{lem:fixed}) which
also characterizes regions where the agents form information cascades in social learning. \\
4. 
{\em Phase-type Distributed Change Times}: 
The next main result is to 
 is to characterize the optimal policy of the global decision maker to achieve quickest time detection when the change time  has a   {\em phase-type} (PH) distribution and
 individual agents are performing social learning.
As mentioned above, PH-distributions can approximate arbitrary distributions and so are widely used in discrete-event systems.

A PH-distributed change time can be modelled
as a multi-state Markov chain with an absorbing state, see  \cite{Kri11} and also \cite{Neu89} for a systematic description.
(For a 2-state Markov chain, the PH-distribution
specializes to the geometric distribution).
So  for quickest time detection with PH-distributed change time, the
belief states  (Bayesian posterior)  lie in  a multidimensional simplex of probability
mass functions. 

 {\em Under what conditions will there  exist   a threshold stopping  policy for quickest detection with PH-distributed change time and
 social learning?}  Under what conditions for the geometric change time case does the optimal policy coincide with the classical
 Kolmogorov-Shiryaev model?

To answer these questions, the main results of Sec.\ref{sec:ph} are as follows:\\
(i) Theorem \ref{thm:cpi} gives sufficient conditions under which the optimal decision policy
for the global decision maker is myopic and characterized by a linear threshold hyperplane
in the multidimensional simplex. For the geometric case, this results yields an identical threshold to the Kolmogorov-Shiryaev model. 
\\
(ii) Theorem \ref{thm:1} gives sufficient conditions so that the optimal decision policy is characterized by a single switching curve
in the multidimensional simplex. The result uses  lattice programming   \cite{Top98} and structural results involving
monotone likelihood ratio stochastic orders \cite{Rie91,KW09}, and a novel modification of it.
 The result is useful because it implies that the global decision to stop
can be implemented efficiently at each agent. Each agent simply needs to compare its belief state with respect to the
threshold curve (in terms of a monotone likelihood ratio partial order on the space of posterior distributions).
Theorem \ref{thm:dep} gives sufficient
conditions on the optimal linear approximation to this curve that preserves the monotone likelihood ratio increasing
structure of the optimal decision policy. This linear approximation  can be estimated
via simulation based stochastic optimization.
\\
5. {\em Multi-agent Quickest Time Detection with active sensing}: Sec.\ref{sec:sensor}  considers  multi-agent quickest time detection  outlined in Example~2 above.
We show that  the optimal policy is similar to that in social learning based quickest detection.

\section{Social Learning Model and Protocol for Quickest Time Detection}
\label{sec:prob}

In this section, the multi-agent social learning model is presented in Sec.\ref{sec:prot1}. This constitutes the {\em local} decision-making
framework for estimating an underlying state.
Then Sec.\ref{sec:qdform} formulates the  costs incurred  by the {\em global} decision maker in
quickest time detection.  Sec.\ref{sec:quick}
presents the global quickest time detection objective.  
Finally, Sec.\ref{sec:summary} summarizes
 the entire social learning quickest detection model.

 
\subsection{The  Multi-agent Social Learning Model} \label{sec:prot1}

Consider a countably infinite number of agents\footnote{As mentioned earlier, the same setup holds if a finite number of agents are  polled repeatedly  in some pre-defined order, providing each agent picks
its local decision based on the most recent public belief.} performing social learning  to estimate an underlying state process $x$.
Each agent acts once  in a predetermined sequential order indexed by $k=1,2,\ldots$. The index $k$ can also be viewed
as the discrete time instant when agent $k$ acts.

Let $y_k \in \Y =  \{1,2,\ldots,Y\}$ denote the local (private) observation of agent $k$ and $a_k \in \A =  \{1,2,,\ldots, A\}$ denote the local
decision agent $k$ takes. Define the sigma algebras:
\begin{align}  \H_k & \quad 
\sigma\text{-algebra generated by } (a_1,\ldots,a_{k-1},y_k),  \nonumber\\
\G_k & \quad
\sigma\text{-algebra generated by } (a_1,\ldots,a_{k-1},a_k).   \label{eq:sigg} \end{align}

The social learning model \cite{BHW92,Cha04} comprises of the following  ingredients:

1. {\em Absorbing-state Markov chain and Phase-Type Distribution Change Times}: The state $x_k$ represents the underlying process
that changes at time $\tau^0$.
We  model the change point  $\tau^0$  by a   
{\em  phase type (PH) distribution}. 
As mentioned in Sec.\ref{sec:intro}    PH-distributions form a dense subset for the set of all distributions
	\cite{Neu89} and so can be used to  approximate   change times with arbitrary distribution. This is done by
	constructing a multi-state  Markov chain as follows:
Assume the underlying state  $x_k$ evolves as a Markov chain on
the finite state space $\X = \{1,\ldots,X\}$. Here state `1' is an absorbing state
and denotes the state after the jump change.  The states $2,\ldots,X$ can be viewed as  a single composite state that $x$ resides in before the jump.

The initial distribution is
$ \pi_0 = (\pi_0(i), i \in \X)$, $  \pi_0(i) = 
P(x_0 = i)$. We are only interested in the case where the change occurs after a least one measurement, 
so  assume $\pi_0(1) = 0$.
 So the
transition probability matrix $P$ is of the form
\beq \label{eq:phmatrix}
P = \begin{bmatrix}  1 & 0 \\ \underline{P}_{(X-1)\times 1} & \bar{P}_{(X-1)\times (X-1)} \end{bmatrix}
\eeq
Let the ``change time" $\tau^0$ denote the time at which $x_k$ enters the absorbing state 1,
i.e., 
\beq \tau^0 = \inf\{k: x_k = 1\} . \label{eq:tau0} \eeq
The distribution of the change  time $\tau^0$ is  equivalent to the
distribution of the absorption time to state 1 and is given by
\beq \label{eq:nu}
 \nu_0 = \pi_0(1), \quad \nu_k = \bar{\pi}_0^\p \bar{P}^{k-1} \underline{P}, \quad k\geq 1 \eeq
  where $\bar{\pi}_0 = [\pi_0(2),\ldots,\pi_0(X)]^\p$.
So by appropriately choosing the pair $(\pi_0,P)$ 
and  state space dimension $X$,
 one can approximate any given discrete distribution on $[0, \infty)$ by the distribution $\{\nu_k, k \geq 0\}$; see 
\cite[240-243]{Neu89}. To ensure that $\tau^0$ is finite, we assume states $2,3,\ldots,X$ are transient.
  In the special case when $x$ is a 2-state Markov chain,
 the change time $\tau^0$ is geometrically distributed.


2. {\em Local Observation}:
Agent's $k$ local  (private) observation $y_k  \in \Y=\{1,\ldots,Y\}$ 
 is obtained
from the observation likelihood distribution 
\beq B_{xy}= P(y_k=y|x_k=x) . \label{eq:B} \eeq  The states $2, 3,.\ldots, X$ are fictitious and are  defined to generate  the PH-distributed change time $\tau^0$.
So   states  $2, 3,.\ldots, X$ are indistinguishable in terms of the observation $y$.
That is, $P(y|2) = P(y|3) = \cdots = P(y|X)$ for all $y\in \Y$.

{\em 3. Private belief}: Using local observation $y_k$, agent $k$ updates its private belief  $\priv_k$
defined as 
\beq \priv_k = 
 (\priv_k(i), \;i \in \X), \quad\priv_k(i) = \E\{I(x_k =i)| \H_{k}\} = P(x_k = i| a_1,\ldots,a_{k-1},y_k),\text{ initialized by   $\pi_0$. } \label{eq:mudef}
\eeq
Thus the private belief is the posterior distribution of the underlying state given the past local decisions and current observation.
It   is computed by agent $k$ according to the following Hidden Markov Model (HMM)  filter:
\begin{align}  \label{eq:privu} \priv_k &= T(\pi_{k-1},y_k),  \text{ where }
T(\pi,y) = 
\frac{B_y P^\p \pi}{\sigp(\pi,y)},
\; \sigp(\pi,y) = \mathbf{1}^\p B_y   P^\p \pi. 
\\
B_y  &= \text{diag}(B_{1y},\ldots, B_{Xy})  \quad  (X\times X \text{ diagonal matrix for each $y \in \Y$) } \nonumber
 \end{align} 
Also $\pi_{k-1} $ denotes the public belief available at time $k-1$ (defined in Step 5 below).

4. {\em Agent's local decision}: Agent  $k$ then makes local decision $a_k\in \A = \{1,2,,\ldots, A\}$ to  minimize myopically its expected cost.
To formulate this, let $c(i,a)$ denote the non-negative cost incurred if the agent picks local decision $a$ when the underlying state is $x=i$.
Denote the local decision $X$-dimensional cost vector 
\beq
\ca = \begin{bmatrix} c(1,a) & c(2,a)  & \cdots & c(X,a) . \end{bmatrix}  . \label{eq:ca} \eeq
Then agent $k$ chooses  local decision $a_k$ greedily to minimize its expected cost:
\beq 
a_k = a(\pi_{k-1},y_k) = \arg\min_{a \in \A} \E\{c(x,a)|\mathcal{H}_k\}  =\arg\min_{a\in \A} \{c_a^\p\priv_k\}
\label{eq:step2} \eeq
 In quickest change detection, since states $2,3,\ldots,X$ are indistinguishable in terms of observation $y$, we assume that 
$c(2,a) = c(3,a) = \cdots =c(X,a)$  for each $a\in \A$.
 
 5.  {\em Social learning Public Belief}:
 Finally agent $k$ broadcasts its local decision $a_k$. Subsequent agents $\bar{k} > k$ use decision $a_k$ to update their
 public belief of the underlying state $x_k$ as follows:
Define   the public belief  $\pi_k$  as the posterior distribution
 of the state $x$ given all local decisions taken up to time $k$. 
\beq \label{eq:pidef}
\pi_k =  \E\{x_k | \mathcal{G}_{k}\} = (\pi_k(i), \;i \in \X), \quad
\pi_k(i)  = P(x = i|a_1,\ldots a_k), \quad \text{initialized by   $\pi_0$. }
\eeq
Then agents $\bar{k} > k$
 update their public belief according to the following ``social learning Bayesian filter":
\beq \pi_k = \Tk(\pi_{k-1},a_k), \text{ where } \T(\pi,a) = 
 \frac{\Bs_a P^\p \pi}{\sigs(\pi,a)},
\; \sigs(\pi,a) = \mathbf{1}_X^\p \Bs_a  P^\p \pi
\label{eq:piupdate} \eeq
We 
use the notation $\T(\cdot)$ to point out that the above Bayesian update map depends explicitly on the belief state $\pi$.
(For notational simplicity we have chosen not to use the superscript $\pi$ for $\sigma(\pi,a)$).
This is a key  difference compared to  the HMM filter (\ref{eq:privu}) where the Bayesian update map $T(\cdot)$ does not depend explicitly
on 
belief state  $\pi$.
In (\ref{eq:piupdate}),  $\Bs_a$ denotes the diagonal matrix $\Bs_a =  \text{diag}(\Bs_{i,a},\, i\in \X )$ where
\beq \Bs_{i,a} =  P(a_k=a|x_k=i,\pi_{k-1}=\pi) \label{eq:bs}\eeq
 denotes the conditional probability that agent
$k$ chose local decision  $a$ 
given state $i$.  We call $\Bs_{i,a}$ as the {\em local decision likelihood probabilities} in analogy to  observation likelihood probabilities 
$B_{iy}$ (\ref{eq:B}) in classical filtering.

Clearly  observing the local decision $a_k$ taken by agent $k$ yields  information about its
local observation $y_k$. That is, $a_k$ serves as a surrogate observation of the underlying state $x_k$.
The following lemma summarizes how subsequent agents use $a_k$ to  compute  the local decision likelihood probabilities $\Bs_{ia}$ in the social
learning filter.
The proof  is straightforward and omitted.

\begin{lemma}  \label{lem:disc} The local decision likelihood probability matrix $\Bs$ in the social learning Bayesian filter (\ref{eq:piupdate}) is computed as 
\beq\Bs =  B M^\pi \text{ where } 
M^\pi_{y,a} \ole 
P(a|y,\pi)=
\prod _{\ta \in \A - \{a\}}I(c_a^\p B_{y} P^\p \pi < c_{\ta}^\p B_{y} P^\p\pi). \label{eq:aprob}  \eeq
Here $\Bs$ is a  $\Y \times \A$ matrix,
 $B,B_y$ are the private observation probabilities defined in (\ref{eq:B}), (\ref{eq:privu}), $c_a,c_{\ta}$ are the local cost vectors defined in (\ref{eq:ca}), and  $I(\cdot)$ denotes the indicator function. \qed
\end{lemma}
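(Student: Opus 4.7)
The plan is to start from the definition $\Bs_{i,a} = P(a_k = a \mid x_k = i, \pi_{k-1} = \pi)$ in (\ref{eq:bs}) and marginalize over the private observation $y_k$. Writing
\[
\Bs_{i,a} = \sum_{y \in \Y} P(a_k = a \mid y_k = y, x_k = i, \pi_{k-1} = \pi)\, P(y_k = y \mid x_k = i, \pi_{k-1} = \pi),
\]
I would invoke two conditional independences arising from the social learning protocol. First, the observation likelihood depends only on the current state (cf.\ (\ref{eq:B})), so $P(y_k = y \mid x_k = i, \pi_{k-1} = \pi) = B_{iy}$. Second, the local decision rule (\ref{eq:step2}) is a deterministic function of the pair $(\pi_{k-1}, y_k)$, since $a_k = \arg\min_a c_a^\p T(\pi_{k-1}, y_k)$; hence $a_k$ is conditionally independent of $x_k$ given $(y_k, \pi_{k-1})$, which yields $P(a_k = a \mid y_k = y, x_k = i, \pi_{k-1} = \pi) = P(a_k = a \mid y_k = y, \pi_{k-1} = \pi) =: M^\pi_{y,a}$. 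Combining these gives $\Bs_{i,a} = \sum_y B_{iy} M^\pi_{y,a} = (BM^\pi)_{i,a}$, which is the matrix identity $\Bs = BM^\pi$.

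It remains to derive the indicator-product form of $M^\pi_{y,a}$. Substituting the HMM filter (\ref{eq:privu}) into the myopic rule (\ref{eq:step2}), agent $k$ picks $a_k = a$ iff $c_a^\p T(\pi_{k-1}, y_k) < c_{\ta}^\p T(\pi_{k-1}, y_k)$ for every $\ta \in \A - \{a\}$. Since $\sigp(\pi,y) = \mathbf{1}^\p B_y P^\p \pi > 0$ is a common positive normalizer (as long as $y$ has positive probability under $\pi$), multiplying through preserves the inequality, leaving the equivalent condition $c_a^\p B_y P^\p \pi < c_{\ta}^\p B_y P^\p \pi$. As agent $k$ chooses $a$ deterministically conditional on $(y, \pi)$, the probability $M^\pi_{y,a}$ is precisely the indicator that all these inequalities hold, giving the stated product-of-indicators expression.

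The only genuinely delicate point is how to handle ties in the $\arg\min$, i.e.\ points $\pi$ where two of the quantities $c_a^\p B_y P^\p \pi$ coincide. This is also the reason $M^\pi$ depends explicitly on $\pi$ and creates the piecewise-constant (polytopal) structure later exploited in Theorem~\ref{lem:polytopes1}. I would dispose of the ambiguity by fixing any deterministic tie-breaking rule (e.g.\ smallest index), which only affects a measure-zero set of $\pi$'s and does not change $\Bs$ on the interior of each polytope; the proof of the identity itself is unaffected. Beyond this, the derivation is a routine application of the tower property and the deterministic nature of the myopic local decision, so the lemma follows directly.
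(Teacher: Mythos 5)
Your proof is correct and is precisely the ``straightforward'' argument the paper has in mind when it omits the proof: marginalize $\Bs_{i,a}=P(a\mid x=i,\pi)$ over $y$, use $P(y\mid x=i,\pi)=B_{iy}$ and the fact that $a_k$ is a deterministic function of $(\pi_{k-1},y_k)$, and cancel the positive normalizer $\sigma(\pi,y)$ to turn the $\arg\min$ in (\ref{eq:step2}) into the product of indicators. Your remark on tie-breaking is also apt — the strict inequalities in (\ref{eq:aprob}) leave $M^\pi$ ill-defined on the hyperplanes $\eta_y$, which the paper implicitly resolves via the $\geq$/$<$ convention in (\ref{eq:expopoly})--(\ref{eq:reduced}).
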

\vspace{1cm}
The main implication of Lemma \ref{lem:disc} is that the social learning Bayesian filter (\ref{eq:piupdate})  is discontinuous in the belief state
$\pi$,  due to the presence of indicator functions in (\ref{eq:aprob}).
The likelihood probabilities $\Bs$ in (\ref{eq:bs}) are an explicit function of the belief state $\pi$ -- this is stark contrast
to the  standard quickest detection problems where the observation distribution is not an explicit
function of the posterior distribution.

{\em  Summary}:   A key aspect of  the information pattern in  the above social learning protocol
 is that  agent $k$ does not have access to the private belief state $\priv_{k-1}$ or private
observations of previous agents. Instead each agent  $k$ only has access to the local decisions taken by previous agents together with its
own current private observation $y_k$.  
The fact that  the likelihood probabilities $\Bs$ is an explicit function of the public belief state $\pi$ (see  (\ref{eq:aprob})) is an important aspect of social learning that is not present in classical
sequential detection problems. It makes the Bayesian update of the public belief discontinuous with $\pi$ and makes our proofs substantially harder than standard concavity arguments in classical quickest detection problems.

 {\em Belief State Space}: Before proceeding with the quickest time detection formulation, we briefly describe the space in which
 the public belief $\pi$ defined in (\ref{eq:pidef}) lives.  The public belief 
 belongs to the
unit $X-1$ dimensional simplex denoted as
\begin{align}
\I \ole \left\{\pi \in \reals^{X}: \mathbf{1}_{X}^{\p} \pi = 1,
\quad 
0 \leq \pi(i) \leq 1 \text{ for all } i \in \X \right\} .\label{eq:Pi}
\end{align}
So for geometric-distributed change times, the belief state space $\Pi(2)$ is the interval $[0,1]$.
For PH-distributed change times, the belief space $\I$ is a multi-dimensional simplex.
For  example, $\Pi(3)$ is a two-dimensional unit simplex (equilateral triangle); $\Pi(4)$
is a tetrahedron, etc.
The vertices of the unit simplex $\I$ are the unit $X$-dimensional vectors  $ e_1,\ldots,e_X $, where
\beq \label{eq:ei}
\text{ $e_i$ denotes  the unit  vector with $1$ in the $i$th position,
$i \in \X$.}  \eeq  Of course the private belief $\priv$ (\ref{eq:mudef}) also lives in $\I$.

\subsection{Quickest Time Detection: Costs Incurred by Global Decision Maker} \label{sec:qdform}

With the above social learning based local decision framework, we now formulate the quickest time detection problem
faced by the global decision maker.  At each time $k$, 
given the public belief $\pi_k$, let $u_k$ denote
 the  global decision taken: 
\beq \label{eq:actionpolicy}
 u_k = \mu(\pi_k)  \in \{1 \text{ (announce change and stop)} ,2 \text{ (continue) }\}. \eeq
 Thus the global decision  $u_k$  is  $\G_k$  measurable, where $\G_k$ is defined in (\ref{eq:sigg}).
In (\ref{eq:actionpolicy}),  the policy $\mu$  belongs to the class of stationary decision
policies denoted $\Mu$.
Below we formulate the costs incurred when taking these global decisions $u_k$.

(i) {\em  Cost of announcing change and stopping}:  
If global decision $u_k=1$  is chosen, then the social learning protocol of Sec.\ref{sec:prot1} terminates. If  $u_k=1$ is chosen before  the change point $\tau^0$, then a false
alarm penalty is incurred.
The false alarm event  $\cup_{i\geq 2}   \{x_k =  i\} \cap \{u_k = 1\} = \{x_k\neq 1\} \cap \{u_k = 1\}$ represents the event
that a change is announced before the change happens at time $\tau^0$.
To evaluate the {\em false alarm penalty}, 
 let  $f_i I(x_k=i,u_k=1)$ denote the cost of a false alarm in state $i$, $i \in \X$, where
$f_i \geq 0$. Of course,
$f_1 = 0$ since a false alarm is only incurred if the stop action is picked in states $2,\ldots, X$.  The expected false alarm penalty is 
\beq \Cb(\pi_{k},u_k=1) =  \sum_{i \in \X} f_i \E\{I(x_k=i,u_k=1)|\G_k\} = \f^\p \pi_k , \quad
\text{ where }
  \f = (f_1,\ldots,f_X)^\p, \; f_1 = 0. \label{eq:cp1}\eeq
The false alarm  vector $\f$ is chosen with increasing elements so that states  further
from  state~1 incur larger  penalties. (Obviously $f_i \geq 0$ since $f_1 = 0$).

(ii) {\em Delay cost of continuing}: If global decision $u_k=2$ is taken then the social learning protocol of Sec.\ref{sec:prot1} continues to
time $k+1$.
A delay cost is incurred when the event  $ \{x_{k} = 1, u_k = 2\}$ occurs,
i.e., no change is declared at time $k$, even though the state has changed at time $k$.
The expected delay cost is
 \beq  \Cb(\pi_{k},u_k=2) = d\,\E\{I(x_{k} = 1, u_k=2) | \G_k\}
= d e_1^\p \pi_k  \label{eq:cp2} \eeq
where $d > 0$ denotes the delay cost and $e_1$ is defined in (\ref{eq:ei}).

\noindent  {\em Remarks}: (i)  Recall that the public belief state $\pi$ depends on the local decisions $a$. Also the choice of global decision
$u$ determines when the local decision process terminates. This links
the local and global decision makers.  \\
(ii) The above costs (\ref{eq:cp1}), (\ref{eq:cp2})  should be viewed as an example only.
The results of this paper also apply to more general stopping time problems with minor modifications if the global decisions  $u_k$ are $\H_k$ measurable (instead of $\G_k$
measurable), where $\H_k$ and $\G_k$  are defined in (\ref{eq:sigg}).
More generally, $\Cb(\pi,u)$ can also include the local decision cost incurred in social learning, see
remark at the end of Sec.\ref{sec:tcurve}.

\subsection{Quickest Time Detection Objective}\label{sec:quick}
Let $(\Omega,\mathcal{F})$ be the underlying measurable space where $\Omega = (\X \times \U \times \Y)^\infty$ is the product space, which is endowed with the product topology and $\mathcal{F}$  is the corresponding product sigma-algebra. For any $\pi_0\in \I$,  and policy $\mu \in \Mu$,
 there exists a (unique) probability measure $\P^\mu_{\pi_0}$ on  $(\Omega, \mathcal{F})$, 
   see 
 \cite{HL96} for details. Let $\Epzero$  denote the expectation with respect to the measure  $\P^\mu_{\pi_0}$.
  
 Let $\tau$ denote  a stopping time adapted to the sequence of 
 $\sigma$-algebras $\G_k,k\geq 1$, see (\ref{eq:sigg}).
 That is, with $u_k$ determined by decision policy (\ref{eq:actionpolicy}),  
\beq
\tau = \{ \inf k:  u_k = 1\} .   \label{eq:tauu}\eeq
 For each initial distribution $\pi_0 \in \I$, and policy
 $\mu$, the following cost  is associated:
\beq \label{eq:csdef}
J_\mu(\pizero) = \Ep\{\sum_{k=1}^{\tau-1} \discount^{k-1} \Cb(\pi_{k},u_k=2)
+  \rho^{\tau-1} \Cb(\pi_{\tau},u_\tau = 1)
 \}.
\eeq
Here $\discount \in [0,1]$ denotes an economic discount factor.
Since  $\Cb(\pi,1)$, $\Cb(\pi,2)$ are non-negative and bounded for all $\pi \in \I$, stopping is guaranteed
in finite time, i.e., $\tau$ is finite with probability 1 for any $\discount  \in [0,1]$ (including $\rho = 1$). 

\noindent {\bf Kolmogorov--Shiryaev criterion}:  Suppose   $\X = \{1,2\}$ implying
that the change time $\tau^0$ is geometrically distributed. Choose the false alarm vector $\f= f_2 e_2=
[0, f_2]^\p$ where $f_2$
is a positive constant,
 delay cost (\ref{eq:cp2}), and discount factor $\rho =1$. Then the quickest time objective (\ref{eq:csdef})
assumes the classical Kolmogorov--Shiryaev
criterion for detection of disorder \cite{Shi63}:
 \beq J_\mu(\pizero) =   d \Ep\{(\tau - \tau^0)^+\} +  f_2 \,\P^\mu_{\pi_0}(\tau < \tau^0) .
\label{eq:ksd} \eeq
However, unlike classical quickest detection, the posterior (public belief) $\pi$ has  discontinuous dynamics given by the social learning Bayesian
filter (\ref{eq:piupdate}). (Recall from (\ref{eq:piupdate}), (\ref{eq:aprob}) that the dynamics of public
belief $\pi$ depend on the local decision costs $\ca$). \qed

The goal of the global decision maker is to determine the change time $\tau^0$   with minimal cost, that is, compute the optimal global
decision policy $\mu^* \in \Mu$ to minimize (\ref{eq:csdef}),
where $$J_{\mu^*}(\pizero) = \inf_{\mu \in \Mu} J_\mu(\pizero). $$ The existence of an optimal stationary policy $\mu^*$ follows from \cite[Prop.1.3, Chapter 3]{Ber00b}.

\subsection{Summary} \label{sec:summary}
In summary, the social learning based quickest detection problem with PH-distributed change time is specified by the model 
\beq \label{eq:model}
(P, B, c, C, \rho, \X,\Y,\A,\mathbf{u} ) \eeq
where $P$ is the transition probability matrix (\ref{eq:phmatrix}), $B$ is the private observation matrix (\ref{eq:B}), 
$c$  are the  local decision costs  (\ref{eq:ca}), $C$  defined in (\ref{eq:costdef})  is  the transformed global decision cost vector for quickest detection (in terms of false alarm $\f$ (\ref{eq:cp1})
and delay penalty $d$ (\ref{eq:cp2})), and $\rho \in [0,1]$ is the discount factor (\ref{eq:csdef}).
Also
 $\X$ is the state space,   $\Y$ is the private observation space,    $\A$ is the local decision space and  $\mathbf{u} = \{1 \text{ (stop) }, 2 \text{ (continue) } \}$ is the global
decision space.

\section{Stochastic Dynamic Programming Formulation and Dominance of Classical Quickest Detection}\label{sec:dpf}
Sec.\ref{sec:dp} formulates
the optimal decision policy for social learning based quickest detection as the solution of a stochastic dynamic programming problem.
 Sec.\ref{sec:nontrivial}  describes
why social learning based quickest detection is a non-trivial extension of the standard quickest detection problem.
 Finally, Sec.\ref{sec:black} presents our first structural result -- it uses Blackwell dominance of measures to show that optimal cost incurred in quickest time detection
with social learning is always larger than that with classical quickest detection.

\subsection{Stochastic Dynamic Programming Formulation} \label{sec:dp}
Given the stopping time problem (\ref{eq:csdef}), it is well known \cite{Lov87} that the optimal policy $\mu^*(\pi)$ can be expressed
as the solution of a stochastic dynamic programming problem in terms of the belief state $\pi$. Our characterization of the structure of the
optimal policy $\mu^*(\pi)$ will be based
on analyzing the structure of this dynamic programming problem.

The optimal stationary policy $\mu^*: \I \rightarrow \u$ and associated value function
 $\Vb(\pi)$ of the stopping time problem (\ref{eq:csdef}) 
are the solution of 
 ``Bellman's dynamic programming  equation'' 
\begin{align} \label{eq:dp_initial}
\mu^*(\pi)&= \arg\min\{ \Cb(\pi,1), \;\Cb(\pi,2)
+ \discount \sum_{a \in \A}  \Vb\left( \T(\pi ,a) \right) \sigp(\pi,a)\} , \quad J_{\mu^*}(\pi_0) = \Vb(\pi_0) \\
 \Vb(\pi) &= \min \{ \Cb(\pi,1),\; \Cb(\pi,2)
+ \discount \sum_{a \in \A}  \Vb\left( \T(\pi,a) \right) \sigp(\pi,a)\}.
  \nonumber
\end{align}
Here the global decision maker's costs $\Cb(\pi,u)$ are defined in (\ref{eq:cp1}), (\ref{eq:cp2}), $\T$ is the public belief Bayesian update
(\ref{eq:piupdate}), and the measure $\sigma(\pi,a)$ is defined in (\ref{eq:piupdate}).

For our subsequent analysis, it is convenient  to rewrite Bellman's equation as follows.
Define  the transformed value function and global decision costs $V(\pi)$, $C(\pi,1)$ and $C(\pi,2)$ as follows:
\begin{align}
V(\pi) &= \Vb(\pi) -   \f^\p \pi,
\quad C(\pi,1) =  0, \quad
C(\pi,2) = \Cb(\pi,2) -  \f^\p \pi + \rho \f^\p P^\p \pi = C^\p \pi  \label{eq:costdef} \\
\text{ where }  C &\ole d e_1 - (I - \rho P) \f  \text{   with elements denoted as $C_j$, $j=1,\ldots,X$}. \nonumber
 \end{align}
Then clearly $V(\pi)$ satisfies Bellman's dynamic programming  equation 
\begin{align} \label{eq:dp_alg}
\mu^*(\pi)&= \arg\min_{u \in \U} Q(\pi,u) , \;J_{\mu^*}(\pi_0) =V(\pi_0) , \quad
V(\pi) = \min_{u \in \{1,2\}} Q(\pi,u), \\
 \text{ where } & Q(\pi,2) =  C(\pi,2)
+ \discount \sum_{a \in \A}  V\left( \T(\pi ,a) \right) \sigp(\pi,a),\quad
Q(\pi,1) =   C(\pi,1) = 0
   \nonumber
\end{align} 
The above transformation\footnote{This transformation is used in \cite[pp.389]{HS84} to deal with stopping time problems.  As a result of this transformation, the initial condition of the value iteration algorithm is modified,
see (\ref{eq:vi}).} is convenient since the
transformed stopping cost $C(\pi,1) =0$ and  $C(\pi,2) = C^\p \pi$ in (\ref{eq:costdef}) captures all the costs involved in quickest detection. Of course,
the optimal policy $\mu^*(\pi)$ and hence stopping set $\Stop$ remain unchanged with this coordinate transformation.
The goal for the global decision-maker is to determine the optimal stopping set denoted $\Stop$. That is, $\Stop$
is  the set of public belief states $\pi$ for
which it is optimal to declare a change and stop:
\begin{align} 
\Stop &=  \{\pi \in \I : \mu^*(\pi) = 1\} = \{\pi \in \I:  C(\pi,1) \leq C(\pi,2)
+ \discount \sum_{a \in \A}  V\left( \T(\pi ,a) \right) \sigp(\pi,a) \}  \nonumber \\
&=
\{\pi \in \I:  \Cb(\pi,1) \leq \Cb(\pi,2)
+ \discount \sum_{a \in \A}  \Vb\left( \T(\pi ,a) \right) \sigp(\pi,a) \}.
 \label{eq:stopset}
 \end{align}

\subsubsection*{Value Iteration Algorithm}

Let $k=1,2,\ldots,$ denote iteration number (the fact that we used
$k$ previously to denote time  should not result in confusion).
The   value iteration
algorithm is  a fixed point iteration of 
Bellman's equation (\ref{eq:dp_alg}) and proceeds as follows: $V_0(\pi) = -\Cb(\pi,1)$ and 
\begin{align}\nonumber
V_{k+1}(\pi) &= \min_{u \in \u} Q_{k+1}(\pi,u), \quad
\mu^*_{k+1}(\pi)= \argmin_{u \in \u} Q_{k+1}(\pi,u) \quad
\pi \in \I,\\
\text{ where } Q_{k+1}(\pi,2) &=  C(\pi,2) 
+ \discount \sum_{a \in \A}  V_k\left( T(\pi,a) \right) \sigma(\pi,a),
\;
Q_{k+1}(\pi,1) = C(\pi,1) = 0.  \label{eq:vi}
\end{align}
Let $\mathcal{B}(X)$ denote the set of bounded real-valued functions on $\I$.
Since $C(\pi,1)$, $C(\pi,2)$, $\pi \in \I$, are  bounded, 
the value iteration algorithm (\ref{eq:vi}) will generate a sequence of lower semi-continuous value functions
$\{V_k\} \subset \mathcal{B}(X)$ that will converge pointwise
as $k\rightarrow \infty$ to $V(\pi) \in \mathcal{B}(X)$, the solution of Bellman's equation,
see \cite[Prop.1.3, Chap 3, Vol.2]{Ber00b}

Since the belief state space $\I$ in (\ref{eq:Pi}) is a unit simplex,  the value iteration algorithm  (\ref{eq:vi})
does not 
yield a practical solution methodology for computing stopping set $\Stop$  since 
$V_k(\pi) $ needs to be evaluated on the continuum $\pi \in \I$.
Although Bellman's equation and the value iteration algorithm is not useful from a computational point of view, in subsequent
sections, we exploit its
structure 
 to characterize  the stopping set $\Stop$ in (\ref{eq:stopset}).
We then exploit this structure to devise stochastic gradient  algorithms
for approximating the optimal  policy $\mu^*$ and thus determining the stopping set $\Stop$.

\subsection{Why Social Learning based Quickest Detection  is non-trivial}\label{sec:nontrivial}
Let us illustrate why social learning based quickest detection results in a non-trivial behavior.
We will show in Sec.\ref{sec:threshold} that the belief space $\I$ can be decomposed into
$Y+1$ polytopes denoted $\mathcal{P}_1,\ldots,\mathcal{P}_{Y+1}$ such that on each of these polytopes $\mathcal{P}_l$,
the belief state update $\T(\pi,a) = T^l(\pi,a)$.
Consider the value iteration algorithm (\ref{eq:vi}) which is used as a basis for mathematical induction to prove properties associated with
Bellman's equation (\ref{eq:dp_alg}).  It can be expressed as\footnote{Note that from (\ref{eq:vi}),
$V_k(\pi)$ is positively homogeneous, that is, for any $\alpha > 0$, $V_k(\alpha \pi) = \alpha V_k(\pi)$.
So choosing $\alpha = \sigma(\pi,a)$ which is the denominator term of $\T$ in  (\ref{eq:piupdate}) yields the expression in the second equality
of (\ref{eq:vinc}).}
\begin{align} V_{k+1}(\pi) &= \min\{C^\p \pi + \rho \sum_a \sum_{l=1}^{Y+1}V_k(T^l(\pi,a)) \sigma(\pi,a) I(\pi\in  \mathcal{P}_{l}) , 0 \}  \nonumber\\
&= \min\{C^\p \pi + \rho \sum_a \sum_{l=1}^{Y+1}V_k(\Bsl_a P^\p \pi) I(\pi\in  \mathcal{P}_{l}) , 0 \}
\label{eq:vinc} \end{align}
It should be clear from (\ref{eq:vinc}) that  if $V_k(\pi)$ is assumed to be concave on $\I$, $V_{k+1}(\pi)$ is not necessarily
concave on $\I$. In fact, even if $V_k(\pi)$ is assumed to be concave
in just one of the polytopes, say polytope $\mathcal{P}_l$,   then $V_{k+1}(\pi)$ is not necessarily
concave on  $\mathcal{P}_l$, since $T^l(\pi,a)$ in (\ref{eq:vinc}) may map two distinct belief states in polytope $\mathcal{P}_l$ to
two different polytopes. As will be shown in numerical examples,  in general $V(\pi)$ will be discontinuous and non--concave.

Classical quickest detection problems are special instances of  partially observed Markov  decision process (POMDP)
stopping time problems \cite{Kri11}. In POMDPs, the belief state update $\T$ is not an explicit function of belief
state $\pi$ since  the observation probabilities  are not an explicit function of $\pi$.  For such  POMDP stopping
time problems 
the value iteration algorithm reads\footnote{We use the notation $\uV(\pi)$ to denote the value function of the classical stopping problem. This will be defined formally in Sec.\ref{sec:black} where we will show $\uV(\pi) \leq V(\pi)$, i.e., quickest detection
with social learning always incurs a higher optimal cost than classical quickest detection.}
$$\uV_{k+1}(\pi) = \min\{C^\p \pi + \rho \sum_y \sum_{l=1}^{Y+1}\uV_k(B_y P^\p \pi)  , 0 \}$$
and is to be compared with (\ref{eq:vinc}).
Since the composition of a concave function with a linear function preserves concavity, it is easily 
seen that if $\uV_k(\pi)$ is piecewise linear and concave, then so is $\uV_{k+1}(\pi)$.
So by mathematical induction on the value iteration algorithm, and since the sequence $\{\uV_k(\pi)\}$ converges pointwise 
(actually uniformly for POMDPs) to $\uV(\pi)$,  the value function $\uV(\pi)$ is concave and the stopping set $\Stop$ is a convex
(and therefore connected) set \cite{Lov87a}.  The key difference in the above
social learning quickest detection formulation is that the local decision likelihoods $\Bs$  (\ref{eq:aprob})  and therefore social learning
filter $\T$ are explicit and discontinuous functions of 
$\pi$. This results in a possibly
 non-concave value
function $V(\pi)$ making determining $\Stop$ non-trivial.

\subsection{Quickest Time Detection with Social Learning is More Expensive}\label{sec:black}
This section presents our first main result. We prove that
quickest detection with social learning is always more expensive than classical quickest detection.
In social learning, agents have access to local decisions of previous agents instead of the 
actual observations. Thus one would expect intuitively that this information loss results in less
efficient quickest time change detection compared to classical quickest detection.
Here we confirm this intuition.
 The main idea is to use Blackwell dominance of observation measures.

\subsubsection{Notation}
First define the optimal policy and cost in classical quickest time detection.
Similar to (\ref{eq:dp_alg}), the optimal policy $\umu^*(\pi)$ and cost $\uV(\pi)$ incurred in classical quickest detection, satisfies the following Bellman's equation:
\begin{align} \label{eq:dp_algc}
\umu^*(\pi)&= \arg\min_{u \in \U} \uQ(\pi,u) , \; \uJ_{\mu^*}(\pi_0) =  \uV(\pi_0), \;  \uV(\pi) = \min_{u \in \{1,2\}} \uQ(\pi,u),\\
 \text{ where }  \uQ(\pi,2) &=  C(\pi,2)
+ \discount \sum_{y \in \Y}  \uV\left( T(\pi ,y) \right) \sigp(\pi,y),\quad
\uQ(\pi,1) =   C(\pi,1) = 0
   \nonumber
\end{align} 
Recall $T(\pi,y)$ is the Hidden Markov Model Bayesian filter defined in (\ref{eq:privu}). Thus the only difference between the classical
and social learning quickest detection problems is the update of the belief state, namely (\ref{eq:privu}) in the classical setup versus
(\ref{eq:piupdate}) in the social learning formulation.

\subsubsection{Main Result}
The following theorem says that if the initial belief state is chosen from any of the polytopes $\mathcal{P}_{\ys}, \ldots
 \mathcal{P}_{Y+1}$,  the optimal detection policy with social learning incurs a higher cost than classical quickest detection.

\begin{theorem}\label{thm:blackwell} Consider the social learning quickest time detection problem $(P, B, c, C, \rho)$ in (\ref{eq:model}) and associated value function $V(\pi)$ in (\ref{eq:dp_alg}). Consider also the classical quickest detection
problem with value function $\uV(\pi)$ in (\ref{eq:dp_algc}).
 Then for any initial
 belief state $\pi \in \I$, the optimal cost incurred by classical quickest detection
is smaller than that of quickest detection with social learning. That is, $\uV(\pi) \leq V(\pi)$.
\end{theorem}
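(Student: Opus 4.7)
The plan is to leverage the fact that each local decision $a_k$ in the social learning protocol is a deterministic function of the observation history $y_1,\dots,y_k$, so that the sigma-algebra $\G_k$ to which a social-learning stopping time is adapted is a sub-sigma-algebra of $\mathcal{F}^y_k := \sigma(y_1,\dots,y_k)$. Under a coupling that shares the same $(x_k,y_k)$ trajectory between the two problems, this information ordering is the filtration-level manifestation of Blackwell dominance, and it will directly force $\uV\le V$.

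First I would rewrite the cost (\ref{eq:csdef}) purely as a functional of the underlying trajectory $\{(x_k,u_k)\}$. Since $\Cb(\pi,1)=\f^\p\pi$ and $\Cb(\pi,2)=d\,e_1^\p\pi$ are linear in $\pi$, since $u_k$ and the event $\{\tau>k\}$ are $\G_k$-measurable, and since $\pi_k(i)=\P(x_k=i\mid\G_k)$, the tower property gives
\[
J_\mu(\pi_0)=\Ep\!\left[\sum_{k=1}^{\tau-1}\rho^{k-1}d\,I(x_k=1)+\rho^{\tau-1}f_{x_\tau}\right],
\]
and the identical rewriting holds for the classical cost after replacing $\G_k$ by $\mathcal{F}^y_k$. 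Hence in both problems the cost is one and the same functional of $\{(x_k,u_k)\}$ and of $\tau$; the two optimization problems differ only in the filtration to which $\tau$ must be adapted.

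Next I would establish the inclusion $\G_k\subseteq\mathcal{F}^y_k$ by induction on $k$. From (\ref{eq:step2}), $a_k=\arg\min_{a}c_a^\p T(\pi_{k-1},y_k)$ is a deterministic function of $(\pi_{k-1},y_k)$; from (\ref{eq:piupdate}), $\pi_{k-1}$ is a deterministic function of $(a_1,\dots,a_{k-1})$; and the inductive hypothesis makes the latter a deterministic function of $(y_1,\dots,y_{k-1})$. Therefore $a_k$ is $\mathcal{F}^y_k$-measurable, closing the induction. This is precisely the garbling $y\mapsto a$ via the matrix $M^\pi$ of Lemma \ref{lem:disc}: the social-learning observation is Blackwell dominated by the classical one and carries no more information. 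Consequently every $\G$-stopping time is also an $\mathcal{F}^y$-stopping time, so $\uV(\pi_0)$ is an infimum of the common cost functional over a feasible set that contains the one defining $V(\pi_0)$, yielding $\uV(\pi_0)\le V(\pi_0)$.

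The main obstacle, which I would deliberately sidestep, is the alternative route via induction on the value iteration (\ref{eq:vi}). That route requires applying Jensen's inequality (or Blackwell's theorem on convex functions of posteriors) to $V_k$; but as Sec.\ref{sec:nontrivial} emphasizes, $V_k$ is generally non-concave in the social learning setting, so Jensen fires in the wrong direction and the inductive step breaks down. Rewriting the cost as an expectation over the underlying state-observation trajectory circumvents this non-concavity completely and exposes the inequality as a pure information-ordering statement, which is exactly the Blackwell-dominance content claimed in the theorem.
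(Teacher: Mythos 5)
Your proof is correct, but it takes a genuinely different route from the paper's. The paper argues by induction on the value iteration algorithm: it first shows that the \emph{classical} value functions $\uV_k$ are concave, then uses the garbling identity $\T(\pi,a) = \sum_y T(\pi,y)\,\frac{\sigma(\pi,y)}{\sigma(\pi,a)}\,P(a|y,\pi)$ together with Jensen's inequality applied to $\uV_k$ to establish $\sum_a \uV_k(\T(\pi,a))\sigma(\pi,a) \geq \sum_y \uV_k(T(\pi,y))\sigma(\pi,y)$, and finally propagates $V_k \geq \uV_k$ through the induction. Note that your stated "obstacle" to this route is a strawman: the non-concavity of $V_k$ never enters, because Jensen is only ever applied to the classical value function, which \emph{is} concave; the social-learning value function appears only on the larger side of the inequality. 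Your alternative — rewriting both costs as the identical trajectory functional $\E\bigl[\sum_{k=1}^{\tau-1}\rho^{k-1}d\,I(x_k=1)+\rho^{\tau-1}f_{x_\tau}\bigr]$ via the tower property (legitimate because the costs are linear in $\pi$ and $\{\tau>k\}$, $\{\tau=k\}$ are measurable w.r.t.\ the relevant filtration), proving $\G_k\subseteq\sigma(y_1,\ldots,y_k)$ by induction, and concluding by inclusion of the feasible sets of stopping times — is more elementary and makes the information-ordering content transparent without any concavity or Jensen argument. What it buys is simplicity; what it gives up is generality: your filtration-inclusion step relies on the garbling $M^\pi$ being deterministic ($\{0,1\}$-valued, since $a_k$ is an argmin), whereas the paper's Jensen argument handles arbitrary stochastic garbling kernels $\Bs = BM^\pi$ and so extends directly to randomized local decisions (and is reused verbatim for the sensor-management variant in Sec.~VI). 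A fully rigorous version of your argument should also cite the standard POMDP fact that $\uV(\pi_0)$ defined by the Bellman equation equals the infimum of the trajectory cost over \emph{all} $\sigma(y_1,\ldots,y_k)$-adapted stopping times, not just those generated by stationary belief-feedback policies; with that reference in place the argument is complete.
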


 Since the theorem holds for the case $A=Y=2$ (equal number of local decision choices and observation symbols),
 a naive explanation that information is lost due to using fewer symbols in $\A$ compared to $\Y$ is not true.
 
  The proof of Theorem \ref{thm:blackwell}  is given  in Appendix \ref{sec:pblackwell}.  Recall from (\ref{eq:aprob})  that $\Bs = B M^\pi$ where 
 $B$ and $M^\pi$ are stochastic matrices.  Thus 
observation $y$ with conditional distribution specified by $B$ is 
said to be more  informative than (Blackwell dominates) observation $a$ with conditional distribution $\Bs$,
see \cite{Rie91}. The main idea in the proof is that under the assumptions of Theorem~\ref{thm:blackwell},
the value function $\uV(\pi)$ is concave for $\pi \in \I$. Then the result is established
 using Jensen's inequality together
with Blackwell dominance  on the  Bellman's equation. value iteration algorithm proves the result.
 
The first instance of a similar proof using Blackwell dominance for POMDPs
 was given in \cite{WD80}, see also \cite{Rie91}, where it was used to show optimality of certain myopic policies.
Our use of Blackwell dominance in Theorem \ref{thm:blackwell} is somewhat different since we are using it 
to compare the value functions of two different 
dynamic programming problems.
A useful consequence of  Theorem \ref{thm:blackwell} is that performance analysis of standard quickest detection problems \cite{TV05} readily
applies  to form a lower bound for the cost incurred in social learning based quickest detection.

\section{Assumptions and Quickest Detection with Small Change Probabilities} \label{sec:threshold}
 This section comprises of two parts.\\
(i)  Sec.\ref{sec:assumptions} lists the main  assumptions (A1), (A2), (S) which result in a natural partition   of
belief space  $\I$ into $Y+1$ convex polytopes with decision likelihoods  $\Bs$ (defined in (\ref{eq:aprob})) being a constant (with respect to $\pi)$ on
each polytope (Theorem \ref{lem:polytopes1}).  These polytopes play an important role in specifying the global quickest detection policy in the rest of the paper.\\
(ii) Sec.\ref{sec:convex} considers quickest time change detection with geometric distributed change time and gives explicit
conditions for the optimal policy to have a  double threshold. In particular, Theorem~\ref{thm:ep0} and Corollary \ref{cor:qdslow}  show that the optimal quickest-time detection policy  for change probability $\epsilon$,
yields a cost that is within $O(\epsilon)$ of the optimal cost for sequential detection of a constant state.

\subsection{Polytope Structure and Main Assumptions}  \label{sec:assumptions}


Since the public belief state $\pi \in  \I$ is continuum (see (\ref{eq:Pi})), as a first step in characterizing the optimal 
policy $\mu^*(\pi)$, we need to understand the structure of the decision likelihood probabilities $\Bs$ defined in (\ref{eq:aprob}).
Even though the belief state $\pi \in \I$ is continuum, it turns out that there are only $2^Y-1$
  possible local decision likelihood probability matrices $\Bs$.
Let $\mathcal{Q}_l$, $l=1,\ldots,2^Y-1$  denote the elements of the power set of $\Y$ (excluding, of course, the empty set).
Define the following  $2^Y-1$ convex polytopes $\mathcal{\Pb}_l$, $l=1,2,\dots, 2^Y-1$:
\beq \mathcal{\Pb}_l = \left\{ \pi \in \I :  \begin{cases} (c_1-c_2) ^\p B_y P^\p \pi < 0 &  y \in \mathcal{Q}_l \\
                                        (c_1 -c_2)^\p B_y P^\p \pi \geq 0 & y \in \Y - \mathcal{Q}_l  \end{cases} \right\} \label{eq:expopoly}\eeq
                                        Recall the local cost vectors $c_a$ are defined in (\ref{eq:ca}).
Then  from (\ref{eq:aprob}) it follows that $M^\pi$ and hence $\Bs$ is a constant on each polytope $\mathcal{Q}_l$. 
Specifically,
for rows $ y \in \mathcal{Q}_l$,
$M^\pi_{y1} = 1$  and for rows  $  y \in 
\Y - \mathcal{Q}_l$, $ M^\pi_{y2} = 1$.

 Although in general there are $2^{Y}-1$ possible 
 $\Bs$ matrices,
we now show that by introducing assumptions (A1), (A2) and (S)  below,  there are only
 $Y+1$ distinct local decision likelihood matrices $\Bs$. This forms an important preliminary step for characterizing the optimal global
 decision policy.

Recalling the notation in Sec.\ref{sec:prot1},  we list the following  assumptions.
\begin{itemize}

\item[(A1)] The  observation distribution
$B_{xy} = p(y|x)$ is TP2 (see Definition \ref{def:tp2} in Appendix \ref{sec:mlrdef}), i.e., all second order minors of matrix $B$ are non-negative.

\item[(A2)] The transition probability matrix $P$ is TP2. (All second order minors of $P$ are non-negative).

\item[(A3)]  The elements of vector $C$ in (\ref{eq:costdef}) are decreasing.
A sufficient condition is that for $j \geq i$ and $i \geq 2$ the false alarm vector $\f$ and delay penalty $d$ satisfy
$f_i \geq \max\{1, \rho \f^\p P^\p e_i - d\}$  and
$f_j - f_i \geq \rho \f^\p P^\p(e_j-e_i) $.

\item[(S)] The local decision cost vector $\ca$ in (\ref{eq:ca}) is submodular. That is,
the elements $c(i,a)$  satisfy
 $c(1,2) > c(1,1)$ and $c(2,2)< c(2,1)$. (Recall from Sec.\ref{sec:prot1} that $c(2,a) = c(3,a) = \cdots =c(X,a)$ in quickest detection
 problems with PH-distributed change time).

 \end{itemize}

{\em  Discussion of Assumptions}:\\
{\em Assumption (A1)}:  The requirement that $P(y|x)$ is TP2 with respect to states $\{1,2\}$ and $y \in \Y$
holds for numerous examples,
see  Karlin's classic book \cite{Kar68} and also \cite{KR80}.   Examples include quantized Gaussians, quantized exponential distributions,
 Binomial, Poisson, etc.
For example consider quantized Gaussians.
Suppose   $ B_{iy}=P(y|x=i) = \frac{\bar{b}_{iy}}{\sum_{y=1}^Y \bar{b}_{iy}}$  where 
$ \bar{b}_{iy} = \frac{1}{\sqrt{2 \pi \Sigma}}
\exp \biggl( - \frac{1}{2} \frac{(y - g_i)^2}{ \Sigma} \biggr) $, $\Sigma > 0$, 
and  $g_1 < g_2$. Then 
 (A1) holds.

{\em Assumption (A2)} always holds trivially for $X=2$. For $X>2$, see \cite{Gan60,Kij97} for numerous examples.
Consider the  tridiagonal transition probability matrix $P$ with
 $p_{ij} = 0$ for $ j\geq i+2$ and $j \leq i-2$. As shown in
 \cite[pp.99--100]{Gan60}, a necessary and sufficient condition for
 tridiagonal $P$ to be TP2 is that $p_{i,i} p_{i+1,i+1} \geq p_{i,i+1}
 p_{i+1,i} $. 
Such a diagonally dominant tridiagonal matrix 
satisfies 
Assumption (A2).

 Assumption (A3) is   a sufficient
condition for $C(\pi,2)$ to be decreasing in $\pi$ with respect to the monotone likelihood ratio order. We will use (A3) in Sec.\ref{sec:ph} to
obtain sufficient conditions for a threshold policy.
Assumption (A3) always holds for the 
geometric distributed change times  ($X=2$). For PH-distributed change times ($X > 2$), Assumption (A3) can be viewed as design constraints the decision maker needs
to take into account so that  quickest detection with PH-distributed change times has a threshold policy
\cite{Kri11}.
Feasible values for the elements of $\f$ are straightforwardly obtained using a LP solver such as {\tt linprog} in Matlab.

{\em Assumption (S)} is only required for the problem to be non-trivial.  If (S) does not hold and $c(i,1) < c(i,2)$ for $i=1,2$, then local decision $a=1$  will always
dominate  decision $a=2$ and the problem reduces
to a standard quickest detection problem where the observed local decision $a=1$ yields no information about the state.
Assumption  (S) implies $c(x,2) -c(x,1) $ is decreasing in $x \in \{1,2\}$, i.e., 
the local cost $c(x,a)$ is submodular
 which implies the zero crossing condition that is  important
in the proof of  Theorem~\ref{lem:polytopes1}.

The following theorem is an abbreviated version of Theorem \ref{lem:polytopes1} presented in Appendix \ref{sec:polytopes1}. It will be used in the rest of the paper as a natural partition of the belief state space
$\I$.
Recall that transition probability $P$, observation probability matrix $B_y$ and  local cost vector $\ca$  are  defined in
(\ref{eq:phmatrix}), 
 (\ref{eq:privu}), (\ref{eq:ca}) respectively.

\begin{theorem} \label{lem:polytopes1}
Under (A1),  (A2), (S), 
the belief state space $\I$ defined in (\ref{eq:Pi}) can be partitioned into at most $Y+1$ non-empty polytopes denoted $\mathcal{P}_1,\ldots,\mathcal{P}_{Y+1}$ where
\begin{align} \label{eq:reduced}
\mathcal{P}_1 &= \{\pi \in \I: (c_1 - c_2)^\p B_1 P^\p \pi \geq 0 \}  \\
\mathcal{P}_l &= \{\pi \in \I: (c_1 - c_2)^\p B_{l-1} P^\p \pi < 0\; \cap\;
(c_1 - c_2)^\p B_{l} P^\p \pi \geq 0 \},\; l = 2,\ldots,Y  \nonumber \\
\mathcal{P}_{Y+1} & = \{\pi \in \I: (c_1 - c_2)^\p B_Y P^\p \pi < 0 \} \nonumber \end{align}
On each such polytope, the local decision likelihood matrix $\Bs$ defined in (\ref{eq:aprob})  is a constant with respect to
belief state $\pi$. \qed
\end{theorem}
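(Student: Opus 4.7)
The plan is to show that of the $2^Y-1$ candidate polytopes $\bar{\mathcal{P}}_l$ in (\ref{eq:expopoly}), all but at most $Y+1$ are empty, and that the surviving ones are precisely the ``initial segment'' polytopes described in (\ref{eq:reduced}). The key tool is the classical variation-diminishing property of TP2 matrices \cite{Kar68}: if $M$ is TP2 and a vector $h$ has at most one sign change in the order $-$ to $+$, then $M^\p h$ has at most one sign change, in the same order.

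Fix $\pi \in \I$ and consider the function $y \mapsto \sigma_y(\pi) \ole (c_1-c_2)^\p B_y P^\p \pi$ on $\Y = \{1,\ldots,Y\}$. Writing $\phi(x) \ole c(x,1)-c(x,2)$ and $\tilde{\pi} \ole P^\p \pi \in \I$, a direct computation gives
\beq
\sigma_y(\pi) \;=\; \sum_{x \in \X} \phi(x)\, \tilde{\pi}(x)\, B_{xy} \;=\; \bigl[B^\p (\phi \odot \tilde{\pi})\bigr]_y .
\eeq
Assumption (S), combined with $c(2,a)=\cdots=c(X,a)$ from Sec.\ref{sec:prot1}, yields $\phi(1)<0$ and $\phi(x)>0$ for $x \geq 2$, so $\phi$ has exactly one sign change from $-$ to $+$ as $x$ increases. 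Componentwise multiplication by the non-negative vector $\tilde{\pi}$ can only introduce zeros and therefore cannot add sign changes, so $\phi \odot \tilde{\pi}$ still has at most one sign change in the same direction. Since $B$ is TP2 by (A1), so is $B^\p$; invoking variation-diminishing on $\phi \odot \tilde{\pi}$ gives that the sequence $(\sigma_1(\pi),\ldots,\sigma_Y(\pi))$ has at most one sign change, and when a sign change is present it runs from $-$ to $+$.

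It follows that for every $\pi$ there is a threshold $l^*(\pi) \in \{1,\ldots,Y+1\}$ with $\sigma_y(\pi)<0$ for $y<l^*(\pi)$ and $\sigma_y(\pi) \geq 0$ for $y \geq l^*(\pi)$, where $l^*(\pi)=1$ corresponds to $\sigma \geq 0$ everywhere and $l^*(\pi)=Y+1$ to $\sigma<0$ everywhere. By (\ref{eq:aprob}) and Lemma~\ref{lem:disc}, $M^\pi$ and hence $\Bs$ is determined solely by the sign pattern of $\sigma_\cdot(\pi)$, and is therefore constant on each of the $Y+1$ level sets $\{\pi : l^*(\pi)=l\}$. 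Matching these level sets to (\ref{eq:reduced}) is immediate: $\mathcal{P}_1 = \{l^*=1\}$, $\mathcal{P}_{Y+1} = \{l^*=Y+1\}$, and for $2 \leq l \leq Y$ the conditions $\sigma_{l-1}<0$ and $\sigma_l \geq 0$ together with the monotone sign structure pin down $l^*=l$; thus at most $Y+1$ of the original $\bar{\mathcal{P}}_\cdot$ can be non-empty.

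The main obstacle I expect is the bookkeeping around the boundary case $\sigma_y(\pi)=0$. Such $\pi$ lie on the lower-dimensional interface between adjacent polytopes, and one must verify that the asymmetric convention in (\ref{eq:reduced}) (strict $<$ on one side, weak $\geq$ on the other) assigns each boundary $\pi$ unambiguously to exactly one $\mathcal{P}_l$, so that the $Y+1$ sets form a genuine partition rather than merely an almost-everywhere decomposition. Once this technicality is resolved, the remainder is algebraic: matching each initial-segment subset of $\Y$ with a unique polytope in (\ref{eq:reduced}).
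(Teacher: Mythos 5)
Your proof is correct and reaches the same structural conclusion as the paper, but by a genuinely different mechanism. The paper (Appendix C, part (i) of the detailed theorem) invokes Lovejoy's lemma: under (A1), (A2) the normalized updates $B_y P^\p \pi / \ones^\p B_y P^\p \pi$ are MLR increasing in $y$, hence first-order stochastically increasing; since (S) makes $c_1-c_2$ a vector with nondecreasing components, the normalized quantity $(c_1-c_2)^\p B_y P^\p \pi/\ones^\p B_y P^\p \pi$ is nondecreasing in $y$, and multiplying back by the nonnegative normalization gives exactly the single-crossing property (\ref{eq:scmean}), i.e. $(c_1-c_2)^\p B_y P^\p\pi\geq 0 \Rightarrow (c_1-c_2)^\p B_{y+1}P^\p\pi\geq 0$, from which the $Y+1$ initial-segment polytopes and the constancy of $M^\pi$ (hence of $\Bs = BM^\pi$) follow as in your last two paragraphs. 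You instead apply the variation-diminishing property of TP2 kernels to $B^\p(\phi\odot P^\p\pi)$; the two arguments are equivalent faces of total positivity, and yours is slightly more economical in that it makes explicit that (A2) plays no role for this particular result (the comparison is between $B_y q$ and $B_{y+1}q$ for the same $q=P^\p\pi$, so only TP2 of $B$ enters -- the same is in fact true of the paper's route). One caveat: the bare variation-diminishing conclusion ``at most one sign change after deleting zeros'' is strictly weaker than the threshold form you then assert -- a pattern such as $(-,0,-,+)$ has one sign change in that counting yet violates the implication $\sigma_y\geq 0\Rightarrow\sigma_{y+1}\geq 0$ -- so the boundary bookkeeping you flag at the end is not purely about assigning interface points to polytopes but about ruling out such patterns. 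The clean fix is to prove the implication directly, either via the paper's monotone normalized expectation or via the elementary TP2 inequality $B_{x,y+1}B_{1,y}\geq B_{x,y}B_{1,y+1}$ applied to $\sum_{x\geq 2}B_{xy}\phi(x)\tilde\pi(x)\geq B_{1y}|\phi(1)|\tilde\pi(1)$; with that one line added, your argument is complete. Note also that the paper's MLR route simultaneously yields part (ii) of the detailed theorem ($a^*(\pi,y)$ MLR increasing in $\pi$), which is reused later, so the paper extracts more from the same lemma than the variation-diminishing argument does.
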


As a consequence of Theorem \ref{lem:polytopes1} and (\ref{eq:aprob}), there are only $Y+1$ possible
decision likelihood matrices $\Bs$, one per  polytope $ \mathcal{P}_l$,  $l=1,\ldots, Y+1$. We will
denote these decision likelihood matrices as
\beq \Bsl = \Bs = B M^l = B M^\pi ,  \pi \in \mathcal{P}_l, l=1,\ldots, Y+1 .
\label{eq:bsl}
\eeq

 {\em Example}:
To give some insight into the structure of  decision likelihood matrix $\Bs$, suppose  $X=2$ (state space), $Y=3$ (observation space), $A=2$ (local decision space). Then assuming (A1), (A2), (S), by Theorem~\ref{lem:polytopes1} there are up to $Y+1=4$ convex polytopes. The matrices
$M^l$ defined in (\ref{eq:aprob}), (\ref{eq:bsl})  are
\beq M^4 = \begin{bmatrix}  
1 & 0 \\ 1& 0 \\ 1 & 0
\end{bmatrix},\;
 M^3 = \begin{bmatrix}  
1 & 0 \\ 1& 0 \\ 0 & 1
\end{bmatrix},\;
 M^2 = \begin{bmatrix}  
1 & 0 \\ 0& 1 \\ 0 & 1
\end{bmatrix},\;
 M^1 = \begin{bmatrix}  
0 & 1 \\ 0& 1 \\ 0 & 1
\end{bmatrix} . \label{eq:Mexample}\eeq
 Then from (\ref{eq:bsl}) the 4 possible  decision likelihood matrices $\Bsl$ are
 \beq \label{eq:ex31}
\Bsi^1 = \begin{bmatrix} 0 & 1 \\ 0 & 1 \end{bmatrix},\;
\Bsi^2 = \begin{bmatrix} B_{11}  & B_{12} + B_{13} \\ 
					B_{21} & B_{22}+B_{23} \end{bmatrix},\;
\Bsi^3 = \begin{bmatrix}  B_{11} + B_{12} & B_{13} \\ 
					   B_{21} + B_{22} & B_{23} \end{bmatrix}, \;
\Bsi^4 = \begin{bmatrix} 1 & 0 \\ 1 & 0 \end{bmatrix}	.				   
					\eeq
The detailed version of Theorem \ref{lem:polytopes1} in Appendix \ref{sec:polytopes1} guarantees that each of these matrices is TP2. 
Fig.\ref{fig:expoly} illustrates  these  polytopes and hyperplanes $\eta_y$ defined below.

\begin{figure}
\mbox{\subfigure[$X=3, Y=4, A=2$]
{\epsfig{figure=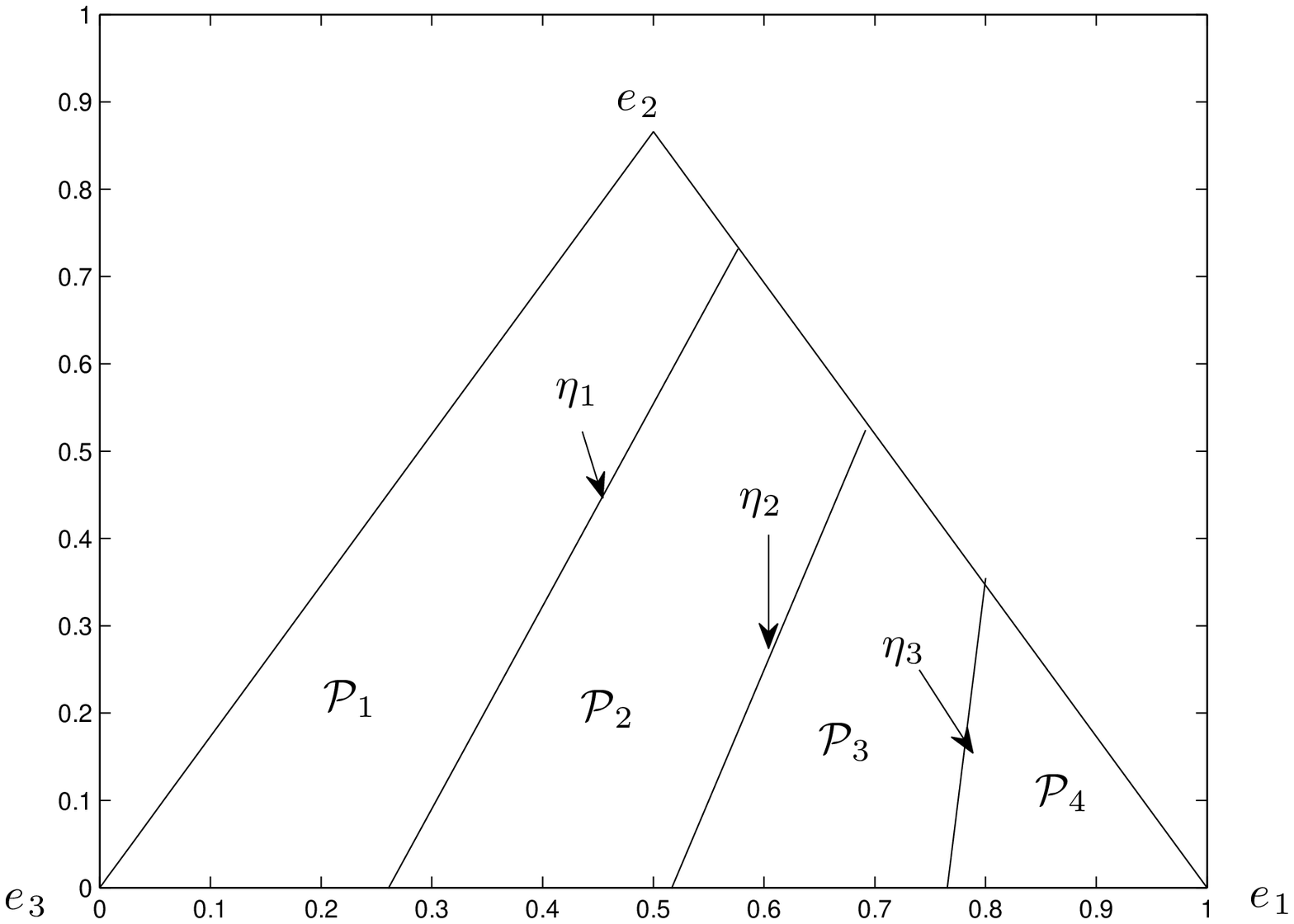,width=0.45\linewidth}} \quad
\subfigure[$X=2, Y=4, A=2$.]
{\epsfig{figure=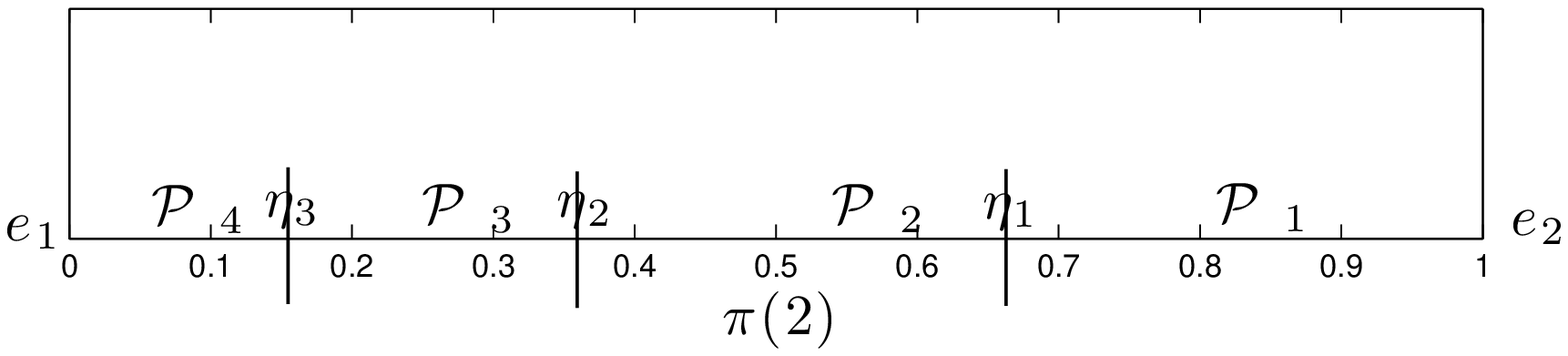,width=0.45\linewidth}}  }
\caption{Illustration of polytopes $\mathcal{P}_1,\mathcal{P}_2,\mathcal{P}_3,\mathcal{P}_4$ defined in (\ref{eq:reduced}) and hyperplanes
 $\eta_1,\eta_2,\eta_3$ defined in (\ref{eq:hy}) for $Y=4, A=2$. Theorem \ref{lem:polytopes1} ensures that the hyperplanes do not intersect within the simplex
$ \I$ and on each polytope, the local decision likelihoods $\Bs$ are a constant.
In the figure, $e_2,e_3 \in \mathcal{P}_1$ -- Assumption (PH)(ii) in Sec.\ref{sec:ph} ensures this.
} \label{fig:expoly}
\end{figure}

Let us give some intuition behind  Theorem \ref{lem:polytopes1}. Define the following $Y$  hyperplanes  that are subsets of $\I$:
\beq \label{eq:hy}
\hy_y = \{\pi \in \I: (c_1 - c_2)^\p B_{y} P^\p \pi =  0\} , \quad  y=1,\ldots,Y.
\eeq
The main intuition of the above theorem  is that (A1), (A2),  (S) imply that $(c_1-c_2)^\p B_y P^\p \pi$ satisfies a single crossing condition \cite{Ami05} with respect to $a,y$, see Definition \ref{def:scc}
in Appendix \ref{sec:mlrdef}.
This means
that the set of belief states satisfy the following subset property:
\beq \{\pi: (c_1-c_2)^\p B_y P^\p \pi \geq 0 \} \subseteq  \{\pi: (c_1-c_2)^\p B_{y+1} P^\p \pi \geq 0 \}.  \label{eq:scmean} \eeq
This  implies that the hyperplanes $\hy_y$, $y\in \Y$,  do not intersect within the simplex $\I$.  It is nice that 
straightforward conditions such as (A1), (A2), (S) ensure this. 
Otherwise dealing with intersecting hyperplanes in a multi-dimensional simplex can be a real headache. Theorem \ref{lem:polytopes1}(iv) 
in Appendix \ref{sec:polytopes1} shows that each hyperplane
$\hy_y$ partitions $\I$ such that vertices $e_1,e_2,\ldots,e_{i_y}$ lie on one
side and $e_{i_y+1},\ldots,e_X$ lie on the other side. In Sec.\ref{sec:ph}, we will introduce Assumption (PH)(ii) which ensures
that $e_2,\ldots,e_X$ always lie in polytope $\mathcal{P}_{1}$ as illustrated in Fig.\ref{fig:expoly}.


\subsection{Multi-Threshold Structure of Social Learning based Quickest  Detection}\label{sec:convex}
The main result (Theorem \ref{thm:ep0} and Corollary \ref{cor:qdslow}) below gives sufficient conditions under which social learning based quickest detection has a double
threshold policy.
Consider the model
$ (P, B, c, C, \rho)$ in (\ref{eq:model})
 with  geometric change time:
\beq \X = \Y= \A = \{1,2\}, \quad P= \begin{bmatrix} 1 & 0 \\ \epsilon & 1- \epsilon 
\end{bmatrix}
\label{eq:qdslow},\eeq 
with  $\f = \f = (0,\; f_2)^\p$ false-alarm vector in (\ref{eq:cp1}) and delay cost (\ref{eq:cp2}).
Here the change probability $\epsilon \ll 1 $ is a small non-negative scalar. 
So the change time $\tau^0$ is geometrically distributed with
$\E\{\tau^0\} = 1/\epsilon$.

The analysis in this subsection proceeds as follows:\\
{\em Step 1}: For $\epsilon = 0$, the problem becomes a simple sequential detection problem for state $1$ -- we  
explicitly 
characterize the multi-threshold behavior of the optimal decision policy in Theorem \ref{thm:ep0} below.\\
{\em Step 2}:
It is then shown that for small $\epsilon$, the optimal value function is within $O(\epsilon)$ of the value function
for the case of zero change probability (Corollary \ref{cor:qdslow}).  So, the optimal policy computed for zero change probability yields 
performance that is close to that of the optimal quickest detection policy for small $\epsilon$.


\subsubsection{Step 1: Sequential Detection of State 1}
In line with above plan,
consider  the sequential  detection problem for state 1 with social learning formulated  in Sec.\ref{sec:prob} with
\beq \X = \Y= \A = \{1,2\}, \quad P=I . \label{eq:special}\eeq 
The state $x$ is a random variable chosen at $k=0$ with distribution $\pi_0$ and remains constant
for $k>0$.
 The goal is to detect and announce state $1$ if $x_0 = 1$ based on noisy observations.
   The global decision $u_k = \mu(\pi_{k}) \in \{1 \text{ (stop) } ,2 \text{ (continue)}\} $ is a function of  the  public belief $\pi_{k}$.
   The optimal policy $\mu^*(\pi)$ that optimizes (\ref{eq:csdef}) satisfies Bellman's equation (\ref{eq:dp_alg}).

 The 2-dimensional  belief state $\pi = [1 - \pi(2),\pi(2)] $ is parametrized by the scalar $\pi(2) \in  [0,1]$, i.e., $\I $ is  the interval $[0,1]$.
Each hyperplane $\hy_y$ (\ref{eq:hy}) now is a point  on the interval $[0,1]$;  let  the 2-dimensional vector
$[1-\hy_y(2),\hy_y(2)]$ denote the belief 
state corresponding to  $\hy_y$. The  polytopes $\mathcal{P}_{1}$,  $\mathcal{P}_{2}$, $\mathcal{P}_{3}$
in Theorem \ref{lem:polytopes1} are now intervals which are subsets of $[0,1]$.
 If (A1) and (S)  hold, then 
 $\mathcal{P}_{3} = [0,\hy_2(2))$,  $\mathcal{P}_2 = [\hy_2(2),\hy_1(2))$, $\mathcal{P}_1 = [\hy_1(2),1]$.

   To handle  the discontinuity in the social learning filter (\ref{eq:piupdate}), 
we start   with the following lemma that characterizes useful structural properties of the social learning filter. First define the belief state
\beq \q = T^{\hy_1}(\hy_1,1). \label{eq:qstate} \eeq
 
 \begin{lemma}\label{lem:fixed}  Consider the social learning filter (\ref{eq:piupdate}) and assume
(A1), (S) hold.  Then: \\
(i)  $\q =T^{\hy_1}(\hy_1,1) = T^{\hy_2}(\hy_2,2)$.\\
(ii) If $B$ is symmetric,  then $\hy_1$ and $\hy_2$ are fixed points of the composite Bayesian map:
\beq \hy_1 = T^q(T^{\hy_1}(\hy_1,1),2), \quad \hy_2 = T^q((T^{\hy_2}(\hy_2,2),1)
\label{eq:symb} \eeq
 \qed
  \end{lemma}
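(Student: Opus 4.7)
The argument rests on two elementary observations: the simplex $\I=\Pi(2)$ is one-dimensional, so the indifference equation $(c_1-c_2)^\p\mu=0$ has a unique solution $\mu^\ast$; and every Bayesian update with observation $y$ acts multiplicatively on the likelihood ratio $L(\pi):=\pi(2)/\pi(1)$ by the scalar factor $B_{2y}/B_{1y}$. For part (i), the hyperplanes $\hy_1,\hy_2$ lie on the common boundary between polytope $\mathcal{P}_2$---where $\Bs^2=B$ and the social learning filter degenerates to the HMM filter---and the adjacent cascade polytopes $\mathcal{P}_1,\mathcal{P}_3$. Read as the one-sided limit from inside $\mathcal{P}_2$, the update $T^{\hy_y}(\hy_y,y)=B_y\hy_y/\sigp(\hy_y,y)$ satisfies $(c_1-c_2)^\p T^{\hy_y}(\hy_y,y)=0$ by the very definition (\ref{eq:hy}) of $\hy_y$, and hence equals the unique indifference belief $\mu^\ast=:\q$; this proves both equalities in (i) simultaneously.

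For part (ii), I introduce $\alpha:=(c(1,2)-c(1,1))/(c(2,1)-c(2,2))$, which is positive under (S). The defining equation for $\hy_y$ gives $L(\hy_y)=\alpha B_{1y}/B_{2y}$, and applying the likelihood-ratio identity to part (i) yields $L(\q)=(B_{2y}/B_{1y})L(\hy_y)=\alpha$. Under (A1), $B_{11}/B_{21}\ge B_{12}/B_{22}$, so $L(\hy_2)\le\alpha\le L(\hy_1)$ with strict inequalities in the non-degenerate case, placing $\q$ strictly inside $\mathcal{P}_2$. Consequently $T^\q$ also reduces to the HMM filter with $\Bs=B$.

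When $B$ is symmetric, the $2\times 2$ row-stochastic structure forces $B_{12}=B_{21}$ and thus $B_{11}=B_{22}$. A single further application of the likelihood-ratio update to $\q$ gives
\[
L\bigl(T^\q(\q,2)\bigr)=\frac{B_{22}}{B_{12}}\alpha=\frac{B_{11}}{B_{21}}\alpha=L(\hy_1),\qquad L\bigl(T^\q(\q,1)\bigr)=\frac{B_{21}}{B_{11}}\alpha=\frac{B_{12}}{B_{22}}\alpha=L(\hy_2).
\]
Since $L$ is a bijection on $\Pi(2)$, this identifies $T^\q(\q,2)=\hy_1$ and $T^\q(\q,1)=\hy_2$, which combined with part (i) yields the fixed-point identities~(\ref{eq:symb}).

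The principal obstacle is conceptual rather than computational: the social learning filter is discontinuous across $\hy_1$ and $\hy_2$, so the expression $T^{\hy_y}(\hy_y,y)$ is ambiguous until a side is chosen. The natural and consistent choice---the limit from within the informative polytope $\mathcal{P}_2$, where $\Bs^2=B$---is what makes the composite map $T^\q(T^{\hy_y}(\hy_y,y),\cdot)$ well-defined, and verifying $\q\in\mathcal{P}_2$ via the TP2 inequality in (A1) is what ensures the second filter application also uses $\Bs=B$. Once these two points are pinned down, the lemma reduces to scalar arithmetic on likelihood ratios.
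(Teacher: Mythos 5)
Your proof is correct and takes essentially the same route as the paper's: work with the one-sided limit from $\mathcal{P}_2$ where $\Bs=B$, verify part (i) by direct computation (the paper labels this ``straightforwardly verified''; your observation that both updates land on the unique indifference point $(c_1-c_2)^\p\mu=0$ is a clean way to do it), use (A1) to place $\q$ in $\mathcal{P}_2$, and use symmetry of $B$ --- equivalently the paper's condition $B_{11}B_{12}=B_{21}B_{22}$ --- for the fixed-point identities. The only compressed step is the claim $L(\hy_2)\leq\alpha\leq L(\hy_1)$, which needs $B_{12}/B_{22}\leq 1\leq B_{11}/B_{21}$ rather than merely $B_{11}/B_{21}\geq B_{12}/B_{22}$; this follows immediately from TP2 together with row-stochasticity of $B$ (e.g.\ $B_{11}B_{22}\geq B_{12}B_{21}$ with $B_{12}=1-B_{11}$, $B_{22}=1-B_{21}$ gives $B_{11}\geq B_{21}$), so the gap is trivially filled.
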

  
      The implication of the  above lemma is that $\I$ can be partitioned into 4 intervals, namely $[e_1,\hy_2)$, $[\hy_2,\q)$, $[q, \hy_1)$ and $[\hy_1,e_2]$. Fig.\ref{fig:pub} illustrates these regions and the dynamics specified in Lemma \ref{lem:fixed}.
   The main result below characterizes the structure of the optimal global decision policy $\mu^*(\pi)$ on these 4 intervals.  The theorem also characterizes information cascades \cite{Cha04} (more colloquially ``herding")  which is a  salient feature of social learning.

\begin{figure}
\epsfig{figure=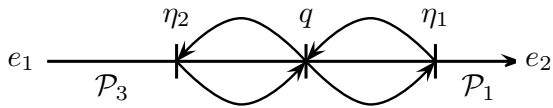,width=0.5\linewidth}
\caption{Structure of social learning filter under the assumptions of  Lemma \ref{lem:fixed} and symmetric $B$. Right (left) arrows represent evolution of the public belief
when $a=2$ ($a=1$). As can be seen, $\eta_1$, $\eta_2$ are fixed points of the composite maps in (\ref{eq:symb}).} \label{fig:pub}
\end{figure}

   \begin{theorem}  \label{thm:ep0} Consider the sequential detection problem with parameters (\ref{eq:special}).  Suppose agents
   make local decisions via social learning.
   Assume (A1), (S) hold. (Note (A2) holds trivially since $P=I$). The optimal global decision policy $\mu^*(\pi)$ has the following properties:\\
   (i) For  $\pi \in \mathcal{P}_1 \cup \mathcal{P}_3$, the global decision policy has a threshold structure:
\beq \mu^*(\pi) = \begin{cases} 2 & \text{ if } \pi(2) > \pi^*(2)
   \\ 1 & \text{ otherwise }  \end{cases}  \quad
   \text{ where } \pi^*(2) = \frac{d}{f_2 (1-\rho) + d} \label{eq:explicit}
  \eeq
   Also for $\pi \in \mathcal{P}_1 \cup \mathcal{P}_3$, the value function (\ref{eq:dp_alg})  
   is $V(\pi) = \min\{0, C(\pi,2)/(1-\rho)\}$ where $C(\pi,2)$ is defined in (\ref{eq:costdef}).\\
   (ii)
       The intervals $\mathcal{P}_1$ and  $\mathcal{P}_{3} $ are ``information cascades'' \cite{Cha04}. That is, if $\pi_k \in \mathcal{P}_1
 \cup \mathcal{P}_{3}$, then $\pi_{k+1} = \pi_k$ and social learning ceases.
   \\
   (iii) If $B$ is symmetric, then for $\pi \in \mathcal{P}_2$, the global decision policy has the following structure:\\
   (a) For  $\pi \in [\hy_2(2),\q(2))$, $V(\pi)$ is concave and there is at most one interval where $\mu^*(\pi) = 1$. \\
 (b) For   $\pi \in [q(2),\hy_1(2))$, $V(\pi)$ is concave and there is at most one interval where $\mu^*(\pi) = 1$.   
\qed
   \end{theorem}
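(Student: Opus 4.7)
Parts (i) and (ii) go together. In each of $\mathcal{P}_1$ and $\mathcal{P}_3$, assumption (S) together with the polytope definitions (\ref{eq:reduced}) forces $M^\pi$ to concentrate all its mass in a single column, so the agent's local decision is independent of its private observation. Combined with $P = I$, this implies that the social learning filter reduces to the identity, $\T(\pi, a) = \pi$, throughout $\mathcal{P}_1 \cup \mathcal{P}_3$; this immediately yields the cascade statement (ii). Bellman's equation (\ref{eq:dp_alg}) on a cascade polytope then collapses to the scalar fixed point $V(\pi) = \min\{0,\, C^\p\pi + \rho V(\pi)\}$, whose solution is $V(\pi) = \min\{0,\, C^\p\pi/(1-\rho)\}$, with $u=2$ optimal iff $C^\p\pi < 0$. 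Substituting $C = d e_1 - (1-\rho)\f$ with $\f = f_2 e_2$ reduces $C^\p\pi < 0$ to $\pi(2) > d/(f_2(1-\rho)+d)$, giving (\ref{eq:explicit}).

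Part (iii) uses a coupled induction on the value iteration (\ref{eq:vi}). On $\mathcal{P}_2$ the filter is $T^2(\pi, a) = B_a \pi/\sigma(\pi, a)$ (since $P = I$ and $M^2 = I$), and is monotone in $\pi(2)$ for each fixed $a$. Lemma \ref{lem:fixed}(i) anchors it at $q$ via $q = T^2(\eta_1, 1) = T^2(\eta_2, 2)$; under symmetry of $B$, Lemma \ref{lem:fixed}(ii) additionally supplies $\eta_1 = T^2(q, 2)$ and $\eta_2 = T^2(q, 1)$. Combined with monotonicity these give the image structure
\begin{align*}
[\eta_2, q) &\xrightarrow{a=1} \mathcal{P}_3, & [\eta_2, q) &\xrightarrow{a=2} [q, \eta_1), \\
[q, \eta_1) &\xrightarrow{a=1} [\eta_2, q), & [q, \eta_1) &\xrightarrow{a=2} \mathcal{P}_1.
\end{align*}
So $[\eta_2, q)$ and $[q, \eta_1)$ form a coupled pair: the ``outward'' images land in cascade polytopes (where by (i) $V$ is concave) and the ``inward'' images land in each other. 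I then induct on $k$: assuming $V_k$ is concave on each of $[\eta_2, q)$, $[q, \eta_1)$, $\mathcal{P}_1$, $\mathcal{P}_3$, I use positive homogeneity of $V_k$ to rewrite $\sigma(\pi,a) V_k(T^2(\pi,a)) = V_k(B_a \pi)$ as in (\ref{eq:vinc}); each such term is a concave function composed with the linear map $\pi \mapsto B_a\pi$, hence concave in $\pi$. Adding the linear term $C^\p\pi$ and taking the pointwise minimum with $0$ preserves concavity, so $V_{k+1}$ is concave on each of the two intervals; the cascade regions retain concavity by direct inspection of the explicit form from (i). Passing to the pointwise limit gives concavity of $V$ on each interval. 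Since $C(\pi,1)=0$ is constant, the stopping set within each interval is the super-level set $\{\pi : Q(\pi,2) \geq 0\}$ of a concave function, hence a single sub-interval.

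The main difficulty is not any single calculation but the bookkeeping across the filter's discontinuities at $\eta_1$ and $\eta_2$. Were the image of $[\eta_2, q)$ or $[q, \eta_1)$ under $T^2$ to straddle one of these discontinuities, the coupled induction would collapse, because $V_k$ is genuinely non-concave across those boundaries. The split at $q$ supplied by Lemma \ref{lem:fixed} is precisely the partition for which all four image arrows above remain inside concavity-preserving pieces, which is also why symmetry of $B$ (needed for Lemma \ref{lem:fixed}(ii)) appears as a hypothesis of (iii).
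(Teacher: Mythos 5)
Your proposal is correct and follows essentially the same route as the paper: cascade polytopes where the filter freezes and Bellman's equation collapses to the scalar fixed point $V(\pi)=\min\{0,C^\p\pi/(1-\rho)\}$ for parts (i)--(ii), and for part (iii) the four-interval partition at $\eta_2, q, \eta_1$ from Lemma \ref{lem:fixed}, with an induction on value iteration showing concavity is preserved on each subinterval because the filter maps each middle interval into a cascade region under one action and into the other middle interval under the other. The only cosmetic differences are that the paper carries ``piecewise linear and concave'' through the induction rather than bare concavity, and your action/polytope labels are permuted relative to the paper's (which is itself internally inconsistent on this point); neither affects the argument.
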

   
\vspace{1cm}
The  implication of Part (iii) of the above theorem is that the stopping set $\Stop$ comprises of at most three intervals.
One of these intervals is $(\pi^*(2),1)$, with the threshold $\pi^*(2)$ defined in (\ref{eq:explicit}).
 The second claim of the theorem follows, since if public belief $\pi \in \mathcal{P}_1$, then the optimal local decision is $a=2$ irrespective of the observation $y$.
Similarly, if $\pi \in \mathcal{P}_3$, then the optimal local decision is $a=1$ irrespective of the observation $y$. Therefore when the public
belief is in  $\mathcal{P}_1
 \cup \mathcal{P}_{3}$, the local decision of an agent reveals no information about its local observation to subsequent agents.

\subsubsection{Step 2: Quickest Time Detection bound for small $\epsilon$}
Given the characterization in Theorem \ref{thm:ep0} of the optimal policy for $\epsilon= 0$, we now consider the quickest
 change detection problem for small $\epsilon$ specified in (\ref{eq:qdslow}). It is convenient
to introduce the following $\epsilon$ dependent notation.

Let $V_{\mu_\epsilon^*}(\pi)$ denote the cost incurred by the optimal policy $\mu_\epsilon^*$ with transition
matrix 
$P^\epsilon = 
\begin{bmatrix} 1 & 0 \\ \epsilon & 1- \epsilon \end{bmatrix} 
$.
We use the notation  $\mathcal{P}^\epsilon_l$ to denote the explicit dependence of the 3 intervals $\mathcal{P}_1$,
$\mathcal{P}_2$, $\mathcal{P}_3$, defined 
 in (\ref{eq:reduced}). For $\epsilon = 0$, we denote these intervals as  $\mathcal{P}^0_l$.
The following result bounds the difference between  $V_{\mu_0^*}(\pi)$ and $V_{\mu_\epsilon^*}(\pi)$.
Note that $\mu_0^*(\pi)$ is characterized in Theorem \ref{thm:ep0} and $P^0 = I$ (identity matrix).

Recall from (\ref{eq:dp_initial})  that $\Vb(\pi)$  is the actual optimal expected cost associated with optimal decision
policy $\mu^*(\pi)$. As mentioned below  (\ref{eq:dp_alg}), the transformed value function $V(\pi)$ is more convenient to deal with to prove
the existence of optimal threshold policies and  the optimal policy  remains invariant to 
the transformation from  $\Vb(\pi)$  to $V(\pi)$.

\begin{corollary} \label{cor:qdslow}
Consider the social learning based quickest detection model  $ (P, B, c, C, \rho)$ in (\ref{eq:model}) with probability of change specified
in (\ref{eq:qdslow}). Then, for initial belief $\pi \in \mathcal{P}_l^\epsilon \cap \mathcal{P}_l^0$, $l=1,2,3$, the optimal policy  $\mu_0^*$  (characterized in Theorem (\ref{thm:ep0}))
incurs a total global  cost $V_{\mu_0^*}(\pi)$ that constitutes an $O(\epsilon)$ upper-bound to the optimal global cost 
$\Vb_{\mu_\epsilon^*}(\pi)$
incurred in the quickest detection
problem. More specifically, for $\pi \in \mathcal{P}_l^\epsilon \cap \mathcal{P}_l^0$, $l=1,2,3$,
\beq
\Vb_{\mu_0^*}(\pi) - \Vb_{\mu_\epsilon^*}(\pi)  \leq \frac{4 \rho \epsilon}{(1-\rho)^2} \max (d,f_2) . 
\text{ \qed }  \label{eq:bound}
\eeq
\end{corollary}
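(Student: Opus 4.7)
The plan is to exploit the fact that $P^\epsilon$ differs from $P^0 = I$ by an $O(\epsilon)$ perturbation, so that the total discounted cost of any fixed stationary policy under dynamics $P^\epsilon$ differs from its cost under $P^0$ by $O(\epsilon)$ in sup-norm. A triangle-inequality argument then converts this policy-wise stability into the stated optimality-gap bound for $\mu_0^*$. For any stationary policy $\mu$ let $W^\epsilon_\mu(\pi)$ denote the total discounted cost of running $\mu$ under dynamics $P^\epsilon$ (in the transformed coordinates of (\ref{eq:costdef})). Since $\mu_0^*$ is optimal for $\epsilon = 0$, $W^0_{\mu_0^*} \leq W^0_{\mu_\epsilon^*}$ pointwise, and so adding and subtracting gives
\[
W^\epsilon_{\mu_0^*}(\pi) - W^\epsilon_{\mu_\epsilon^*}(\pi)
\leq \bigl[W^\epsilon_{\mu_0^*}(\pi) - W^0_{\mu_0^*}(\pi)\bigr] + \bigl[W^0_{\mu_\epsilon^*}(\pi) - W^\epsilon_{\mu_\epsilon^*}(\pi)\bigr]
\leq 2 \sup_{\mu} \|W^\epsilon_\mu - W^0_\mu\|_\infty .
\]
It therefore suffices to show $\|W^\epsilon_\mu - W^0_\mu\|_\infty \leq 2\rho\epsilon\max(d,f_2)/(1-\rho)^2$ for every stationary $\mu$.

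Second, I would invoke the standard contraction-perturbation lemma. For fixed $\mu$, the policy-evaluation operator $\mathcal{T}^\epsilon_\mu$, obtained from the Bellman recursion (\ref{eq:dp_alg}) by replacing the $\min_u$ by evaluation at $u = \mu(\pi)$, is a $\rho$-contraction on the bounded functions on $\I$, and $W^\epsilon_\mu$ is its unique fixed point. Subtracting the two fixed-point equations and iterating yields
\[
\|W^\epsilon_\mu - W^0_\mu\|_\infty \leq \frac{\|\mathcal{T}^\epsilon_\mu W^0_\mu - \mathcal{T}^0_\mu W^0_\mu\|_\infty}{1 - \rho}.
\]
Since the transformed per-step costs satisfy $C(\pi,1) = 0$ and $|C(\pi,2)| \leq \max(d,f_2)$, a geometric-series argument gives the a priori bound $\|W^0_\mu\|_\infty \leq \max(d,f_2)/(1-\rho)$. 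Hence what remains is to establish $\|\mathcal{T}^\epsilon_\mu W^0_\mu - \mathcal{T}^0_\mu W^0_\mu\|_\infty \leq 2 \rho\epsilon \max(d,f_2)/(1-\rho)$.

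The hard part, and the reason the corollary restricts $\pi$ to $\mathcal{P}_l^\epsilon \cap \mathcal{P}_l^0$, is bounding this operator discrepancy in the presence of the discontinuous social learning filter (\ref{eq:piupdate}). On the intersection, $\pi$ lies in the same polytope for both problems, so by Theorem~\ref{lem:polytopes1} the local decision likelihood matrix $\Bsl$ is identical in the $\epsilon=0$ and $\epsilon$ problems. The only remaining source of discrepancy between $\mathcal{T}^\epsilon_\mu$ and $\mathcal{T}^0_\mu$ is then the matrix perturbation $P^\epsilon - P^0$, whose entries are $O(\epsilon)$. This perturbation enters $\mathcal{T}^\epsilon_\mu W - \mathcal{T}^0_\mu W$ both through the one-step cost gap $C^\epsilon(\pi,2) - C^0(\pi,2) = \rho \f^\p(P^\epsilon - P^0)^\p \pi$ and through the social-learning posterior update $\Bsl P^\p \pi / (\mathbf{1}^\p \Bsl P^\p \pi)$. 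Using $\|P^\epsilon - P^0\|_1 \leq 2\epsilon$, the positive-homogeneity identity for $W^0_\mu$ noted in the footnote to (\ref{eq:vinc}), and the a priori $L^\infty$ bound on $W^0_\mu$, a term-by-term estimate yields the required factor $2\rho\epsilon \max(d,f_2)/(1-\rho)$. Combined with the two steps above this gives $\|W^\epsilon_\mu - W^0_\mu\|_\infty \leq 2\rho\epsilon\max(d,f_2)/(1-\rho)^2$, and hence the stated constant $4$ in the final bound.
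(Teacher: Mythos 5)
Your argument is, in essence, a self-contained derivation of the sensitivity bound that the paper simply imports: the paper's proof is a one-paragraph application of Theorem 2 of \cite{RIM09}, which gives
$\Vb_{\mu_0^*}(\pi) \leq \Vb_{\mu_\epsilon^*}(\pi) + \frac{2\rho}{(1-\rho)^2}\,\|\Cb(\pi,u)\|_\infty \sup_i\|[P^\epsilon-P^0]_{ij}\Bsl_a\|_1$,
followed by the computations $\sup_i\|\cdot\|_1\leq 2\epsilon$ and $\|\Cb(\pi,u)\|_\infty=\max(d,f_2)$. Your triangle-inequality decomposition (factor $2$), the contraction-perturbation lemma (factor $1/(1-\rho)$), and the Lipschitz constant $\max(d,f_2)/(1-\rho)$ of the value function reproduce exactly the same constant accounting, so the two routes prove the same estimate — once by citation, once from first principles. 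The self-contained route is more transparent about where each factor of $2$ and $1/(1-\rho)$ comes from, but it obliges you to supply the hypotheses that \cite{RIM09} packages.

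Two of those hypotheses are where your write-up is not airtight (the paper's is loose at the same points). First, the uniform bound $\sup_\mu\|W^\epsilon_\mu - W^0_\mu\|_\infty \leq 2\rho\epsilon\max(d,f_2)/(1-\rho)^2$ over all of $\I$ cannot hold: on the symmetric differences $\mathcal{P}_l^\epsilon \,\triangle\, \mathcal{P}_l^0$ the two problems use \emph{different} decision-likelihood matrices $\Bsl$, so the one-step operator discrepancy there is $O(1)$, not $O(\epsilon)$; and even if the initial $\pi$ lies in the intersection, the filter iterates need not remain there. A contraction argument needs the operator discrepancy controlled on a forward-invariant set, and restricting only the initial condition does not provide one — this is precisely what the paper waves away with ``the regions we are omitting are $O(\epsilon)$ in size.'' Second, your term-by-term estimate of the continuation-value gap requires a Lipschitz bound (in $L_1$, with constant $\max(d,f_2)/(1-\rho)$) on the positively homogeneous extension of the value function; an $L^\infty$ bound plus homogeneity is not sufficient. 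Piecewise linearity (Theorem \ref{thm:ep0}) supplies this for $W^0_{\mu_0^*}$, but not for $W^0_\mu$ of an \emph{arbitrary} stationary $\mu$, so the quantifier ``for every stationary $\mu$'' over-claims; you should run the perturbation lemma only for the two policies $\mu_0^*$ and $\mu_\epsilon^*$ that your triangle inequality actually uses, verifying the Lipschitz property for each. Also note that the extra one-step cost gap $\rho\f^\p(P^\epsilon-P^0)^\p\pi$ you introduce by working in the transformed coordinates disappears if you run the estimate directly on $\Vb$ (where the per-step costs do not depend on $P$), which is the cleaner choice and the one consistent with your claimed intermediate constant.
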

{\em Discussion}: The  implication of (\ref{eq:bound})  is that the simple policy $\mu^*_0(\pi)$ of Theorem \ref{thm:ep0} is near optimal
for quickest time detection with social learning when $\epsilon$ is small. Note that (\ref{eq:bound}) compares  the optimal costs in
regions $\pi \in \mathcal{P}_l^\epsilon \cap \mathcal{P}_l^0$, $l=1,2,3$, so we are omitting intervals where the models have different local decision likelihood probabilities $\Bs$. The regions we are omitting
 are $O(\epsilon)$ in size.  
In each region $\pi \in \mathcal{P}_l^\epsilon \cap \mathcal{P}_l^0$,
the only difference between the quickest detection model and the simplified model is the
transition matrix ($P^\epsilon$ vs $P^0$). This allows us to give a  tight bound in the sense
that for $\epsilon = 0$, the optimal costs $\Vb_{\mu_0^*}(\pi)$ and $\Vb_{\mu_\epsilon^*}(\pi) $ coincide.
Of course, (\ref{eq:bound})  requires the discount factor
$\rho < 1$. We refer the reader to \cite{TV05} for an alternative and more general approach.

The proof of Corollary \ref{cor:qdslow} follows from Theorem 2 of \cite{RIM09}. In terms of our notation,
Theorem 2 of \cite{RIM09} shows
that for a POMDP with piecewise linear value function at each iteration of the value-iteration algorithm, for
$\pi \in \mathcal{P}_l^\epsilon \cap \mathcal{P}_l^0$, 
\beq
\Vb_{\mu_0^*}(\pi) \leq \Vb_{\mu_\epsilon^*}(\pi)  + \frac{2 \rho }{(1-\rho)^2} 
\|\Cb(\pi,u)\|_\infty
\sup_{i} \| [P^\epsilon- P^0]_{ij} \,\Bsl_a \|_1 \label{eq:bound2}
\eeq
where the $\|\cdot \|_1$ induced matrix norm is with respect to the $(j,a)$ elements. Since from Theorem~\ref{thm:ep0}, the value function is piecewise linear, (\ref{eq:bound2}) applies.
From the structure of $P^\epsilon$ in (\ref{eq:qdslow}) and since $P^0 = I$, clearly 
$$\sup_{i} \| [P^\epsilon -P^0]_{ij} \,\Bsl_a \|_1 = \epsilon \max(B_{11}+B_{21},\, B_{12}+B_{22}) \leq 2 \epsilon.$$
Also $\|\Cb(\pi,u)\|_\infty = \max(d,f_2)$.
Substituting these in (\ref{eq:bound2}) yields the bound  (\ref{eq:bound}).

\begin{figure}[h]\centering
\mbox{\subfigure[Optimal global decision policies $\mu^*_0(\pi)$ and $\mu^*_\epsilon(\pi)$]
{\epsfig{figure=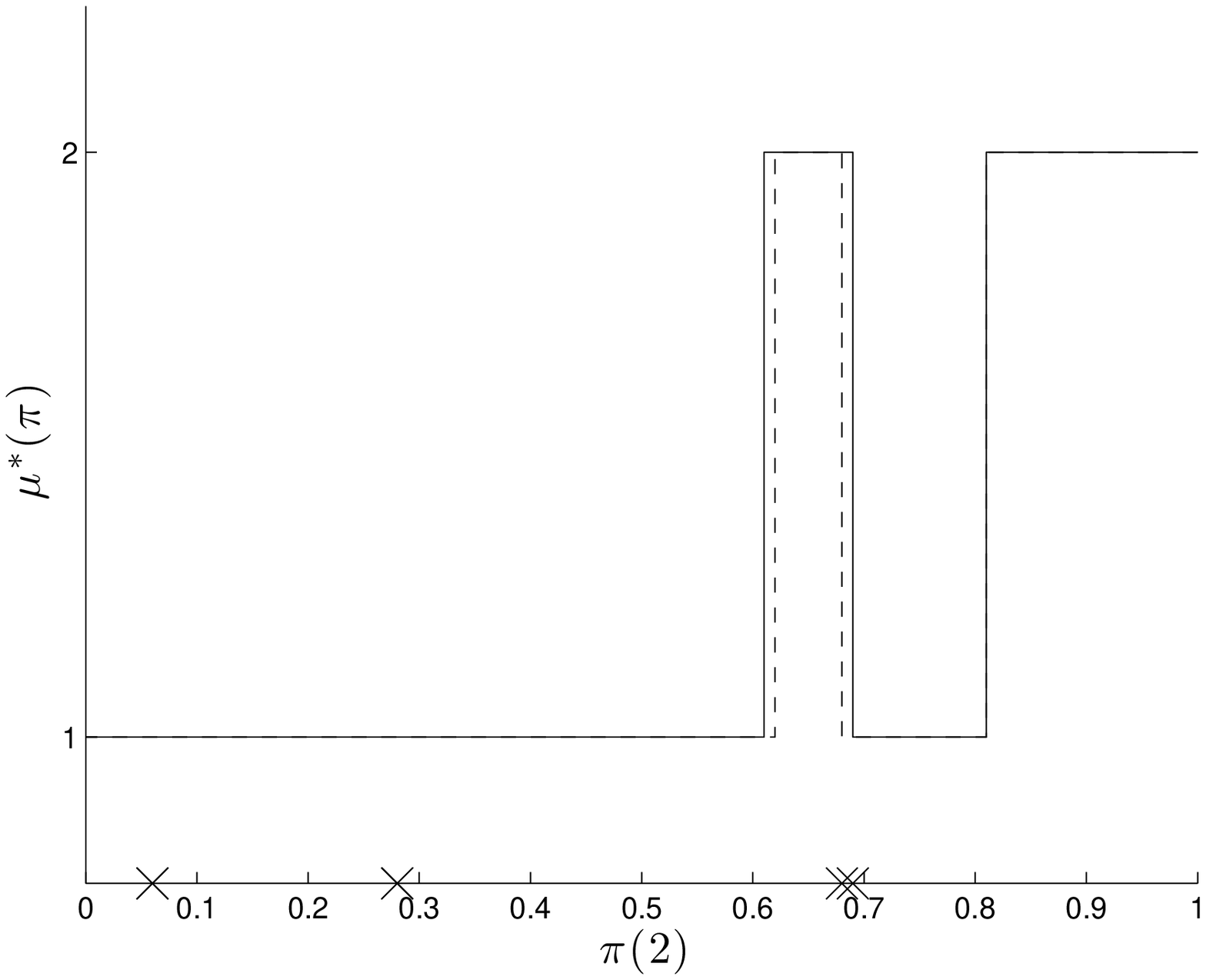,width=0.45\linewidth}} \quad
\subfigure[Value functions  for  global decision policy]
{\epsfig{figure=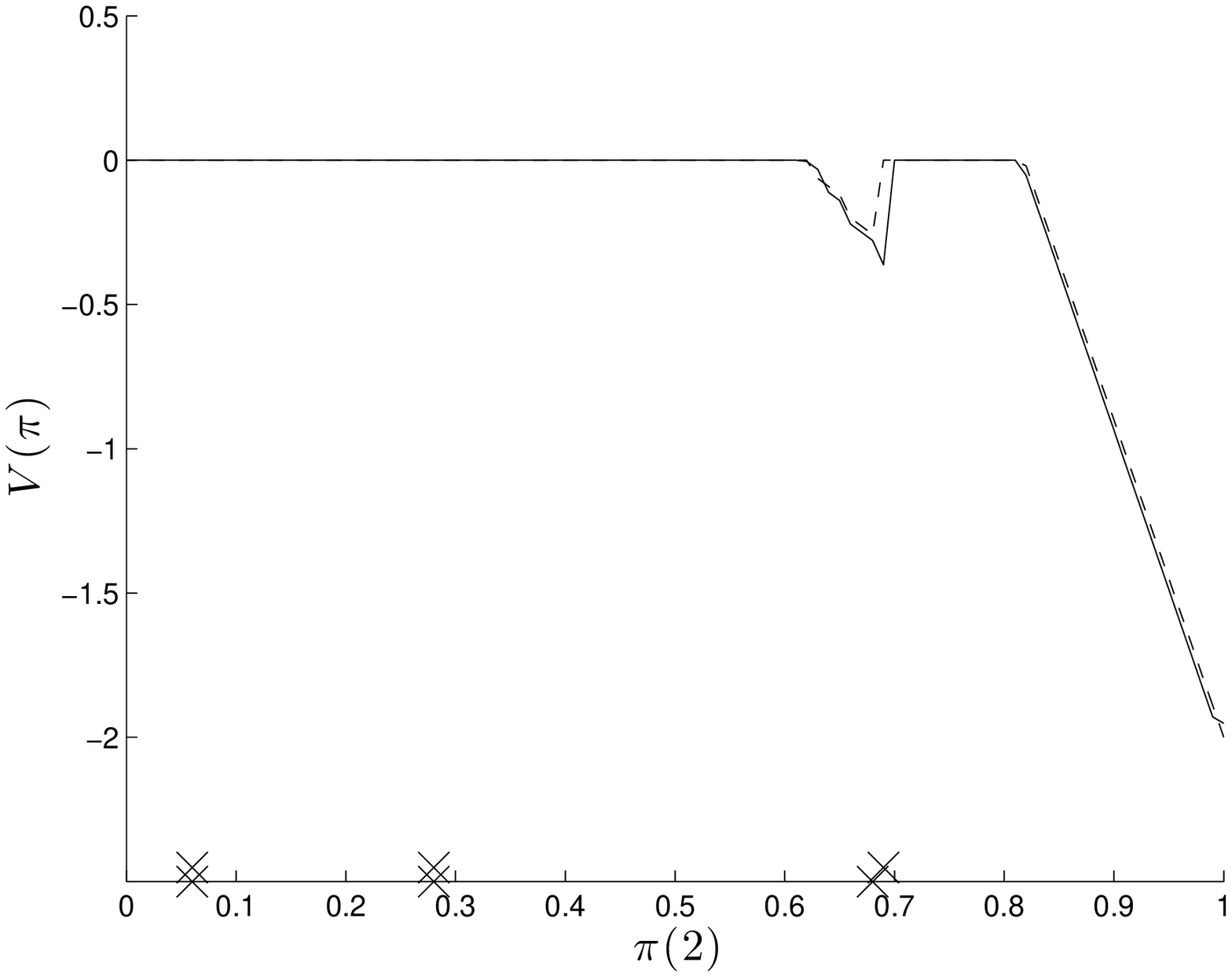,width=0.45\linewidth}}  }
\caption{Optimal decision policy for quickest time change detection  with social learning
for geometric distributed  change time with small probability of change.
In each sub-figure, the graph with solid lines  is for $\epsilon =0.005$ and the graph with
 broken lines is for $\epsilon = 0$.
The policies and optimal costs are very close for $\epsilon = 0.005$ and $\epsilon= 0$. Equation (\ref{eq:bound})
gives a bound for the difference in the optimal costs.
In both cases, the optimal policies are a double threshold and the value functions are  non-concave and discontinuous.}
\label{fig:bound}
\end{figure}

\subsubsection{Numerical Example}
Consider the social learning quickest detection model $ (P, B, c, C, \rho)$ with $\X = \Y = \A = \{1,2\}$,  
\beq B = \begin{bmatrix}  0.85  & 0.15  \\  0.15 & 0.85 \end{bmatrix}, \quad
c = \begin{bmatrix} 
1 & 2 \\ -1 & -3.57
 \end{bmatrix} , \quad \rho =0.8,\quad  d=1.8,\quad f_2 = 2. \label{eq:smalleps}
\eeq

Fig.\ref{fig:bound} shows the optimal policies $\mu_0^*$ (Theorem \ref{thm:ep0})
 and $\mu_\epsilon^*$ (optimal quickest detection policy) together with
optimal costs $V_{\mu_0^*}(\pi)$  and  $V_{\mu_\epsilon^*}(\pi)$ for change probability $\epsilon = 0.005$.
 As can be seen the quickest detection optimal policy and costs 
are very
close to the costs and policies specified by Theorem \ref{thm:ep0}.
 For $\epsilon = 0.002$ the policies
$\mu_0^*$ and $\mu_\epsilon^*$  are almost identical and cannot be distinguished in Fig.\ref{fig:bound}. 
 The policies and optimal costs were obtained by running the value iteration algorithm
for horizon 500 with $\I = [0,1]$ discretized to a grid of 100 points.

\section{Quickest Time Detection for Geometric and PH-distributed Change Time} \label{sec:ph}

The previous section illustrated
 the multi-threshold behavior of social learning based quickest time change detection.
{\em  What sufficient conditions on the social learning model lead to single threshold behavior}?  This section gives
such conditions for PH-distributed change
times $\tau^0$ modelled by a $X\geq 2$-state Markov chain. 
For geometric change times (i.e., $X=2)$ these conditions yield a threshold that  is identical to
 the classical  Kolmogorov--Shiryaev criterion~(\ref{eq:ksd}).

This section comprises of the following results. \\
(i) Sec.\ref{sec:explicit} gives sufficient conditions for
the optimal global decision policy $\mu^*$ to be myopic and characterized by a linear hyperplane threshold.\\
(ii) Sec.\ref{sec:tcurve} gives less restrictive conditions under which the optimal policy is increasing
with respect to the monotone likelihood ratio (MLR)   order and is characterized by a  single threshold
curve. 
Recall that for PH-distributed change time,  the belief space $\I$ is a multi-dimensional simplex.
To order posterior distributions on this simplex, the MLR stochastic order (which is a partial order)
will be used since it is preserved under conditional expectations.
The results involve analysis of the structure of the social learning Bayesian filter
together with lattice programming. All definitions of these orders and consequences are given in the
Appendix.
\\(iii)  Sec.\ref{sec:extensions} describes how sufficient conditions can be given for multiple-threshold policies.\\
(iv)  Finally, Sec.\ref{sec:linear} characterizes the optimal
linear approximation to the MLR increasing policy. It then formulates estimation of the optimal linear approximation
to the threshold curve as a stochastic optimization problem.

\subsubsection*{Assumption (PH)}

Recall fictitious states $2,\ldots,X$ (corresponding to belief states $e_2,\ldots,e_X$)  are used
to model the PH-distribution in (\ref{eq:nu}). It therefore makes sense to constrain the model parameters so that the global decision policy $\mu^*(\pi)$
at the belief states $e_2,\ldots,e_X$ are identical (and similarly for the local decisions taken in social learning).
Throughout this section, when considering PH-distributed change times,  we make the following assumption.
\begin{itemize}
 \item[(PH)] (i) $ C^\p e_i < 0$ for $i=2,\ldots,X$.  (ii) $e_2,\ldots,e_X$ lie in polytope $\mathcal{P}_1$.
\end{itemize}
Assumption (PH)(i) says that  the optimal policy
$\mu^*(\pi)$ treats each  of the fictitious states $2,\ldots,X$ identically -- they all 
 lie outside the stopping set $\Stop$. 
In similar vein, (PH)(ii) requires that individual agents making local decisions
treat the fictitious states $i=2,\ldots,X$ identically, i.e., they lie to the left of each hyperplane $\eta_y$, $y=1,\ldots,Y$.\\
Obviously, (PH)  holds trivially for $X=2$ (geometric case) - otherwise the quickest  change problem would be degenerate.

\subsection{Case 1: Myopic Quickest Detection with Linear Hyperplane Threshold} \label{sec:explicit}
The main result of this subsection is Theorem \ref{thm:cpi} which shows that under suitable conditions, the optimal policy
$\mu^*(\pi)$ has a myopic  structure characterized by $C^\p \pi = 0$.
Recall that $C$ in (\ref{eq:costdef}) denotes  the transformed costs
of the global decision maker with elements $C_j$, $j=1,\ldots,X$.
Denote  the $X-1$ vertices of the intersection of the linear hyperplane  $\{\pi: C ^\p  \pi = 0\}$ with the facets of simplex $\I$ as
$\ver_j$, $j=1,\ldots, X-1$. Then it is straightforwardly seen that these vertices are
\beq  \label{eq:ver}
\ver_j = \frac{C_{j+1} \,e_1 - C_1 \, e_{j+1}} {C_{j+1} - C_1}, \quad j=1,\ldots, X-1. \eeq

Now introduce the following assumption:
\begin{itemize}
\item[(C1)] $C^\p  \Bsi_a^{\ver_j} P^\p \ver_j  \geq  0$
 for all $a \in \A$, $j=1,\ldots,X-1$.
\end{itemize}

The relevance of (C1)  is apparent from the following lemma (proof in Appendix~\ref{app:cpi}).
Define the set of belief states (polytope)
\beq S = \{\pi:  C^\p \pi \geq 0 \} \label{eq:ss} \eeq
\begin{lemma} \label{lem:mappoly}
(C1)  together with (A1), (A2), (A3), (PH) are sufficient for the set $S$ defined in (\ref{eq:ss})  to be closed under the social learning filter~(\ref{eq:piupdate}).
That is   $\pi \in S \implies \T(\pi,a) \in S $ for all $a\in \A$.
\end{lemma}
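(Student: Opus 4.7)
The plan is to reduce the closure claim $\T(\pi, a) \in S$ for all $\pi \in S$ and $a \in \A$ to a finite collection of vertex inequalities supplied by (C1). Since $\sigma(\pi,a) \geq 0$, the condition $C^\p \T(\pi,a) \geq 0$ is equivalent to the unnormalized inequality
\[
g_a(\pi) := C^\p \Bs_a P^\p \pi \geq 0, \quad \pi \in S, \; a \in \A.
\]
Under (A3) and (PH)(i), $C_1 > 0$ and $C_j < 0$ for $j \geq 2$, so $S$ is a non-degenerate $(X-1)$-simplex whose vertices are $e_1$ together with $\ver_1, \ldots, \ver_{X-1}$ from \eqref{eq:ver}.

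Next I will invoke Theorem~\ref{lem:polytopes1} to partition $\I$ into polytopes $\mathcal{P}_1, \ldots, \mathcal{P}_{Y+1}$, on each of which $\Bs_a$ equals the constant matrix $\Bsl_a$ from \eqref{eq:bsl}. Restricted to $S \cap \mathcal{P}_l$, the function $g_a$ becomes linear in $\pi$ and hence attains its minimum at a vertex of this sub-polytope. The task therefore reduces to verifying $g_a \geq 0$ at each vertex of each $S \cap \mathcal{P}_l$.

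These vertices come in two types. Type (a) vertices are vertices of $S$ itself that happen to lie in $\mathcal{P}_l$: at $e_1$, the absorbing structure of $P$ in \eqref{eq:phmatrix} gives $P^\p e_1 = e_1$, so $g_a(e_1) = C_1 (\Bsl_a)_{11} \geq 0$; at any $\ver_j \in \mathcal{P}_l$, the identity $\Bsl_a = R^{\ver_j}_a$ holds and (C1) yields $g_a(\ver_j) \geq 0$ directly. Type (b) vertices are intersection points where an edge of $S$ crosses one of the bounding hyperplanes $\eta_y$ of $\mathcal{P}_l$; for these, the endpoint $\ver_j$ of the relevant edge of $S$ may lie in a different polytope $\mathcal{P}_{l'}$, and this is where the principal difficulty arises.

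The main obstacle is thus to show $C^\p \Bsl_a P^\p \ver_j \geq 0$ even when $\ver_j \in \mathcal{P}_{l'}$ with $l' \neq l$. I plan to exploit two structural ingredients. First, on a hyperplane $\eta_y$ the indifference condition $(c_1-c_2)^\p B_y P^\p \pi = 0$ holds, so the $y$-th row of $M^\pi$ may be toggled without altering $g_a$; hence $g_a$ is continuous across $\eta_y$. Second, by the single-crossing clause of Theorem~\ref{lem:polytopes1}, the hyperplanes $\eta_1, \ldots, \eta_Y$ do not intersect inside $\I$ and induce a nested partition, so moving from $\mathcal{P}_l$ to $\mathcal{P}_{l+1}$ switches exactly one row of $M^\cdot$ at a time (cf.\ \eqref{eq:Mexample}). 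Propagating the Type (a) verifications across these boundaries using the continuity from the first ingredient and the nested structure from the second should then yield $g_a(\ver_j) \geq 0$ for all $(l, j)$, completing the argument.
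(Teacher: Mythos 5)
There is a genuine gap, and it sits exactly where you flagged ``the principal difficulty.'' Your reduction of the closure claim to the sign of $g_a(\pi)=C^\p \Bs_a P^\p\pi$ on the vertices of the sub-polytopes $S\cap\mathcal{P}_l$ is sound, and the Type~(a) checks (at $e_1$ and at $\ver_j\in\mathcal{P}_l$) are fine. But your resolution of the Type~(b) vertices rests on a false claim: $g_a$ is \emph{not} continuous across $\eta_y$. Writing $g_a(\pi)=\sum_y M^\pi_{ya}\, C^\p B_y P^\p\pi$, toggling the $y$-th row of $M^\pi$ changes $g_a$ by $\pm\, C^\p B_y P^\p\pi$; the indifference condition defining $\eta_y$ is $(c_1-c_2)^\p B_y P^\p\pi=0$, which constrains the \emph{local} cost vectors $c_a$ and says nothing about the \emph{global} cost vector $C$, so $C^\p B_y P^\p\pi$ need not vanish on $\eta_y$. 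With that ingredient gone, the ``propagation'' of the vertex inequalities across polytope boundaries has no foundation, and in any case it is only asserted (``should then yield''), not carried out. Hypothesis (C1) controls $C^\p \Bsi_a^{\ver_j}P^\p\ver_j$ only with the decision-likelihood matrix of the polytope containing $\ver_j$; nothing in your argument delivers the corresponding inequality with $\Bsl_a$ for the other polytopes $\mathcal{P}_l$ that $S$ intersects, nor at the crossing points of edges of $S$ with the $\eta_y$.

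The paper avoids these boundary vertices altogether by an MLR argument, which is why (A3) appears in the hypotheses (your proposal never uses it). The steps are: (i) under (PH), every $\pi$ with $C^\p\pi>0$ lies on a ray from $e_1$ that exits $S$ through the hyperplane $\{C^\p\pi=0\}$ at a point $\bpi$ with $\bpi\gr\pi$; (ii) by Theorem~\ref{thm:key}(2) the social learning filter is MLR-monotone under (A1),(A2), and by (A3) the entries of $C$ are decreasing so $C^\p(\cdot)$ is MLR-decreasing, whence $C^\p T^{\bpi}(\bpi,a)\ge 0$ forces $C^\p\T(\pi,a)\ge 0$; (iii) on the hyperplane itself, convexity reduces the check to the vertices $\ver_j$, which is precisely what (C1) supplies. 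If you want to salvage a purely polyhedral argument you would need additional hypotheses controlling $C^\p \Bsl_a P^\p(\cdot)$ at every Type~(b) vertex; as stated, the lemma's hypotheses do not give you that, and the MLR route is the intended bridge from the hyperplane to the interior of $S$.
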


Recall (A1), (A2), (A3) were introduced in Sec.\ref{sec:assumptions} and (PH) at the beginning of Sec.\ref{sec:ph}.
The main result is as follows. The proof is in Appendix \ref{app:cpi}.

\begin{theorem} \label{thm:cpi}
Consider the  social learning based quickest time detection  model $ (P, B, c, C, \rho)$ in (\ref{eq:model}).
Assume  (A1), (A2), (A3), (S), (C1),  (PH).
Then the global decision maker's optimal  policy is  myopic and is of the form \beq\label{eq:explicit0}
\mu^*(\pi) = \begin{cases} 1 \text{ (stop) } & \text{ if } C^\p \pi \geq 0 \\
					2 \text{ (continue) } & \text{ otherwise } \end{cases}, \quad \Stop = \{\pi: C^\p \pi \geq 0 \}.
\eeq
For the special case $X=2$ (geometric change time), 
\begin{equation} \label{eq:explicitg}
\mu^*(\pi) = \begin{cases} 1 \text{ (stop) } & \text{ if } \pi(2) \leq \pi^*(2)\\
					2 \text{ (continue) } & \text{ if } \pi(2) > \pi^*(2) \end{cases}, \text{ where } \pi^*(2) = \frac{d}{d + f_2 ( 1- \rho P_{22})}
\end{equation}
\end{theorem}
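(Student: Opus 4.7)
\emph{Proof Proposal.} The plan is to derive Theorem~\ref{thm:cpi} directly from Lemma~\ref{lem:mappoly} via a short invariance argument on the value iteration, and then specialize to the geometric case by computing $C$ explicitly. Write $S = \{\pi \in \I : C^\p \pi \geq 0\}$ for the candidate stopping set; Lemma~\ref{lem:mappoly} asserts that $\T(\pi,a)\in S$ whenever $\pi\in S$.

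\emph{Step 1 (show $V\equiv 0$ on $S$).} Since $C(\pi,1)=0$, Bellman's equation gives $V\leq 0$ pointwise on $\I$. I would initialize the value iteration (\ref{eq:vi}) at $V_0\equiv 0$ and prove by induction on $k$ that $V_k\equiv 0$ on $S$. The inductive step is immediate: for $\pi\in S$, Lemma~\ref{lem:mappoly} gives $\T(\pi,a)\in S$ for every $a\in\A$, so by the hypothesis $V_k(\T(\pi,a))=0$, whence
\begin{equation*}
Q_{k+1}(\pi,2) \;=\; C^\p\pi + \rho\sum_{a\in\A} V_k(\T(\pi,a))\,\sigma(\pi,a) \;=\; C^\p\pi \;\geq\; 0,
\end{equation*}
while $Q_{k+1}(\pi,1)=0$, so $V_{k+1}(\pi)=0$. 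Passing to the pointwise limit yields $V(\pi)=0$ and $\mu^*(\pi)=1$ for all $\pi\in S$.

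\emph{Step 2 (show continuing is strictly optimal off $S$).} For $\pi\notin S$, $C^\p\pi < 0$, and using $V\leq 0$,
\begin{equation*}
Q(\pi,2) \;=\; C^\p\pi + \rho\sum_a V(\T(\pi,a))\,\sigma(\pi,a) \;\leq\; C^\p\pi \;<\; 0 \;=\; Q(\pi,1),
\end{equation*}
so $\mu^*(\pi)=2$. Combining with Step~1 gives (\ref{eq:explicit0}) and identifies $\Stop = S$. For the geometric case $X=2$, I substitute $\f=(0,f_2)^\p$ and $P$ of the form (\ref{eq:phmatrix}) into $C = d\,e_1 - (I-\rho P)\f$ to obtain $C=(d,\;-f_2(1-\rho P_{22}))^\p$, so $C^\p\pi\geq 0 \iff \pi(2)\leq d/(d+f_2(1-\rho P_{22}))$, which is exactly (\ref{eq:explicitg}).

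\emph{Where the real work sits.} Given Lemma~\ref{lem:mappoly}, the theorem collapses to the three steps above. The nontrivial content is therefore the invariance of $S$ under the discontinuous, piecewise-projective social learning filter $\T$. The natural route, which I would follow when proving the lemma, is to use Theorem~\ref{lem:polytopes1}: on each polytope $\mathcal{P}_l$ the likelihood matrix $\Bsl_a$ is constant, so $\pi\mapsto C^\p\Bsl_a P^\p\pi$ is linear on $\mathcal{P}_l$, and (since $\sigma(\pi,a)>0$) its sign on the convex set $\mathcal{P}_l\cap S$ is determined by its values at the extreme points. The TP2 assumptions (A1),(A2), the monotonicity (A3) of $C$, and (PH) are what reduce this finite set of extreme points to a subset of $\{\ver_1,\ldots,\ver_{X-1}\}\cup\{e_2,\ldots,e_X\}$, where nonnegativity is delivered respectively by (C1) and (PH)(i). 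This is the main obstacle; the rest is bookkeeping.
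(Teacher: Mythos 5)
Your proposal reproduces the paper's own proof essentially step for step: both arguments establish $V\equiv 0$ on $S=\{\pi: C^\p\pi\geq 0\}$ by induction on the value iteration initialized at $V_0\equiv 0$, using exactly the invariance $\T(\pi,a)\in S$ supplied by Lemma~\ref{lem:mappoly}, and both then observe that $V\leq 0$ forces $\mu^*(\pi)=2$ off $S$. Your explicit computation $C=(d,\,-f_2(1-\rho P_{22}))^\p$ for the geometric case is correct and matches (\ref{eq:explicitg}), so the proposal is correct and takes the same route as the paper.
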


The above result is similar to the entry fee optimal stopping problem having a myopic policy discussed in \cite[pp.389]{HS84}
and \cite[Theorem 2.2, pp.54]{Ros83}. It is important to note, however, that 
even though the optimal policy $\mu^*(\pi)$ in (\ref{eq:explicit0}) is characterized by
 a linear threshold, the value function $V(\pi)$ can still be discontinuous and non-concave (unlike classical
 stopping time problems). This will be illustrated in the numerical example below.

Let us illustrate  what Theorem \ref{thm:cpi} says.  Consider Fig.\ref{fig:theorem}. 
The shaded region in Fig.\ref{fig:theorem} denotes the set $S = \{\pi:C^\p \pi \geq 0\}$.
It is clear from Bellman's equation (\ref{eq:dp_alg})  that the stopping set $\Stop$ is a {\em subset} of this shaded region $S$.
What Theorem \ref{thm:cpi} says is that 
the stopping set  is {\em equal} to the shaded region, i.e., $\Stop = S$, if (C1) and (PH) hold. In terms of Fig.\ref{fig:theorem},
(C1) is sufficient for  $\T(\pi,a)$ to map the belief states $\nu_1$ and $\nu_{2}$ (which are the vertices of the line $C^\p \pi = 0$) to 
polytope $S$.   (PH)(i) implies that states $e_2, e_3$ lie to the left
of the line $C^\p \pi = 0$ (which corresponds to the region $C^\p \pi  < 0$). Similarly, (PH)(ii) means that  $e_2, e_3$ lie to the left
of each line segment $\eta_y$, $y=1,2$, i.e., $e_2,e_3 \in \mathcal{P}_1$.

\begin{figure}\centering
\epsfig{figure=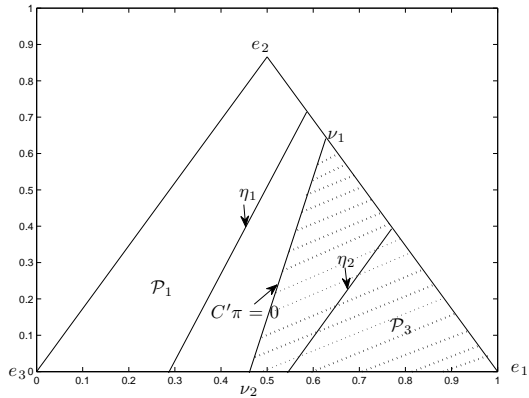,width=0.45\linewidth}
\caption{Illustration of Theorem \ref{thm:cpi}. The shaded region which depicts the polytope  $\{\pi: C^\p \pi \geq 0\}$ is equivalent to the stopping set $\S$ under the assumptions
of the theorem.}
\label{fig:theorem}
\end{figure}

{\em Numerical Example}:
To illustrate Theorem \ref{thm:cpi}, consider the geometric change time  model in (\ref{eq:smalleps}) except that  $P_{22} = 0.75$.
Even though the sufficient condition (C1)  does not hold, 
 the optimal policy is characterized
by a single threshold given by (\ref{eq:explicitg}). This is shown
in Fig.\ref{fig:cpi}. As can be seen in Fig.\ref{fig:cpi},  the value function is non-concave and discontinuous.

\begin{figure}[h]\centering
\mbox{\subfigure[Optimal global decision policy $\mu^*(\pi)$]
{\epsfig{figure=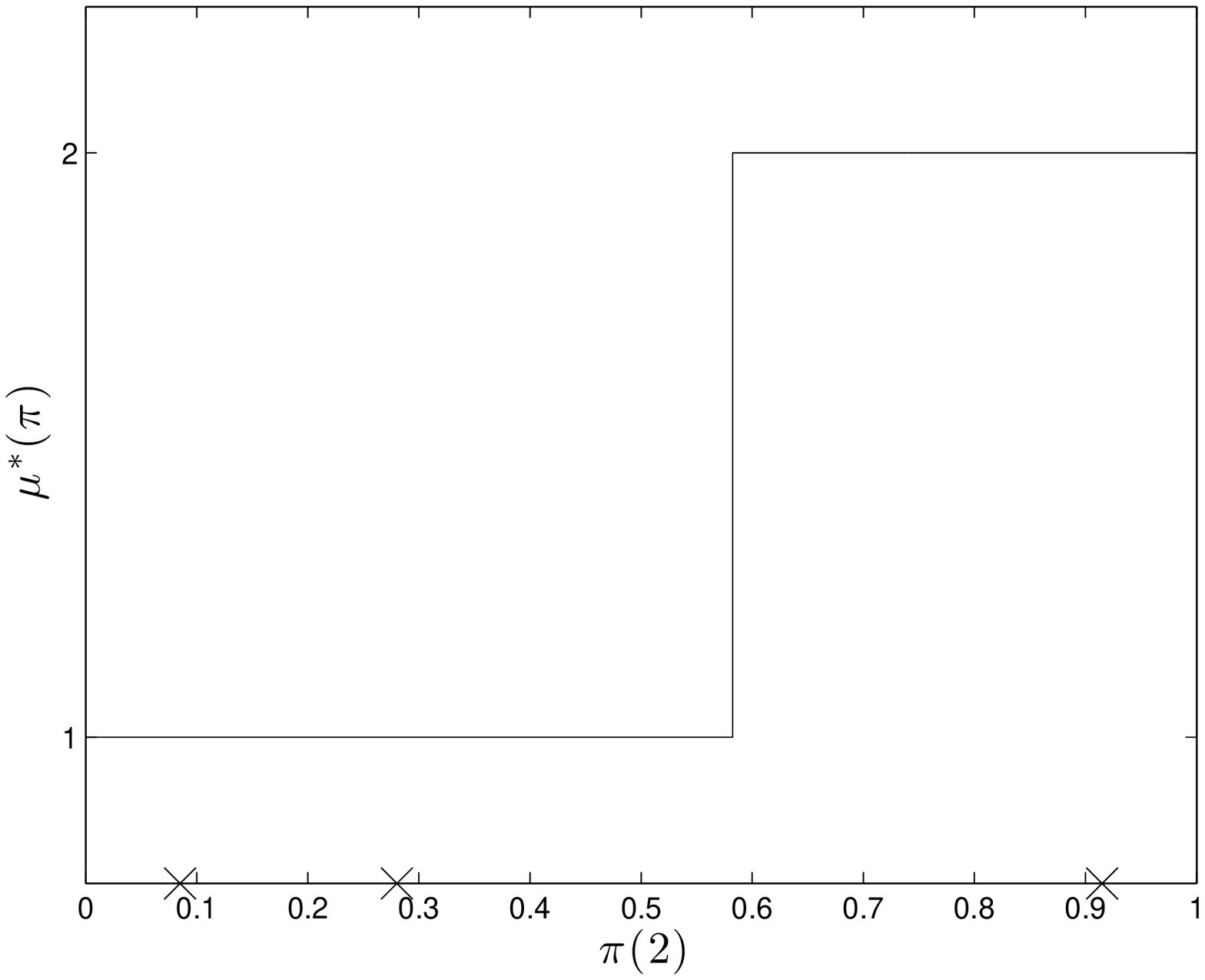,width=0.45\linewidth}} \quad
\subfigure[Value functions  for  global decision policy]
{\epsfig{figure=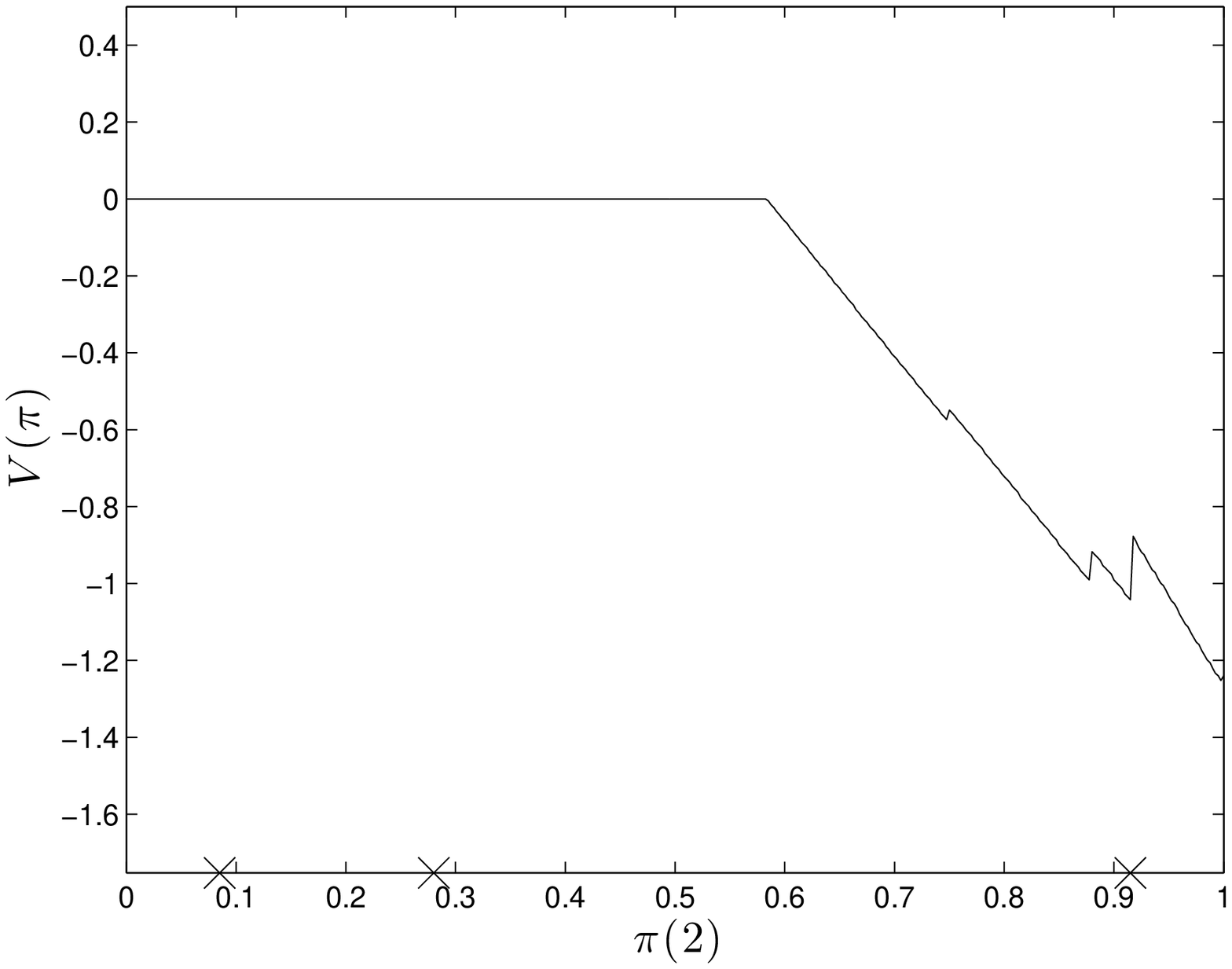,width=0.45\linewidth}}  }
\caption{Numerical example illustrating Theorem \ref{thm:cpi} which
characterizes the optimal decision policy for social learning based quickest detection.
The example is described in Sec.\ref{sec:explicit}.  Even though the value
function is non-concave and discontinuous, the optimal policy has a single threshold specified by (\ref{eq:explicitg}).}
\label{fig:cpi}
\end{figure}

\subsection{Case 2: Existence of a single threshold switching curve} \label{sec:tcurve}
In this subsection, we consider another special case of the social learning based quickest detection model
(\ref{eq:model}). Theorem \ref{thm:1} below shows that  the stopping set  is characterized by a single
threshold curve on the belief space.
The threshold coincides with the classical quickest time detection problem with non-informative observations.
For PH-distributed change times, unlike the previous subsection, the threshold curve  is not necessarily linear. We  give a stochastic gradient algorithm
to estimate this threshold curve in Sec.\ref{sec:linear}.

\subsubsection{Structural Result}
We make the following assumptions. Recall the global decision maker's cost vector $C$ is defined in (\ref{eq:costdef}).
 Let $\bver_j$, $j=1,\ldots,X-1$ denote the $X-1$ vertices of the intersection of hyperplane $\hy_Y$ (defined in (\ref{eq:hy}))
with $\I$. These vertices are computed as (\ref{eq:ver}) with $C$ replaced by $P B_Y (c_1-c_2)$.
\begin{itemize}
\item[(C2)]  
 $(c_1-c_2)^\p B_Y (P^\p)^2  \bver_j  \leq 0$
 for  $j=1,\ldots,X-1$.
\item[(C3)]
The linear hyperplane $\{\pi: C^\p \pi = 0\} $
lies in polytope $\mathcal{P}_{Y+1}$.
\end{itemize}

The following is the main result. The proof is  in Appendix \ref{app:thm1}.  

\begin{theorem} \label{thm:1}  Consider the social learning based quickest detection model $ (P, B, c, C, \rho)$ in (\ref{eq:model}).
Assume (A1), (A2), (S) and (PH) hold. The optimal policy $\mu^*(\pi)$     has the following structure  \\
(i)  Under (C3), $\mu^*(\pi) = 2$ for $\pi \notin \mathcal{P}_1$.
\\
(ii)  Under (C2) and (C3), the stopping set $\Stop $ is as convex subset of polytope $ \mathcal{P}_{Y+1}$. Therefore the boundary of $\Stop$ is differentiable almost everywhere. 
\\
(iii) For geometric-distributed change time ($X=2$),  under (C3),  the optimal policy is identical to that of the Kolmogorov--Shiryaev criterion
(\ref{eq:ksd}) with uniformly distributed observation probabilities.\\
(iv) Under (A3), (C2), (C3). 
on the polytope $\mathcal{P}_{Y+1}$, $\mu^*(\pi)$ has the following structure:
\beq \pi_1,\pi_2 \in \mathcal{P}_{Y+1} \text{   and }
\pi_1 \gr \pi_2  \text{ implies }
\mu^*(\pi_1 ) \geq \mu^*(\pi_2)  \label{eq:structure}\eeq 
(The MLR order $\gr$ is defined in  (\ref{eq:mlrorder}) in Appendix \ref{sec:mlrdef}).
Hence  the  boundary 
of the stopping set $\Stop$ within $\I$ intersects  any line segment $\l(e_1,\bp)$ or $\l(e_X,\bp)$ at most once (see geometric
interpretation below). 
 \qed
\end{theorem}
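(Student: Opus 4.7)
My plan is to prove the four parts in the order stated, exploiting the fact that on the polytope $\mathcal{P}_{Y+1}$ the local decision likelihood matrix $\Bsi^{Y+1}$ from Theorem~\ref{lem:polytopes1} is a constant. This flattens the social learning filter $\T(\pi,a)=\Bsi^{Y+1}_a P^\p\pi/\sigma(\pi,a)$ into a linear-fractional map on $\mathcal{P}_{Y+1}$, which restores the standard POMDP toolbox (concavity of $V$, MLR preservation) \emph{locally} on this polytope.

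For part (i), Bellman's equation (\ref{eq:dp_alg}) together with $V\leq 0$ gives $\Stop\subseteq\{\pi:C^\p\pi\geq 0\}$. Now $C^\p e_1=d>0$ while (PH)(i) gives $C^\p e_i<0$ for $i\geq 2$, so the hyperplane $\{C^\p\pi=0\}$ separates $e_1$ from the remaining vertices. Combined with (C3) placing this hyperplane inside $\mathcal{P}_{Y+1}$, the half-space $\{C^\p\pi\geq 0\}\cap\I$ is itself contained in $\mathcal{P}_{Y+1}$, so $\mu^*(\pi)=2$ whenever $\pi\notin\mathcal{P}_{Y+1}$ (which reads more naturally than the printed ``$\mathcal{P}_1$''; the argument is identical modulo that typographical point). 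For part (ii), I would first verify closure: $\T(\cdot,a)$ maps $\mathcal{P}_{Y+1}$ into itself. Since $\T$ is linear-fractional on $\mathcal{P}_{Y+1}$ and $\mathcal{P}_{Y+1}$ is a polytope bounded by $\eta_Y$ and the facets of $\I$, it suffices to check the image of the vertices of $\eta_Y\cap\I$, i.e., the $\bver_j$. Expanding the condition $(c_1-c_2)^\p B_Y P^\p\T(\bver_j,a)\leq 0$ and using the explicit form of $\Bsi^{Y+1}_a$ reduces to (C2). Closure in hand, $V$ on $\mathcal{P}_{Y+1}$ satisfies a standard POMDP recursion with a fixed observation kernel, so the usual Sondik-Lovejoy argument gives concavity of $V$ restricted to $\mathcal{P}_{Y+1}$; hence $\Stop$, being the zero-sublevel set of a concave function on a convex set, is a convex subset of $\mathcal{P}_{Y+1}$.

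Part (iii) follows by specialization: when $X=2$, $\mathcal{P}_{Y+1}$ is the interval where every private observation $y$ produces the same local decision (by Lemma~\ref{lem:disc}, all rows of $M^{Y+1}$ coincide). On this interval the local decisions convey no information about the state, so the update collapses to $\pi_k=P^\p\pi_{k-1}$ and the problem reduces to classical quickest detection with uninformative observations, whose threshold is (\ref{eq:explicitg}). For part (iv), I would use lattice programming in the MLR partial order. The induction is on the value iteration (\ref{eq:vi}): assuming $V_k$ is MLR-decreasing on $\mathcal{P}_{Y+1}$, I would show the same for $V_{k+1}$ using (a) $C^\p\pi$ is MLR-decreasing, which follows from (A3); (b) $\T(\pi,a)$ is MLR-increasing in $\pi\in\mathcal{P}_{Y+1}$, which follows from the TP2 property of $P$ (A2) and of $\Bsi^{Y+1}$ (whose TP2-ness is an output of the polytope analysis of Theorem~\ref{lem:polytopes1}, ultimately rooted in (A1)); and (c) a Rieder-type aggregation argument that $\sum_a V_k(\T(\pi,a))\sigma(\pi,a)$ inherits MLR-monotonicity. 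Passing to the limit, $Q(\pi,2)$ is MLR-decreasing while $Q(\pi,1)\equiv 0$, which is exactly the submodularity in $(\pi,u)$ needed for Topkis' monotone selection theorem, giving (\ref{eq:structure}). The geometric consequence is immediate: the line segments $\l(e_1,\bp)$ and $\l(e_X,\bp)$ are totally ordered by $\gr$ (a standard calculation from the definition (\ref{eq:mlrorder})), so an MLR-monotone policy crosses the boundary of $\Stop$ on each such segment at most once.

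The main obstacle is the interaction between the discontinuity of $\T$ across polytope boundaries and the monotonicity arguments: the MLR-preservation machinery works only where $\Bsi$ is a fixed TP2 matrix, so parts (i)--(ii) must be established first to confine the analysis to $\mathcal{P}_{Y+1}$, where closure under $\T$ and constancy of $\Bsi^{Y+1}$ together make the induction go through. A secondary subtlety is that $M^{Y+1}$ itself need not be TP2, so one cannot invoke ``composition of TP2 matrices is TP2'' directly for $\Bsi^{Y+1}=B M^{Y+1}$; instead, I would verify TP2-ness of $\Bsi^{Y+1}$ from the explicit column-collapsing structure of $M^{Y+1}$ inherited from the single-crossing property (\ref{eq:scmean}), which is how Theorem~\ref{lem:polytopes1} is set up in its appendix form.
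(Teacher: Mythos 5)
Your proposal follows essentially the same route as the paper's proof: (i) and (iii) from (C3)/(PH) and the uninformative decision likelihoods on $\mathcal{P}_{Y+1}$; (ii) from closure of $\mathcal{P}_{Y+1}$ under the social learning filter via (C2) plus the standard concavity induction; and (iv) from MLR-monotonicity of $V_k$ on $\mathcal{P}_{Y+1}$ (using TP2 preservation of the filter and first-order dominance of $\sigma(\pi,\cdot)$) followed by submodularity and Topkis' theorem -- and you correctly flag the $\mathcal{P}_1$ versus $\mathcal{P}_{Y+1}$ typo in statement (i). Your only superfluous worry is the ``secondary subtlety'': $M^{\pi}$ (hence $M^{Y+1}$) is in fact TP2 under (S) -- this is exactly Statement 1 of Theorem \ref{thm:key} -- so the paper does directly invoke ``products of TP2 matrices are TP2'' for $\Bs = BM^\pi$; your alternative direct verification is of course also valid.
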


Even though the policy $\mu^*(\pi)$ in Theorem \ref{thm:1} coincides with that of classical quickest detection, the optimal
cost incurred is always larger as shown in Theorem \ref{thm:blackwell}.

\subsubsection{Discussion of Theorem \ref{thm:1} and assumptions}
Assumption (C3) localizes the decision threshold to polytope $\mathcal{P}_{Y+1}$. As a consequence of  (C3),  $C^\p \pi < 0$ on all polytopes except $\mathcal{P}_{Y+1}$. Therefore on these polytopes, $\mu^*(\pi) = 2$.
Thus statement (i) is  obvious.

Assumption (C2) together with  (A1), (A2), (S) and (PH)  ensures that the polytope  $\mathcal{P}_{Y+1}$ is closed under the belief state mapping $\T(\pi,a)$. That is, 
$\pi \in \mathcal{P}_{Y+1}$ implies  $\T(\pi,a) \in \mathcal{P}_{Y+1}$ for all $a$.  Note that  Assumption (C2) holds trivially for $X=2$ as shown in the footnote.\footnote{For $X=2$, the second element of $P^\p \pi$ is $P_{22} \pi_2$ which is always
smaller than $\pi_2$, So applying $P$ to any belief state keeps it within the interval $\mathcal{P}_{Y+1}$.}
(C2) is similar in spirit to (C1) of the Sec.\ref{sec:explicit}--the key difference is that  (C1) deals with
the global cost vector $C$ whereas (C2) deals with local costs $c_1,c_2$.

Assumptions (C2) and (C3)    allow us to show that the value function $ V(\pi)$ is concave on $\mathcal{P}_{Y+1}$.
 Then Statement (ii),
namely convexity of
the stopping set  $\Stop$,
follows 
from arguments in \cite{Lov87a}. 

Statement (iii) is straightforward to show. The
local decision likelihood probabilities on  $\mathcal{P}_{Y+1}$ are uniform since the local decision yields no information about the state. 
Thus under (C3) 
the threshold is identical
to the classical quickest detection threshold for the  Kolmogorov--Shiryaev criterion
(\ref{eq:ksd}) with uniformly distributed observation probabilities. 

The proof of Statement (iv) is more involved and is given in Appendix \ref{app:thm1}.  
The proof uses structural properties of the Bayesian social filter studied in Theorem~\ref{thm:key} of Appendix \ref{app:thm1},
 along with submodularity,  MLR stochastic order and
a version defined on line segments $\l(e_1,\bp)$ and $\l(e_X,\bp)$ in Appendix \ref{sec:mlrdef}.

\subsubsection{Geometric Interpretation of Statement (iv)}
Since Statement (iv) is non-trivial, let us explain what it says from a geometric point of view.
For PH-distributed change times with $X > 2$,  Statement~(iv) says a lot more than convexity of the stopping region $\Stop$.
 On the unit simplex $\I$
define $\l(e_1,\bp)$ as the line segment constructed from
 $e_1$ to any point $\bp \in \{e_2,\ldots,e_X)$ on the opposite facet of the simplex $\I$.
Similarly
denote $\l(e_X,\bp)$ as any line segment from $e_X$ to  any point $\bp$ on the opposite facet $(e_1,\ldots,e_{X-1})$. 
 Statement (iv) implies that the  boundary 
of the stopping set $\S$ within $\I$ intersects  any such line $\l(e_1,\bp)$ or $\l(e_X,\bp)$ at most once.  Fig.\ref{fig:valid} shows
examples of convex sets that violate this condition.
Also Statement (iv) leads to the 
 following nice geometrical interpretation. If a belief state $\pi \in \Stop$ lies on a line $\l(e_X,\bp)$, then
all belief states on this line closer to $\bp$ also lie in $\Stop$. Similarly if a belief state $\pi \in \l(e_1,\bp)$  lies outside the stopping set $\Stop$,
then all belief states on the line $\l(e_1,\bp)$ further away from $e_1$ also lie outside the stopping set.

Numerical examples are given in Sec.\ref{sec:numerical}.

\begin{figure}[h]\centering
\mbox{\subfigure[Invalid]
{\epsfig{figure=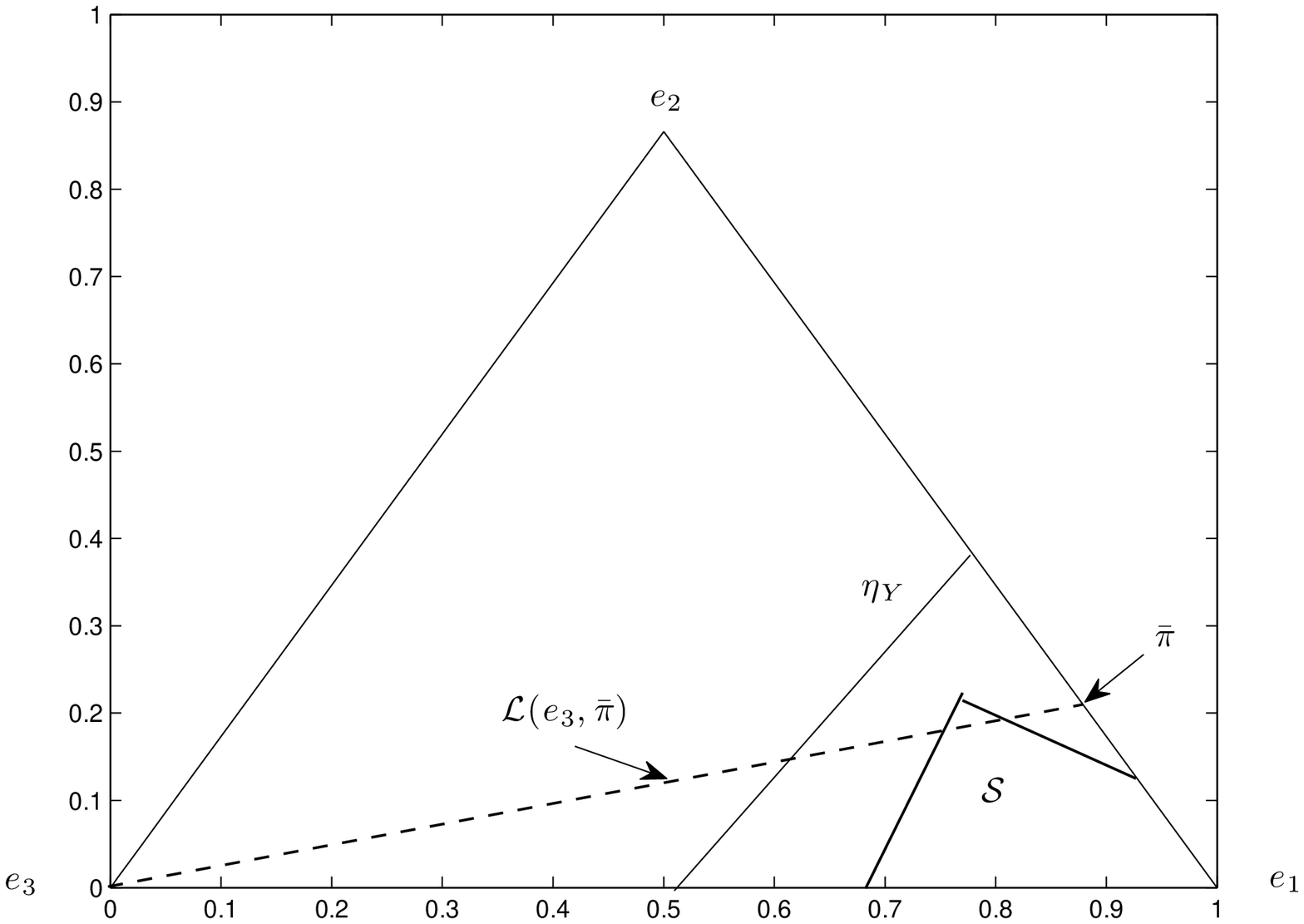,width=0.45\linewidth}} \quad
\subfigure[Valid]
{\epsfig{figure=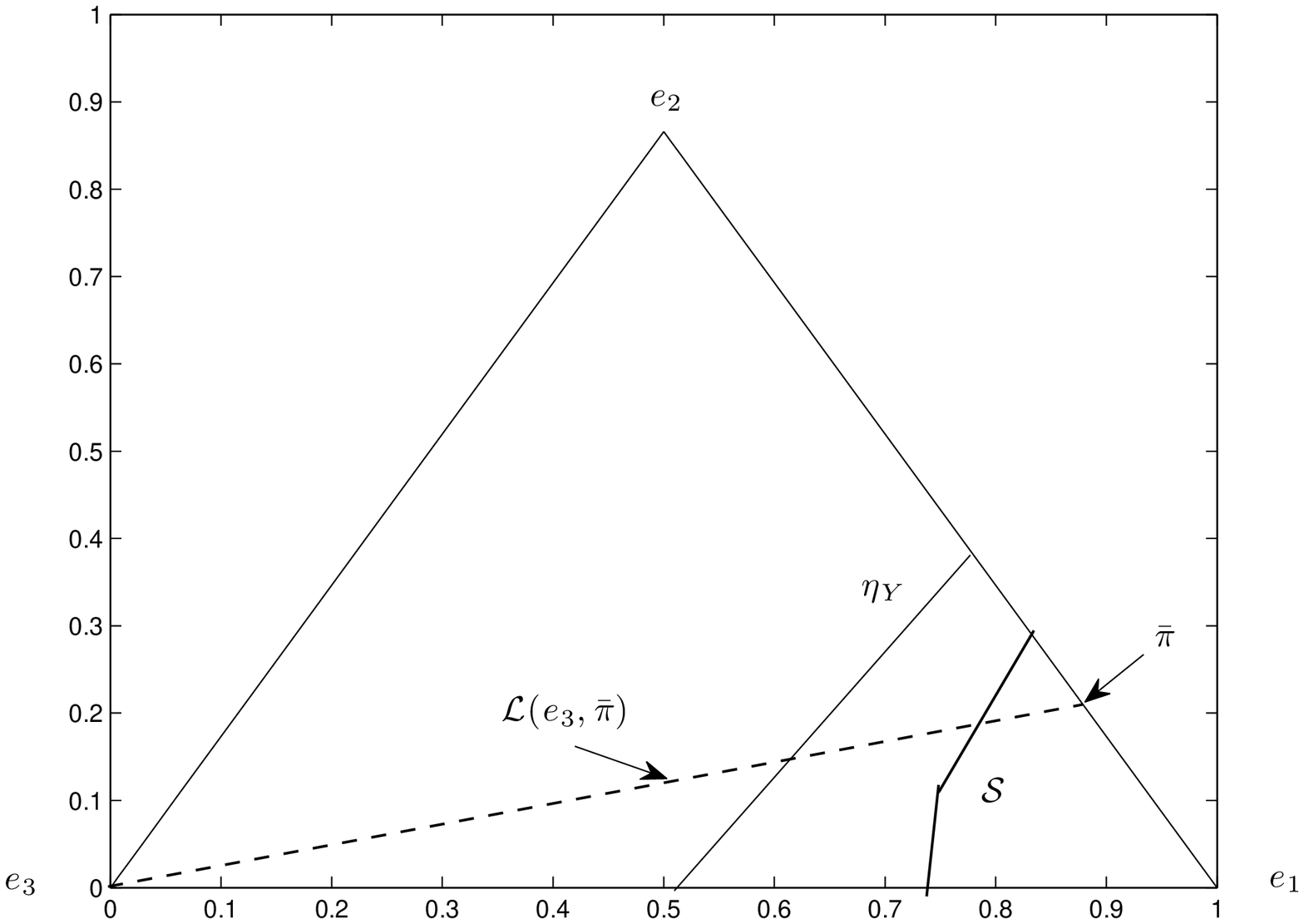,width=0.45\linewidth}}  }
\caption{The stopping set $\Stop$ in Fig (a) violates Statement (iv) of Theorem \ref{thm:1} since the boundary of stopping
set $\Stop$ within $\I$ 
intersects the line $\l(e_3,\bp)$ twice.
Fig(b) shows an example of a stopping set $\Stop$ that satisfies  Statement (iv) of Theorem \ref{thm:1}. In both figures, the region
to the right of line $\hy_Y$ is the polytope $\mathcal{P}_{Y+1}$.}
\label{fig:valid}
\end{figure}

\subsection{Extensions of Theorem \ref{thm:1}  and Multi-threshold Policies} \label{sec:extensions}

\subsubsection{ Local and Global Costs in Global Decision Making} 
Theorem \ref{thm:1} can be extended to consider
a more general global decision maker's cost function (instead of only false alarm and delay) which takes into account the cost of local decisions in social learning. For example,
suppose that the global decision  maker's cost for picking decision $u=2$ (continue) is the delay cost plus an ``operating cost''. That is,
\beq
\Cb(\pi,2) = d e_1^\p \pi + \beta C_{op}(\pi) 
\eeq
Here  $\beta \geq 0$ is a user defined constant and with $\sigma(\cdot)$, $T(\cdot)$, $\H$ defined in (\ref{eq:privu}), (\ref{eq:sigg}), 
\beq
C_{op}(\pi ) 
\ole  \E_{y}\{\min_{a\in \A}\E\{c(x,a) | \H_{k}\} \}
= \sum_{y \in \Y} \min_{a \in \A} \{c_a^\p T(\pi ,y)\} \sigma(\pi,y) = \sum_{y \in \Y} \min_{a \in \A} 
c_a^\p B_y P^\p\pi.
\label{eq:cop}
 \eeq 
 $C_{op}(\pi)$ is  the expected operating cost since it is incurred at each agent $k$ when it makes its local decision via social learning.
Note $C_{op}(\pi)$ 
 is the expected local cost  from choosing decision $u=2$,
 receiving signal $y$, picking recommendation $a$ and broadcasting
the information to the network: the probability of the event is $\sigp(\pi,y)$ and the 
cost is 
 $\min_a \cb^\p T(\pi,y)$. The last equality in (\ref{eq:cop}) follows since
  $\sigma(\pi,y)$ is a non-negative scalar independent of $a$. Actually, the above choice of $C_{op}(\pi)$
  is very similar to that used in constrained social learning in \cite[Chapter 4]{Cha04}.

Then using the same transformation as in (\ref{eq:costdef}), the optimal policy is given by the Bellman's equation (\ref{eq:dp_alg}) with $C(\pi,2) = C^\p \pi + C_{op}(\pi)$.
 Assumption (C3), namely, $\{\pi: C(\pi,2) = 0\} \in \mathcal{P}_{Y+1}$  is then equivalent to 
the linear hyperplane $(C+ P c_1)^\p \pi=0$ lying in polytope $\mathcal{P}_{Y+1}$. This is because on polytope $\mathcal{P}_{Y+1} $ the
optimal local decision $a = 1$, see (\ref{eq:ex31}), and so $C_{op}(\pi) = c_1^\p P^\p \pi$.
Suppose  Assumption (A3) is augmented with the condition that $c(i,a=1) $ is decreasing with $i$.
Then Theorem~\ref{thm:1} continues to hold.

\subsubsection{Multiple Thresholds}
Using a similar proof to Theorem \ref{thm:1}, sufficient conditions can be given for the optimal global policy $\mu^*(\pi)$ in social learning-based quickest detection to have 
multiple thresholds. We describe this below.

Suppose the hyperplane $C^\p \pi = 0$ lies in polytope $\mathcal{P}_{y^*}$ for some $y^* \in \{1,2,\ldots, Y+1\}$. Assume (C2) holds.
Also assume the following generalization of (C2) holds.
\begin{itemize}
\item[(C2')]  The social learning filter maps belief states in  polytope $ \mathcal{P}_y$ to polytope 
$\mathcal{P}_{y+1}$ for  $y = y^*, y^*+1,\ldots,Y$. That is,
  $\pi \in \mathcal{P}_y$ implies $\T(\pi,a) \in \mathcal{P}_{y+1}$.
\end{itemize}
  Then  similar
to the proof of Theorem \ref{thm:1}, the following result can be established (proof omitted).
\begin{theorem}\label{thm:multi}
Under (A1), (A2), (A3), (S), (PH), (C2), (C2'),
 the value function
 $V(\pi)$ is MLR decreasing and therefore optimal policy $\mu^*(\pi)$ is MLR increasing on each polytope $\mathcal{P}_y$, $y = y^*,
\ldots, Y+1$.
\end{theorem}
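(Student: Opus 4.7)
The plan is to follow the template of Theorem~\ref{thm:1}(iv), but to propagate the MLR-decreasing property of the value function \emph{backwards} through the chain of polytopes $\mathcal{P}_{Y+1},\mathcal{P}_{Y},\ldots,\mathcal{P}_{y^*}$ opened up by (C2'). The key ingredients will be: (i)~Theorem~\ref{lem:polytopes1}, which says that on each $\mathcal{P}_y$ the decision-likelihood matrix $\Bsi^y$ in (\ref{eq:bsl}) is a constant TP2 matrix; (ii)~the MLR-monotonicity of the HMM-style Bayes update with TP2 kernels (Theorem~\ref{thm:key} of the Appendix, applied to the constant $\Bsi^y$ on each polytope); (iii)~the positive-homogeneity rewriting already used to derive (\ref{eq:vinc}); and (iv)~Assumption~(A3) together with (PH)(i), which make $\pi\mapsto C^\p\pi$ MLR decreasing on the entire simplex $\I$.

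The base of the induction is on $\mathcal{P}_{Y+1}$. Because (C2) is assumed, $\mathcal{P}_{Y+1}$ is closed under $\T(\cdot,a)$ for every $a\in\A$, and the argument of Theorem~\ref{thm:1}(iv) applies verbatim: a value-iteration induction starting from $V_0(\pi)=-\f^\p\pi$ (which is MLR decreasing under (PH)), combined with (A1), (A2), (A3), (S), (PH), (C2), shows that $V(\pi)$ is MLR decreasing on $\mathcal{P}_{Y+1}$. Submodularity of $Q(\pi,u)$ in $(\pi,u)$ with respect to the MLR order (standard lattice programming) then gives $\mu^*$ MLR increasing on $\mathcal{P}_{Y+1}$.

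Next I would run a backwards induction on the polytope index $y=Y,Y-1,\ldots,y^*$, with inductive hypothesis that $V$ is MLR decreasing on $\mathcal{P}_{y+1}$. Fix $\pi_1,\pi_2\in\mathcal{P}_y$ with $\pi_1\gr\pi_2$. Assumption (C2') places $\T(\pi_1,a),\T(\pi_2,a)\in\mathcal{P}_{y+1}$ for every $a\in\A$. On $\mathcal{P}_y$ the filter takes the constant-kernel form $\T(\pi,a)=\Bsi^y_a P^\p\pi/\sigma(\pi,a)$ with $P$ TP2 by (A2) and $\Bsi^y$ TP2 by Theorem~\ref{lem:polytopes1}, so Theorem~\ref{thm:key} gives $\T(\pi_1,a)\gr\T(\pi_2,a)$. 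The inductive hypothesis then yields $V(\T(\pi_1,a))\le V(\T(\pi_2,a))$. Using positive homogeneity of $V$ as in (\ref{eq:vinc}) to rewrite $V(\T(\pi,a))\sigma(\pi,a)=V(\Bsi^y_a P^\p\pi)$ and summing over $a$, I conclude that $\pi\mapsto C^\p\pi+\discount\sum_{a}V(\Bsi^y_a P^\p\pi)$ is MLR decreasing on $\mathcal{P}_y$. Taking a pointwise min with the constant $0=C(\pi,1)$ preserves MLR-decreasing monotonicity, so $V(\pi)$ is MLR decreasing on $\mathcal{P}_y$. Submodularity of $Q(\pi,u)$ then forces $\mu^*(\pi)$ to be MLR increasing on $\mathcal{P}_y$, closing the induction.

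The main obstacle is precisely this cross-polytope inductive step. Because $V$ is generically discontinuous across the hyperplanes $\hy_y$ (cf.~Sec.~\ref{sec:nontrivial}), any naive argument that tries to exploit MLR-monotonicity of $V$ on the whole simplex $\I$ is doomed: $V$ could jump up across a boundary in a way that destroys the MLR comparison. Assumption (C2') is what rescues the argument -- it places \emph{both} MLR-ordered images $\T(\pi_1,a)$ and $\T(\pi_2,a)$ into the \emph{same} polytope $\mathcal{P}_{y+1}$ where the inductive hypothesis already supplies MLR-monotone behavior, so one never has to compare $V$-values across polytope boundaries. A subtlety to be verified carefully in the full proof is that the positive-homogeneity identity (\ref{eq:vinc}) and the MLR-monotonicity of each $V_k$ are propagated by the value iteration (\ref{eq:vi}) at the level of the iterates, so that the pointwise-convergence result of \cite[Prop.~1.3, Chap.~3]{Ber00b} can be invoked to transfer MLR monotonicity from each $V_k$ to the limit $V$ on $\mathcal{P}_y$.
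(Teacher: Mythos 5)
Your proposal is correct and takes exactly the route the paper intends: the paper omits the proof of this theorem, saying only that it is ``similar to the proof of Theorem~\ref{thm:1}'', and your backwards induction over the chain $\mathcal{P}_{Y+1},\mathcal{P}_Y,\ldots,\mathcal{P}_{y^*}$ --- with (C2) closing $\mathcal{P}_{Y+1}$ for the base case and (C2') feeding each polytope's Bellman update into the next polytope, where the inductive hypothesis already supplies MLR monotonicity of $V$ --- is the natural way to carry that out, and correctly isolates why the discontinuity of $V$ across the hyperplanes $\hy_y$ is harmless. The one step you compress too far is the claim that positive homogeneity alone makes $\sum_a V(\Bsi^y_a P^\p\pi)$ MLR decreasing: since $V\le 0$ and $\sigma(\pi_1,a)\neq\sigma(\pi_2,a)$, the per-$a$ comparison $V(\Bsi^y_a P^\p\pi_1)\le V(\Bsi^y_a P^\p\pi_2)$ need not hold, and you must instead run the summed two-inequality chain of (\ref{eq:qbar})--(\ref{eq:qbar2}), using that $\T(\pi,a)$ is MLR increasing in $a$ and that $\sigma(\pi_1,\cdot)\gs\sigma(\pi_2,\cdot)$ from Theorem~\ref{thm:key}(3),(4) --- ingredients you have already cited as part of the template, so this is a presentational fix rather than a gap.
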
 

 As a result, $\mu^*(\pi)$ is characterized by up to $Y+2 - y^*$ threshold curves, one on each of these polytopes. The reason
 is that even though $V(\pi)$ is decreasing in each polytope, there is no guarantee that is decreasing between polytopes.
Theorem \ref{thm:1} is a special case of the above result when $y^* = Y+1$ and therefore $\mu^*(\pi)$ is characterized by a single threshold
curve.

As an example, consider $X=2, Y=2,A=2$ and suppose $C^\p \pi = 0$ lies in $\mathcal{P}_2$, i.e., $y^* = 2$.  Since
$X=2$ (geometric change time), conditions
(A2), (A3), (PH) and  (C2) hold trivially.
 (C2') holds if the social learning filter $\T(\cdot)$ maps the belief states in $\mathcal{P}_2$ to $\mathcal{P}_3$. A
sufficient condition for this is  
$T^{\hy_1}(\hy_1,2) \in \mathcal{P}_3$, i.e., the transition matrix satisfies
\beq
P_{22} \geq  \frac{B_{12}}{B_{22}B_{11}} \, \frac {(c(2,1)-c(2,2)) B_{21} B_{12} -  (c(1,1)-c(2,1)) B_{11} B_{22} }
{(c(2,1)-c(2,2))  B_{22} - (c(1,1)-c(1,2))  B_{21} }. \label{eq:pmu}\eeq
If (A1) and (\ref{eq:pmu}) hold, then  according to the Theorem \ref{thm:multi},  the optimal policy $\mu^*(\pi)$ is monotone decreasing on each
interval  $\mathcal{P}_2$ and $\mathcal{P}_3$. So $\mu^*(\pi)$ is characterized by up to 2 thresholds, one in each of these intervals.

\subsection{Optimal Linear Decision Threshold and Algorithms} \label{sec:linear}
Theorem   \ref{thm:1} showed that under conditions (A1), (A2),  (A3), (S), (PH), (C2), (C3),  the optimal decision policy $\mu^*(\pi)$ was MLR increasing in
belief state  $\pi \in \mathcal{P}_{Y+1}$.
In this section, we characterize {\em linear} threshold hyperplanes that preserve this MLR structure. Such linear thresholds can then be computed via
a stochastic approximation algorithm. For geometric distributed change time $\tau^0$, since the thresholds are points, estimation is an obvious special
case.

Throughout this section we assume that the conditions of Theorem \ref{thm:1} hold.

\subsubsection{Characterization of MLR increasing linear threshold}
For  $\pi \in \mathcal{P}_{Y+1}$, 
define the $X-1$-dimensional parameter vector $\theta = (\theta(1),\ldots,\theta(X-1))^\p$.
Since $\I \subset \reals^{X-1}$, a linear hyperplane on $\I$ is parametrized by $X-1$ coefficients.
Define the linear threshold policy $\mu_{\theta}(\pi)$ parametrized by the vector $\theta$ as 
\beq \label{eq:linear}
\mu_{\theta}(\pi)  = \begin{cases}
\text{stop } = 1 & \text{ if } \pi(2) + \sum_{i=1}^{X-2} \theta(i) \pi(i+2)   \leq  \theta(X-1) \\
\text{continue} = 2 & \text{ otherwise}  . \end{cases}
 \eeq

  Assume conditions (A1), (A2), (A3),  (S), (PH), (C2), (C3)
hold  for the quickest detection  problem (\ref{eq:csdef}) so that from Theorem \ref{thm:1},
the optimal policy $\mu^*(\pi)$ is 
MLR increasing on lines $\l(e_X,\bp)$ and $\l(e_1,\bp)$. These are defined in Appendix \ref{sec:mlrdef}.
The 
requirement that state 1 lies in the stopping set, means $\mu_\theta(e_1) < 0$ which 
 implies
$\theta(X-1) >0$.

\begin{theorem}
\label{thm:dep}
For belief states $\pi \in \I$,
the   
linear threshold policy $\mu_{\theta}(\pi)$ defined
in (\ref{eq:linear}) is  \\
(i)  MLR increasing
on lines $\l(e_X,\bp)$  iff   $\theta(X-2) \geq 1 $ and $\theta(i) \leq \theta(X-2)$ for $i< X-2$. \\
(ii) MLR increasing
on lines  $\l(e_1,\bp)$  iff $\theta(i)\geq 0$,
for $i<X-2$.\qed
\end{theorem}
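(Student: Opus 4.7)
The plan is to reduce the question to a direct computation on affine slices of the simplex. On the line $\l(e_X,\bp)$, parametrize points as $\pi_\lambda = (1-\lambda) e_X + \lambda \bp$ with $\lambda \in [0,1]$ and $\bp$ a probability vector supported on $\{e_1,\ldots,e_{X-1}\}$. Since $e_X \gr \bp$ (the unit mass at the largest state dominates any other distribution in MLR), smaller $\lambda$ corresponds to larger MLR. Thus $\mu_\theta$ is MLR increasing on this line iff the continue region $\{\lambda : L(\pi_\lambda) > \theta(X-1)\}$, where $L(\pi)$ is the linear form on the left-hand side of (\ref{eq:linear}), is closed in the downward direction in $\lambda$. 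Because $L(\pi_\lambda)$ is affine in $\lambda$, this reduces to the requirement $\frac{d}{d\lambda} L(\pi_\lambda) \leq 0$ for every admissible $\bp$.

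Substituting the coordinates of $\pi_\lambda$ (namely $\pi_\lambda(i) = \lambda \bp(i)$ for $i < X$ and $\pi_\lambda(X) = 1-\lambda$) into $L$ yields
\begin{equation*}
L(\pi_\lambda) = \lambda\Bigl[\bp(2) + \sum_{i=1}^{X-3}\theta(i)\,\bp(i+2)\Bigr] + (1-\lambda)\,\theta(X-2),
\end{equation*}
so the monotonicity condition becomes
\begin{equation*}
\bp(2) + \sum_{i=1}^{X-3} \theta(i)\,\bp(i+2) \;\leq\; \theta(X-2) \quad\text{for all } \bp \text{ on } \{e_1,\ldots,e_{X-1}\}.
\end{equation*}
Since the left-hand side is linear in $\bp$ over a simplex face, this holds iff each vertex value is bounded by $\theta(X-2)$. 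The vertex $e_2$ gives $1 \leq \theta(X-2)$; the vertex $e_{i+2}$ gives $\theta(i) \leq \theta(X-2)$ for $i=1,\ldots,X-3$; and the vertex $e_1$ gives $0 \leq \theta(X-2)$, which is already implied by the first. This proves part (i).

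For part (ii), repeat the same recipe on $\l(e_1,\bp)$ with $\bp$ on the opposite facet $\{e_2,\ldots,e_X\}$, using $\pi_\lambda = (1-\lambda)e_1 + \lambda \bp$. Now $\bp \gr e_1$ (the unit mass at state 1 is MLR smallest), so larger $\lambda$ corresponds to larger MLR, and the direction of the required monotonicity of $L$ flips sign. One computes $L(\pi_\lambda) = \lambda\bigl[\bp(2) + \sum_{i=1}^{X-2}\theta(i)\bp(i+2)\bigr]$, and MLR-monotonicity becomes nonnegativity of the bracket for every $\bp$ on the facet. Applying the same vertex argument yields the stated conditions $\theta(i) \geq 0$.

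I do not anticipate a serious obstacle: everything reduces to affine computations. The one place to be careful is the direction of MLR along each line (which endpoint is MLR-maximal), since reversing it swaps $\leq$ and $\geq$ in the derived inequalities. The other bookkeeping point is to verify that restricting $\bp$ to the facet really does give iff (not merely sufficient) conditions, which uses the standard fact that a linear function on a simplex attains its extremes at vertices.
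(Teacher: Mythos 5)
Your proof is correct and follows essentially the same route as the paper's: parametrize each line segment by its convex-combination coefficient, use Lemma \ref{lem:convex}(iii) to translate MLR order into order of that coefficient, reduce monotonicity of $\mu_\theta$ to the sign of an affine function of $\bp$, and check the inequality at the vertices of the opposite facet. The only cosmetic difference is that you phrase the condition as $\frac{d}{d\lambda}L(\pi_\lambda)\leq 0$ while the paper writes the equivalent difference $(\epsilon_1-\epsilon_2)\bigl(\theta(X-2)-[0\;1\;\theta(1)\cdots\theta(X-2)]\bp\bigr)\leq 0$; your handling of the ``iff'' (identifying policy monotonicity with monotonicity of the linear form rather than with the threshold-crossing behavior for the particular $\theta(X-1)$) is at exactly the paper's level of rigor.
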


The proof of Theorem \ref{thm:dep} is in Appendix \ref{app:dep}. 
The constraints in the above theorem  are necessary {\em and} sufficient for the linear threshold policy
(\ref{eq:linear})
to be MLR increasing on lines $\l(e_X,\bp)$ and $\l(e_1,\bp)$. Under these constraints,
 (\ref{eq:linear}) 
defines the  set of all
MLR increasing linear threshold policies on $\l(e_X,\bp)$ and $\l(e_1,\bp)$ -- it does not leave out any MLR
increasing polices; nor does it  include
any non MLR increasing policies.
 In this sense, optimizing  over the space of MLR increasing linear 
threshold policies yields the optimal linear  approximation to
 threshold curve.

The conditions imposed on the
linear threshold parameters $\theta$ in Theorem \ref{thm:dep} have a nice interpretation  when $X=3$. Recall in this case
$\I$ is an equilateral triangle.
 Let $(\omega(1),\omega(2))$ denote Cartesian coordinates in the equilateral triangle. So  $\pi(2)=2\omega(2)/\sqrt{3}$, $\pi(1) = 
 \omega(1) - \omega(2)/\sqrt{3}$. Then  the linear threshold satisfies
$$
 \omega(2) = \frac{\sqrt{3}\theta(1)}{2-\theta(1)} \omega(1) + \bigl(\theta(2)-\theta(1)\bigr) \frac{\sqrt{3}}{2-\theta(1)}.  $$
So the conditions of Theorem \ref{thm:dep} require that  $\theta(1)\geq 1$, i.e., the 
  threshold has slope of $60^o$ or larger. When $\theta(1)>2$, slope 
becomes negative, i.e., more than $90^o$.

Fig.\ref{fig:validn} shows examples of a valid and invalid linear threshold. Fig.\ref{fig:validn}(a) illustrates a valid MLR increasing linear threshold policy. Fig.\ref{fig:validn}(b) is invalid since the threshold is less
than $60^o$ meaning that the resulting policy is not MLR increasing on lines. Also shown is the hyperplane
 $C^\p \pi = 0$ which by Assumption (C3) lies in polytope $\mathcal{P}_{Y+1}$.

\begin{figure}[h]\centering
\mbox{\subfigure[Valid]
{\epsfig{figure=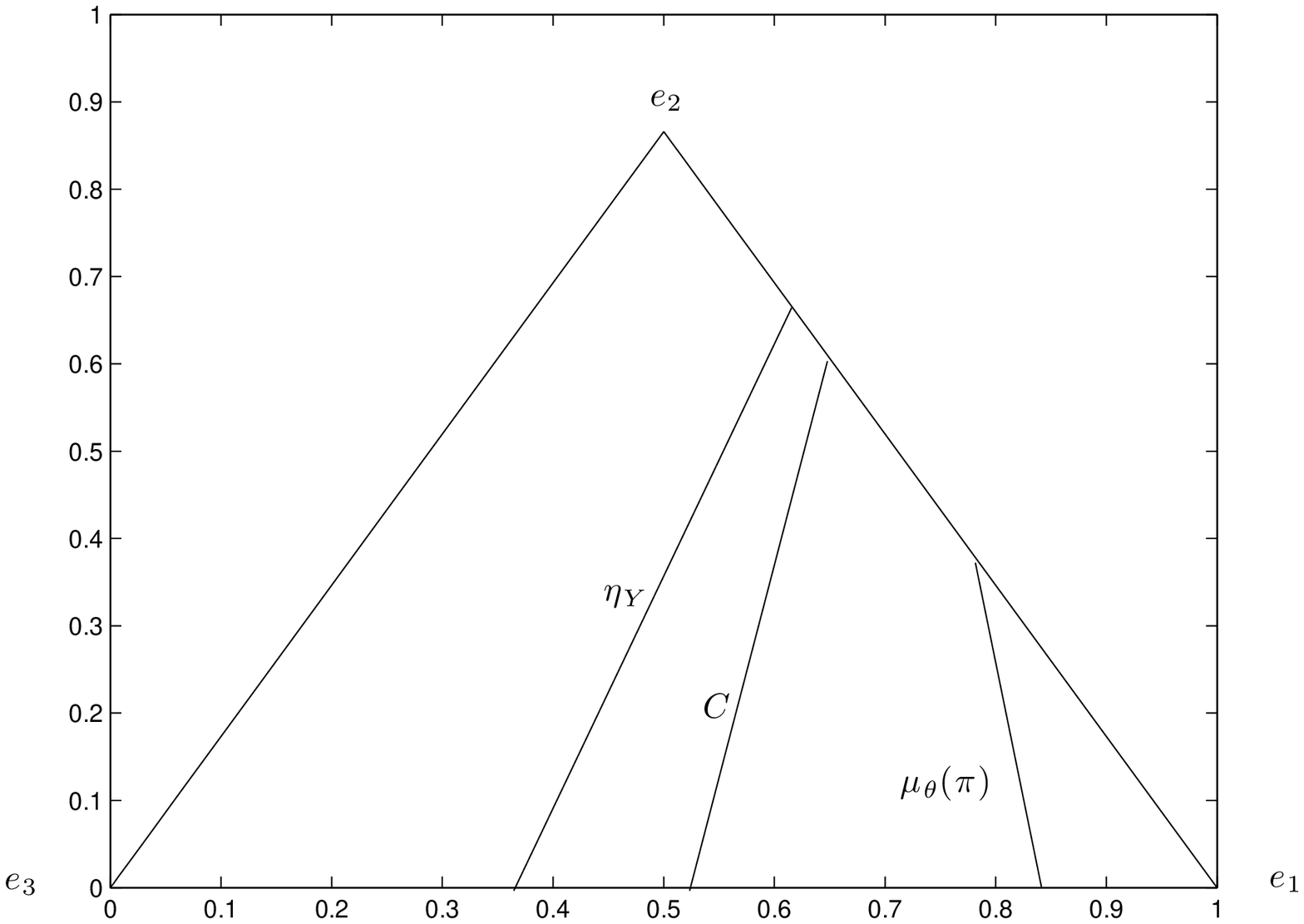,width=0.45\linewidth}} \quad
\subfigure[invalid]
{\epsfig{figure=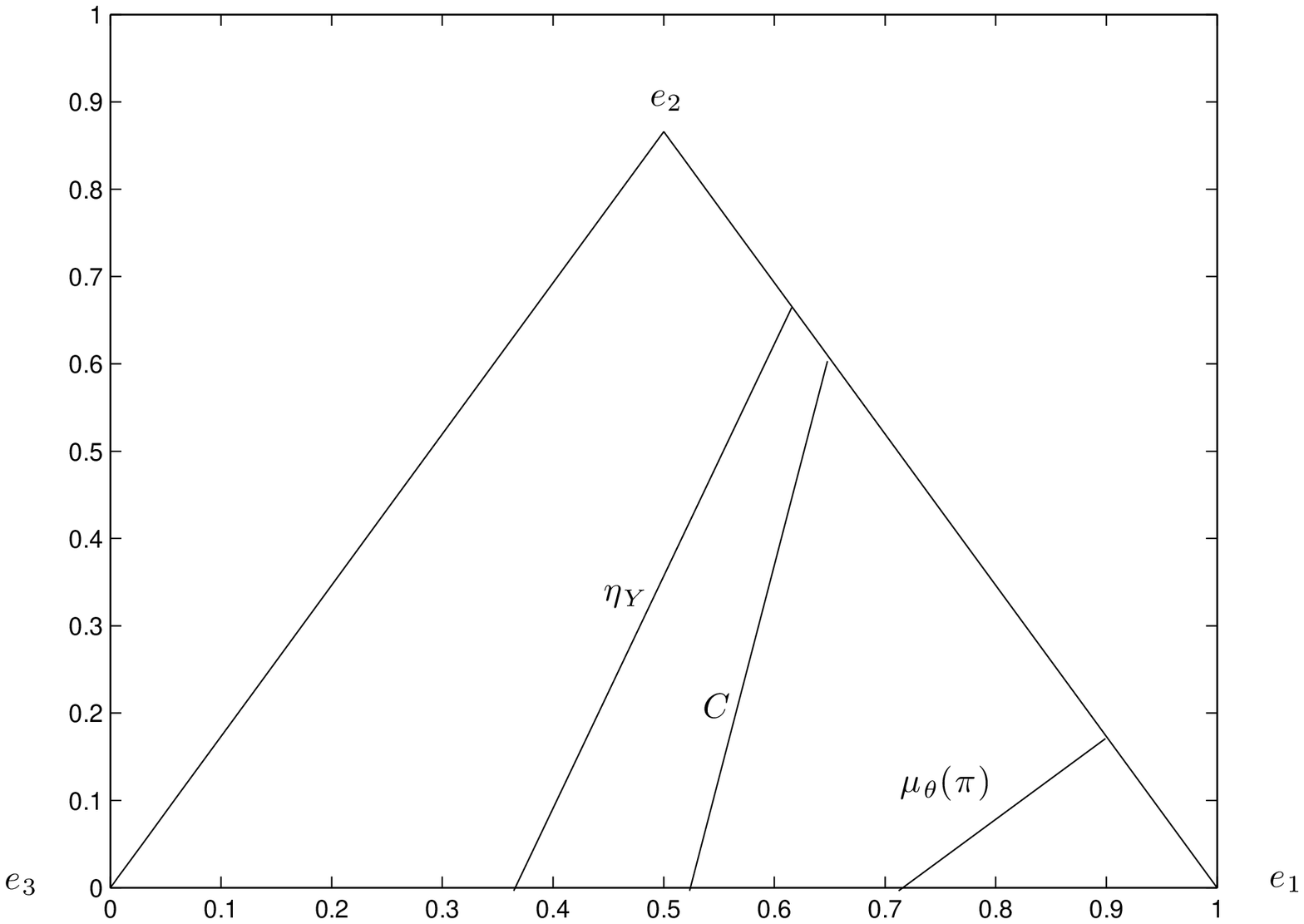,width=0.45\linewidth}}  }
\caption{Fig(a) illustrates a valid MLR increasing linear threshold policy. The linear threshold policy in Fig (b) violates the requirement that the policy $\mu_\theta(\pi)$ is MLR increasing since it has a slope less than $60^o$.
 In both figures, the region
to the right of line $\hy_Y$ is the polytope $\mathcal{P}_{Y+1}$. In the figures,
$C$ denotes  the hyperplane $C^\p \pi= 0$ which lies
in $\mathcal{P}_{Y+1}$ by Assumption (C3).}
\label{fig:validn}
\end{figure}

\subsubsection{Computation of Optimal Linear Threshold}
As a consequence of Theorem \ref{thm:dep}, the optimal
 linear threshold
approximation to  threshold curve $\Gamma$ of Theorem \ref{thm:1} is 
the solution of the following constrained optimization problem:
\beq
\theta^* = \arg \min_{{\theta} \in \reals^{X-1}} J_{\mu_{\theta}}(\pizero), \; \text{ subject to  $0 \leq \theta(i)\leq \theta(X-2)$, $\theta(X-2) \geq 1$ and $\theta(X-1)>0$
}
\label{eq:tc}\eeq
 where the cost $J_{\mu_{\theta}}(\pizero)$ is obtained as in  (\ref{eq:csdef})
by applying threshold policy
$\mu_{\theta}$ in (\ref{eq:linear}). 

  Because the cost $J_{\mu_\theta}(\pizero)$  in (\ref{eq:tc})  cannot be computed in closed form, we resort to simulation
 based stochastic optimization.  Let 
$n=1,2\ldots,$ denote iterations of the algorithm.  The aim is to  solve the following linearly constrained stochastic optimization problem:
\beq \text{  Compute } 
\theta^* = \arg\min_{\theta \in \Theta} \E\{ {J}_n(\mu_{\theta})  \} 
\;
\text{ subject to  $0 \leq \theta(i)\leq \theta(X-2)$, $\theta(X-2) \geq 1$ and $\theta(X-1)>0$.
}
 \label{eq:stochopt2}\eeq
Here, for each initial condition $\pi_0$, the 
sample path cost   $ {J}_n(\mu_{\theta},\pi_0) $ is evaluated as
\begin{align}
J_n(\mu_\theta,\pi_0)&=  \sum_{k=1}^\infty \discount^{k-1} C(\pi_k,u_k)  \quad
\text{ where } u_k = \mu_\theta(\pi_k) \text{ is computed via (\ref{eq:linear}) } \label{eq:jnmu}\\
J_n(\mu_\theta) &= \frac{1}{L} \sum_{l=1}^L J_n(\mu_\theta,\pi_0^{(l)}) 
\text { where prior $\pi_0^{(l)}$ is sampled uniformly from simplex $\I$.} \nonumber
\end{align}
A convenient way of sampling uniformly from $\I$ is to use 
the Dirichlet distribution (i.e., $\pi_0(i) = x_i/\sum_i x_i$, where $x_i \sim $ unit exponential distribution).

The above stochastic optimization problem is solved  by stochastic approximation
algorithms such as the
Simultaneous Perturbation Stochastic Approximation (SPSA) algorithm
\cite{Spa03}
which
converges to a local minimum; see \cite{Kri11} for a novel parametrization that deals with
the hypersphere constraints.
The stochastic gradient algorithm
converges to  local optima, so it is necessary
to try several initial conditions. The computational cost  at each
iteration is linear in the dimension of $\theta$ and is independent of 
 the observation alphabet size $Y$. Convergence  (w.p.1) can be established using
 techniques in~ \cite{KY02,KY03}.
More sophisticated methods than SPSA can also be used. For example,
\cite{BB02} uses the score function method to perform gradient-based reinforcement learning.
These algorithms are  applicable  to solve the constrained stochastic optimization
problem (\ref{eq:stochopt2}). Also, if the change time distribution (specified by $P$)
and the observation likelihoods (specified by  $B$) are not completely specified,
as long as the assumptions   Theorem \ref{thm:1} hold, then the  reinforcement learning
algorithms \cite{BB02} can be used to  solve (\ref{eq:stochopt2}).

\section{Multi-agent Quickest Time  Detection with Adaptive Sensing} \label{sec:sensor}
As mentioned in Sec.\ref{sec:intro}, the social learning protocol is very similar to multi-agent quickest time detection with a sensor manager (controller).
Motivated by sensor network applications,  this section describes the formulation and the main results.
The information patterns are similar to social learning and so the results developed in previous sections apply. The observations now can also belong to a continuum.

 Consider a countable number of agents indexed by $k=1,2,\ldots$.
Each agent acts once  in a predetermined sequential order indexed by $k=1,2,\ldots$ as follows: Based on the current
belief state $\pi_{k-1}$, agent $k$ acts as follows:
\begin{itemize} \item  Agent $k$ first chooses decision $u_k \in \{1 \text{ (stop) }, 2 \text{ (continue)}\}$.
If the agent decides to stop, then as in earlier sections, a  false alarm penalty is paid, and the problem terminates.
\item If agent $k$ chooses $u_k =2 $,   then it chooses its operating {\em mode}
$a_k \in \{1,2\}$ according to a built-in micro-manager. Agent $k$ then views the world according to this mode  -- that is, 
it  obtains  observation $y_k$ 
from a distribution that depends on mode $a_k$. 
It then communicates its belief state $\pi_k$ to the next agent.
\end{itemize}
{\em Remark}: An equivalent formulation is as follows: A single smart sensor  adapts its operating mode $a_k$ at each time $k$ based on the posterior
distribution of the underlying state at the previous time instant. How can quickest detection be achieved with this sensor?

  How can such a network of agents, where each agent makes autonomous micro-management decisions on its mode,  achieve quickest time detection? 
   The quickest time detector can be viewed as a macro-manager that operates on the belief states and micro-manager decisions. 
   Clearly the micro and macro-managers interact -- the  local decisions $a_k$ taken by the micro-manager determines $y_k$ which determines
   $\pi_k$ and hence determines decision $u_{k+1}$ of the quickest time macro-manager. 
  
\subsection{Micro-manager for Agent  Mode Selection}

\subsubsection{Costs and mode selection} As in  (\ref{eq:ca}), let $c_a$  denote the local cost of deploying sensor mode $a \in \A = \{1,2\}$. 
 To avoid trivial solutions,  as  in Sec.\ref{sec:assumptions}, we make the  submodular assumption (S).

Similar to the social learning formulation, the micro-manager picks local decision $a_k$ myopically as follows: Based on the belief state $\pi_{k-1}$ of the previous agent,
each  agent $k$  picks its  mode $a_k   \in \A=
\{ 1,2 \}$  of which sensor to deploy by minimizing its expected  predicted cost:  
\beq a_k = \arg\min_{a\in \{1,2\}} \E\{ c(x_k,a_k)|\F_{k-1}\} = \arg\min_{a\in \{1,2\}} c_a^\p P^\p \pi_{k-1} 
\label{eq:micro}
\eeq
where $\F_k$ denotes the filtration
$\sigma(y_l, l\leq k)$.
Define the convex polytopes $\mathcal{P}_1$
and $\mathcal{P}_2$ that partition $\I$ as
\beq \label{eq:polytopesched}
  \mathcal{P}_1 = \{\pi:  (c_1-c_2)^\p P^\p \pi \geq  0\}, \quad
   \mathcal{P}_2 = \{\pi:  (c_1-c_2)^\p P^\p \pi < 0\}
  \eeq
  Then  from (\ref{eq:micro}) it follows that for $\pi \in \mathcal{P}_1$, $a_k = 2$ and for $\pi \in \mathcal{P}_2$, $a_k = 1$.

\subsubsection{Mode dependent observations}
The agent then makes an 
observation $y_k$ 
depending on its choice of mode  $a_k$. 
Based on its mode $a_k$ in (\ref{eq:micro}), agent $k$ then obtains an observation from  conditional probability distribution  
\beq P(y_k \leq \bar{y} | x_k = e_x, a_k = a) = \sum_{y \leq \bar{y}} B^{(a)}_{xy}  , \quad x \in \X, a \in \{1,2\} . \label{eq:obsa}\eeq
Here $\sum_y$ denotes integration with respect to the Lebesgue measure
(in which case $\Y\subset \reals$ and $B_{xy}$ is the conditional probability density function)
or counting measure (in which case $\Y$ is a subset of the integers and $B_{xy}$ is the conditional probability
mass function $B_{xy}= P(y_k=y|x_k=x)$).
The key point  is that unlike classical quickest detection, each agent  now views the world based on its selected mode $a_k$.

Let $T^{(a)}(\pi,y)$ denote the belief state update if mode $a$ is chosen and measurement $y$ obtained. It is given by the HMM filter (\ref{eq:privu}) with mode dependent probabilities
$B^{(a)}_y = \text{diag}(P(y|x,a),x\in \X)$. That is,
\beq T^{(a)}(\pi,y) = B^{(a)}_y P^\p \pi/\sigma(\pi,y), \quad \sigma(\pi,y) = \ones^\p B_y^{(a)} P^\p \pi. \label{eq:pia}\eeq

\begin{figure}
\centering \epsfig{figure=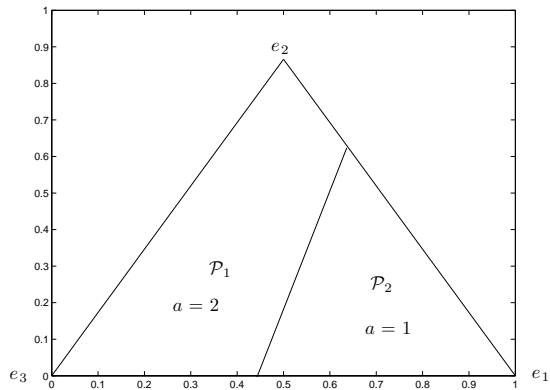,width=0.45\linewidth}
\caption{The figure illustrates the setup in Sec.\ref{sec:sensor}.
 The mode dependent  observation probabilities $B^{(a)}$, $a \in \{1,2\}$ are chosen depending on
the belief state $\pi$ in polytope  $\mathcal{P}_2$ or  $\mathcal{P}_1$ defined in (\ref{eq:polytopesched}). The aim is to
perform quickest detection given this mode dependent observation probability constraint.}
\end{figure}

\subsection{Macro-Manager for Quickest Time Detection}

Below we present the  assumptions and main result.
Based on the above micro-manager protocol, the aim is to perform quickest time change detection. So the quickest detection problem
can be viewed as optimizing the cost function (\ref{eq:costdef}) subject to the constraint that the belief state evolves according
to (\ref{eq:pia}).
The setup is identical to that in Sec.\ref{sec:qdform} and Sec.\ref{sec:dp}. For $k< \tau$, agents choose $u=2 \text{ (continue) }$ and
at $k = \tau$, agent $k$ picks $u_k = 1 \text{ ( declares a change and stop) } $.  The optimal policy
$\mu^*(\pi)$ of the macro-manager satisfies Bellman's equation (\ref{eq:dp_alg}).

The following theorems mimic the results for the social learning based quickest detection
problem,  and their proofs are identical.

\subsubsection{Blackwell Dominance} Suppose the mode dependent observation matrices are of the form $B^{(a)} =  B Q^{(a)}$ where $B$ and $Q^{(a)}$, $a = 1,2$ are stochastic kernels. Then an identical proof to Theorem \ref{thm:blackwell} shows that
 classical quickest detection with observation matrix $B$ always yields a lower cost than mode dependent quickest detection
with observation matrices $B^{(a)}$, where the mode $a$ is chosen according to any arbitrary strategy.

\subsubsection{Threshold Policies} 
Consider the following assumptions that are similar to (C1)  in Sec.\ref{sec:explicit} and  (C2) in Sec.\ref{sec:tcurve}.
Recall vertices $\ver_j$ are defined in (\ref{eq:ver}) and $\bver_j$ denote vertices of  hyperplane $(c_1-c_2)^\p P^\p \pi= 0$.
\begin{itemize}
\item[(C1)] If $\{ \pi: C^\p \pi = 0\}$ lies in one of the polytopes $\mathcal{P}_a$, then $C^\p B_y^{(a)} P^\p \ver_j \geq 0$,  $ j=1,\ldots, X-1$ for all $y \in \Y$.
\item[(C2)] $(c_1- c_2) ^\p  P^\p B_y^{(1)} P^\p \bver_j \leq  0$, $j=1,\ldots,X-1$, $y\in \Y$.
\end{itemize}

We have the following result regarding the structure of  $\mu^*(\pi)$ for quickest time
detection.

\begin{theorem}\label{thm:manga}
Theorems \ref{thm:cpi} and \ref{thm:1} hold for the optimal quickest time decision policy $\mu^*(\pi)$ of the macro-manager.
Also Theorem \ref{thm:dep} holds for MLR policies and computation of the optimal linear threshold can be formulated 
as the stochastic optimization problem (\ref{eq:jnmu}).
\end{theorem}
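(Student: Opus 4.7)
The plan is to observe that the adaptive sensing setup has the same structural character as social learning: in both cases, the belief space $\I$ is partitioned into polytopes on which the effective observation kernel governing the Bayesian update is a constant matrix, and the filter is discontinuous only across polytope boundaries. Specifically, by (\ref{eq:micro}) and (\ref{eq:polytopesched}), for $\pi \in \mathcal{P}_a$ the agent selects mode $a$ deterministically, so the macro-manager's Bellman equation reads $V(\pi)=\min\{0,\;C^\p \pi + \rho \sum_y V(T^{(a(\pi))}(\pi,y))\sigma^{(a(\pi))}(\pi,y)\}$, where $a(\pi)=a$ on $\mathcal{P}_a$. This is structurally identical to (\ref{eq:dp_alg}) with the role of $\Bsl$ played by $B^{(a)}$, and with the $Y+1$ polytopes of Theorem~\ref{lem:polytopes1} replaced by the two polytopes $\mathcal{P}_1,\mathcal{P}_2$ in (\ref{eq:polytopesched}). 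Thus each structural result transfers by re-running the corresponding proof with this substitution.

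For the analog of Theorem~\ref{thm:cpi}, I would proceed exactly as in Appendix~\ref{app:cpi}: show that under (C1) the polytope $S=\{\pi:C^\p\pi\geq 0\}$ is invariant under every map $\pi\mapsto T^{(a)}(\pi,y)$ by checking the inequality at the vertices $\ver_j$ and using TP2 preservation of the MLR order, then conclude that on $S$ stopping is myopically optimal and $Q(\pi,2)\geq 0=Q(\pi,1)$. For the analog of Theorem~\ref{thm:1}, condition (C2) guarantees that the polytope containing the hyperplane $C^\p\pi=0$ is invariant under $T^{(a)}$; on that polytope the single effective observation matrix is fixed, so $V(\pi)$ is the value function of a standard POMDP stopping problem and is concave, giving convexity of $\Stop$. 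The MLR-monotonicity claim (\ref{eq:structure}) then follows from the same lattice/submodularity argument used in Appendix~\ref{app:thm1}, because TP2-ness of $B^{(a)}$ and $P$ still implies that $T^{(a)}(\pi,y)$ is MLR-monotone in $\pi$ on the invariant polytope.

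For the analog of Theorem~\ref{thm:dep}, no adaptation is needed: that result is purely a geometric characterization of which linear hyperplanes in $\I$ induce policies that are monotone with respect to the MLR order along the lines $\l(e_1,\bp)$ and $\l(e_X,\bp)$. It does not invoke any property of the filter, so its statement and proof apply verbatim to the present setting. Consequently the optimal linear threshold $\theta^*$ is the solution of the constrained problem (\ref{eq:tc}), which can be tackled by the same SPSA-style stochastic approximation scheme using sample-path costs $J_n(\mu_\theta,\pi_0)$ generated as in (\ref{eq:jnmu}) with the micro-manager rule (\ref{eq:micro}) and the mode-dependent filter (\ref{eq:pia}).

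The main obstacle I anticipate is handling the case where $\Y$ is a continuum: the sums over $y$ in the social-learning proofs become integrals, which requires verifying that the Blackwell/TP2/MLR-preservation arguments (written for finite $\Y$ in Appendices~\ref{sec:pblackwell}, \ref{sec:polytopes1}, \ref{app:thm1}) still apply. This is essentially routine, since TP2 is defined pointwise on densities and the concavity-preservation and MLR-preservation steps go through by monotone convergence, but it should be explicitly noted. A secondary point is that (C1) and (C2) as stated in the theorem involve $B_y^{(a)}$ for each $y$, which is slightly stronger than needed; the vertex check in fact only requires the inequality in expectation over $y$, so the argument is robust to this distinction.
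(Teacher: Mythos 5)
Your proposal matches the paper's argument: the paper simply asserts that the proofs are identical after substituting $B^{(a)}$ for $\Bsl$ and the two-polytope partition (\ref{eq:polytopesched}) for the $Y+1$ polytopes of Theorem \ref{lem:polytopes1}, and indeed the proof of Theorem \ref{thm:1} in Appendix \ref{app:thm1} is deliberately written for an arbitrary TP2 matrix on the invariant polytope precisely so that it covers the sensing case, while the continuum-$\Y$ issue you raise is disposed of by the lemma reducing (C1), (C2) to a check at $y_{\max}$. One small correction to your closing aside: the vertex inequalities must hold for every $y$ (each realized posterior $T^{(a)}(\ver_j,y)$ must individually remain in the invariant set for the value-iteration induction to close), not merely in expectation over $y$; the paper's reduction to a single observation value comes from MLR monotonicity of $T^{(a)}(\pi,y)$ in $y$ together with (A3), not from averaging.
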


 (C1) and (C2) are relatively easy to check even if $y \in \Y$ is continuum as shown below.
For all $x$, let $y_{\text{max}}$ denote the maximum support of the distribution $B^{(1)}_{xy}$, i.e., 
$y_{max} = \sup \{y: B^{(1)}_{xy} > 0 \}$.

 \begin{lemma} (C1), (C2) hold if their inequalities hold  for $y=y_\text{max}$.
 \end{lemma}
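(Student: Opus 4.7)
The plan is to exploit the TP2 structure of the mode-dependent observation matrix $B^{(a)}$ (the natural analogue of Assumption (A1) in this setting) to show that the two quadratic forms appearing in (C1) and (C2) are \emph{monotone} functions of $y$. The direction of monotonicity will be such that the extremal (worst-case) value is attained at $y=y_{\max}$, so checking the inequalities at $y=y_{\max}$ suffices.

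\paragraph{Part (C1).}
First I would factor the expression as
\[
C^\p B^{(a)}_y P^\p \nu_j \;=\; B^{(a)}_{1y}\,\Bigl[\, C_1 (P^\p \nu_j)_1 \,+\, \sum_{i\ge 2} C_i\,\tfrac{B^{(a)}_{iy}}{B^{(a)}_{1y}}\,(P^\p\nu_j)_i\,\Bigr].
\]
By the TP2 assumption on $B^{(a)}$, each ratio $B^{(a)}_{iy}/B^{(a)}_{1y}$ is non-decreasing in $y$ for every $i\ge 2$. To determine the sign of the coefficients multiplying these ratios, I would use (A3) and (PH)(i) to conclude $C_1>0$ and $C_i<0$ for $i\ge 2$, and then use the block form of $P$ in (\ref{eq:phmatrix}) together with the explicit formula (\ref{eq:ver}) for $\nu_j$ to check that $(P^\p\nu_j)_k\ge 0$ for all $k$ (since $P_{11}=1$, $P_{1k}=0$ for $k\ge 2$, and both components of $\nu_j$ are positive). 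Thus $C_i(P^\p\nu_j)_i\le 0$ for $i\ge 2$, so the bracketed quantity is non-increasing in $y$, attaining its minimum at $y=y_{\max}$. Since $B^{(a)}_{1y}>0$, the sign of the whole expression agrees with that of the bracket, and (C1) at $y_{\max}$ propagates to all $y$ in the support.

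\paragraph{Part (C2).}
Set $v=P(c_1-c_2)$ so that the inequality in (C2) reads $v^\p B^{(1)}_y P^\p\bar\nu_j\le 0$. I would factor as
\[
v^\p B^{(1)}_y P^\p \bar\nu_j \;=\; B^{(1)}_{1y}\,\Bigl[\,v_1(P^\p\bar\nu_j)_1 \,+\, \sum_{i\ge 2} v_i\,\tfrac{B^{(1)}_{iy}}{B^{(1)}_{1y}}\,(P^\p\bar\nu_j)_i\,\Bigr].
\]
Using (S) together with $P_{11}=1$, $P_{1k}=0$ for $k\ge 2$, one has $v_1=(c_1-c_2)_1<0$. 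Since $\bar\nu_j$ lies on the hyperplane $v^\p\pi=0$ and is a valid vertex inside the simplex, it must be supported on the edge joining $e_1$ to some $e_{j+1}$ with $v_{j+1}>0$ and both barycentric coefficients positive; again by the structure of $P$, $(P^\p\bar\nu_j)_k\ge 0$ for all $k$. Hence $v_i(P^\p\bar\nu_j)_i\ge 0$ for $i\ge 2$, and TP2 now forces the bracketed expression to be \emph{non-decreasing} in $y$. Its maximum over the support is therefore attained at $y=y_{\max}$, so (C2) at $y_{\max}$ propagates to all $y$.

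\paragraph{Main obstacle.}
The routine part is the TP2 monotonicity; the real bookkeeping is the sign analysis. I expect the main obstacle to be verifying that the components of $P^\p\nu_j$ and $P^\p\bar\nu_j$ are non-negative and that $v_{j+1}>0$ for every index $j$ giving a genuine vertex, because these facts are what let (A1), (A3), (S), (PH) work in concert. Once the sign pattern of the coefficients $C_i(P^\p\nu_j)_i$ and $v_i(P^\p\bar\nu_j)_i$ is pinned down, the MLR monotonicity induced by TP2 yields both reductions to $y=y_{\max}$ in a single stroke; the continuous-$y$ case requires no separate argument since TP2 and monotonicity of the likelihood ratio are defined identically for densities and mass functions.
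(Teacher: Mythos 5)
Your overall strategy---use the TP2 structure to show that the relevant quadratic form is monotone in $y$ and thereby reduce the check to the extreme observation $y_{\max}$---is exactly the paper's, and your treatment of (C1) is essentially correct: $C_1=d>0$, $C_i<0$ for $i\ge 2$ by (PH)(i), and the components of $P^\p\ver_j$ are trivially non-negative (a stochastic matrix applied to a probability vector), so each term $C_i\,(B^{(a)}_{iy}/B^{(a)}_{1y})\,(P^\p\ver_j)_i$ is non-increasing in $y$ and the reduction to $y_{\max}$ follows. The paper reaches the same conclusion by normalizing with $\sigma(\ver_j,y)$ instead of $B^{(a)}_{1y}$, so that the quantity of interest has the sign of $C^\p T^{(a)}(\ver_j,y)$, which is decreasing in $y$ because the posterior is MLR increasing in $y$ (Theorem \ref{thm:key}(4)) and $C$ has decreasing elements (A3); the two routes are interchangeable for (C1).

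Your argument for (C2), however, has a genuine gap. With $v=P(c_1-c_2)$, your factorization needs $v_i\ge 0$ for all $i\ge 2$ in order for each term $v_i\,(B^{(1)}_{iy}/B^{(1)}_{1y})\,(P^\p\bver_j)_i$ to be non-decreasing in $y$; the non-negativity of $(P^\p\bver_j)_i$, which you do establish, does not give this. In fact $v_i=P_{i1}(c_1-c_2)_1+(1-P_{i1})(c_1-c_2)_2$ is a convex combination of a negative number ($(c_1-c_2)_1<0$ by (S)) and a positive one, and is negative whenever $P_{i1}$ is large enough; nothing in (S), (A2) or (PH) excludes this, and when some $v_i<0$ your bracket is no longer monotone in $y$. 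The fix is to normalize by $\sigma(\bver_j,y)$ rather than by $B^{(1)}_{1y}$: the expression in (C2) has the same sign as $(c_1-c_2)^\p P^\p T^{(1)}(\bver_j,y)$, the vector $v=P(c_1-c_2)$ has \emph{non-decreasing} components (this does follow from (S) together with TP2 of $P$, since $P_{i1}$ is non-increasing in $i$), and $T^{(1)}(\bver_j,y)$ is MLR---hence first-order stochastically---increasing in $y$, so Result \ref{res1}(ii) gives that the inner product is non-decreasing in $y$ and $\le 0$ at $y_{\max}$ propagates to all smaller $y$. This is the paper's (C1) argument transposed to (C2); what it requires is the ordering of the components of $v$, not their individual signs.
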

Thus only a finite number of inequalities  need to be verified. 
In particular for a Gaussian distribution, since $y_\text{max} = \infty$, the
filter $T^{(1)}(\pi,\infty)$ becomes
the Bayesian predictor $P^\p \pi$. So it suffices to check
that $C^\p P^\p \ver_j \geq 0 $ for (C1) to hold.

\begin{proof} Consider (C1).
$ C^\p B^{(a)}_y P^\p \ver_j \geq 0 $ is equivalent to verifying 
$C^\p B^{(a)}_y P^\p \ver_j /\sigma(\ver_j,y) \geq 0$ since $\sigma(\pi,y)$ is non-negative for all $\pi \in \I$. So we need to
check that 
$C^\p T^{(a)}(\ver_j,y) \geq 0$ for all $y \in \Y$.
But since $P$ and $B$ are TP2 according to Assumptions (A1), (A2), from Theorem \ref{thm:key}(4) in Appendix \ref{app:thm1}, the belief state update $T^{(a)}(\pi,y)$ is MLR increasing in $y$.
Moreover by (A3) $C$ has  decreasing elements. Therefore from Result \ref{res1} in Appendix \ref{sec:mlrdef}, 
$C^\p T^{(a)}(\pi,y) $ is decreasing in $y$.
So it suffices
to check that 
$C^\p T^{(a)}(\ver_j,y_\text{max}) \geq 0$.
\end{proof}

\section{Numerical Results} \label{sec:numerical}

In addition to the numerical examples presented earlier, this section presents two numerical
examples. The first example illustrates the multiple threshold policies inherent in social learning (this example
was mentioned in Sec.\ref{sec:intro}). The second example illustrates the optimal threshold curve for
a PH-type distributed change time that was proved in Theorem \ref{thm:1}.

{\bf Example 1.  Geometric Distributed Change Time}: 
This examples illustrates the existence
of a triple threshold policy for quickest time change detection when the change time $\tau^0$ is geometrically distributed.
We chose the social learning model with parameters
$\X = \{1,2\}$ (so $\I=[0,1]$ is a one dimensional simplex), $\Y=\{1,2,3\}$, $\A=\{1 ,2 \}$, 
$$ B = \begin{bmatrix}  0.9  & 0.1  \\  0.1 & 0.9 \end{bmatrix},
\quad E\{\tau^0\} = 20 \implies P =  \begin{bmatrix} 1 &  0 \\  0.05 & 0.95 \end{bmatrix}, \;
c = \begin{bmatrix} 
1 & 2 \\ -1 & -3.57
 \end{bmatrix} .
$$
For the global quickest time detection parameters
we chose $\rho =0.99$,  delay $d=1.25$, false alarm vector $\f=3 e_2$ (i.e., $f_2 = 3$).
It is easily checked that (A1), (A2) and (S)  hold.

The optimal policy $\mu^*(\pi)$ is shown in Fig.\ref{fig:redgreen}(a) and comprises of a triple threshold
policy.
It was computed by constructing
a uniform grid of 500 points for $\pi(2)\in [0,1]$ and then implementing the value iteration algorithm (\ref{eq:vi}) for a horizon of $N=200$. The `x' in Fig.\ref{fig:redgreen}(a) and (b) are the values
of $\hy_2(2)$, $\q(2)$ and $\hy_1(2)$, respectively.

{\bf Example 2. Phase Distributed Change Time}: 
This examples illustrates Theorem \ref{thm:1} which proved the existence
of a single threshold curve  for social learning based quickest time change detection 
with PH-distributed change time.
We model the PH-distribution via a 3-state Markov chain.  So the belief space $\I$ is a two dimensional simplex (equilateral
triangle) and can
be visualized easily.

We chose the social learning model with parameters
 $\X = \{1,2,3\}$ $\Y=\{1,2,3,4,5\}$, $\A=\{1,2\}$. The
  observation probabilities and  local decision costs were chosen as 
 $$
\begin{matrix}  B_{1,y} \propto \exp(-(y-1)^2/6) \\  B_{2,y} = B_{3,y}\propto \exp(-(y-5)^2/6)
\end{matrix}  ,\quad 
c = (c(i,a)) = \begin{bmatrix} 4 & 50 \\ 2 & 0 \\ 2 & 0 \end{bmatrix} .$$
 
The global costs for quickest detection in (\ref{eq:cp1}) and (\ref{eq:cp2}) were chosen as $d = 1.5$, $\f=[0 \; 20\;  25]^\p$ and
discount factor $\rho = 0.9$.

The PH-distributed change times were modelled by the 3 state Markov chain with transition probability $P$.
 To illustrate the quickest time detection, we chose
4 candidate transition probability matrices, namely,
$$ P^{(1)} =  \begin{bmatrix} 1 & 0 & 0 \\  0.1 & 0.9 & 0 \\  0.1&  0.9&  0 \end{bmatrix},
P^{(2)} = \begin{bmatrix} 1 & 0 & 0 \\  0.1 & 0.5 & 0.4 \\  0&  0.1&  0.9 \end{bmatrix},
P^{(3)} = \begin{bmatrix} 1 & 0 & 0\\  0.1 & 0.7 &  0.2\\  0 & 0.4& 0.6 \end{bmatrix},
P^{(4)} =\begin{bmatrix} 1 &  0  & 0\\  0.1 & 0.45 & 0.45 \\ 0.05 & 0.40 & 0.55\end{bmatrix}.
$$
Note $P^{(1)}$ models the geometric distribution since states 2 and 3 are indistinguishable -- in fact
it is exactly lumpable \cite{KS60} into the 2 state Markov chain with transition matrix $\begin{bmatrix} 1 & 0 \\
0.1 & 0.9 \end{bmatrix}$. 

Fig.\ref{fig:phaseplot} plots the probability mass function $\nu_k$ (see (\ref{eq:nu})) of the  PH-distributed change time 
$\tau^0$ for these four
transition matrices for $\bar{\pi}_0 = [0.03,\;0.97]^\p$. Fig.\ref{fig:phaseplot} shows these
PH-distributions are quite different in behavior to a geometric distribution -- they are non-monotone and 
have heavier tails.

\begin{figure} \centering
{\epsfig{figure=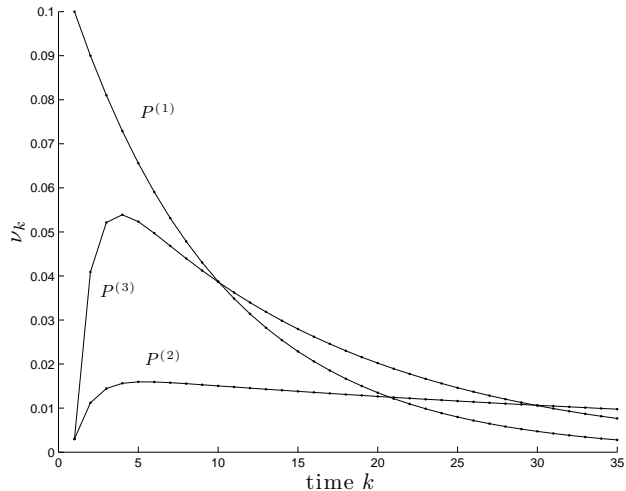,width=0.5\linewidth}} 
\caption{Plots of change time $\tau^0$ probability mass function $\nu_k$ in (\ref{eq:nu}) for $P^{(1)}$ (geometric distribution)
and  $P^{(2)}$, $P^{(3)}$  (phase-type distributions).
 } \label{fig:phaseplot}
\end{figure}

It is easily checked that (A1), (A2), (A3), (S), (PH), (C2) and (C3) of Theorem \ref{thm:1} hold.
Fig.\ref{fig:3stateplots} shows the optimal decision policies for these four cases with the stopping
set $\mathcal{S}$ shaded.
 The optimal policy was computed as follows.
A $50 \times 50$ grid of  $(\pi(1),\pi(2) )$ values was formed within the 2-dimensional unit simplex $\I$.Then the
  value iteration algorithm (\ref{eq:vi}) was solved for horizon $N=200$ (in all cases $\|V_{200}(\pi) - V_{199}(\pi)\|_\infty< 10^{-15}$ implying
  that the value iteration algorithm converged).
 In all 4 cases,  the optimal decision policy is characterized
by a single threshold curve in polytope $\mathcal{P}_6$. This is consistent with Theorem \ref{thm:1}.

 In each plot of Fig.\ref{fig:3stateplots} also shows the
hyperplanes $C^\p \pi = 0$ (defined in (\ref{eq:costdef})) and $\hy_5$ (defined in (\ref{eq:hy}).  The polytope $\mathcal{P}_6$ is to the right of hyperplane $\hy_5$.
The remaining  line segments from left to right are $\hy_1,\ldots,\hy_4$.
Note that  hyperplane $ C^\p \pi = 0$ 
lies in $\mathcal{P}_6$, thereby satisfying Assumption (PH) and (C3).

Actually cases $P^{(2)}$ and $P^{(3)}$ satisfy Assumptions (C1), and (PH) and so Theorem \ref{thm:cpi} holds.
Therefore, for these two cases, the optimal threshold curve is the linear hyperplane $C^\p \pi = 0$ as can be seen in Fig.\ref{fig:3stateplots}.

\begin{figure} \centering
\mbox{\subfigure[$P^{(1)}$]
{\epsfig{figure=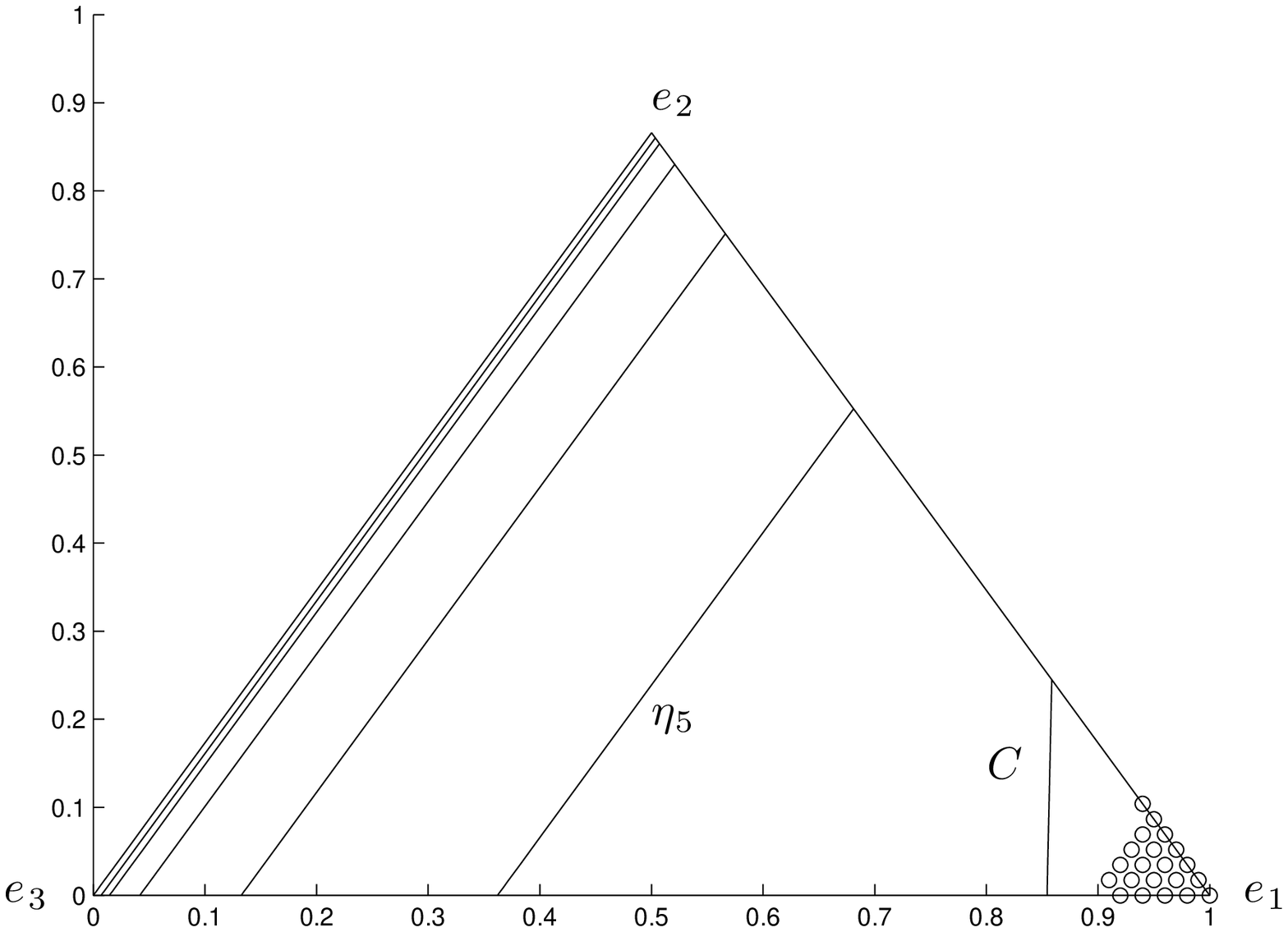,width=0.45\linewidth}}  \quad
\subfigure[$P^{(2)}$]{\epsfig{figure=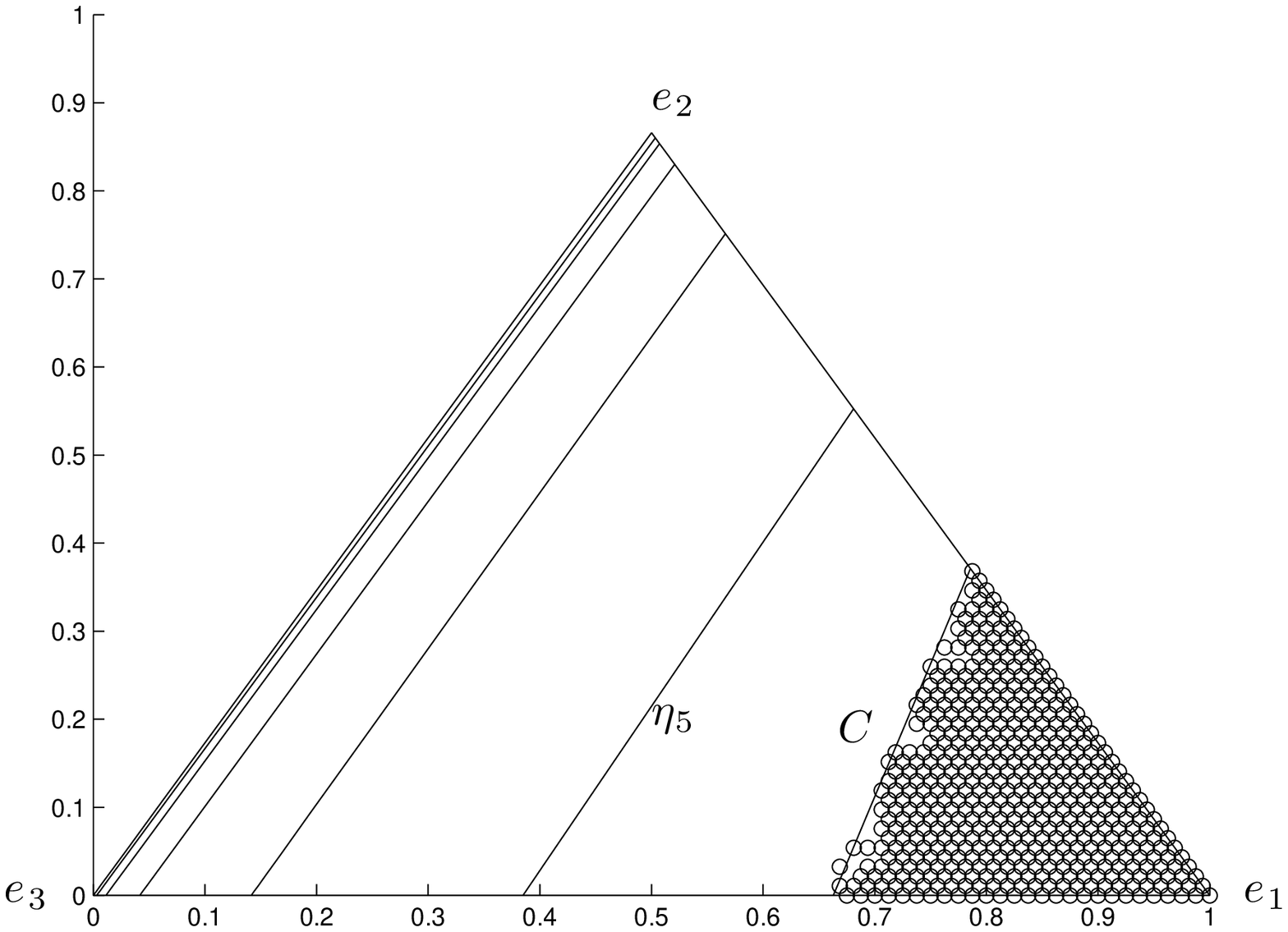,width=0.45\linewidth}}} \\
\mbox{\subfigure[$P^{(3)}$]
{\epsfig{figure=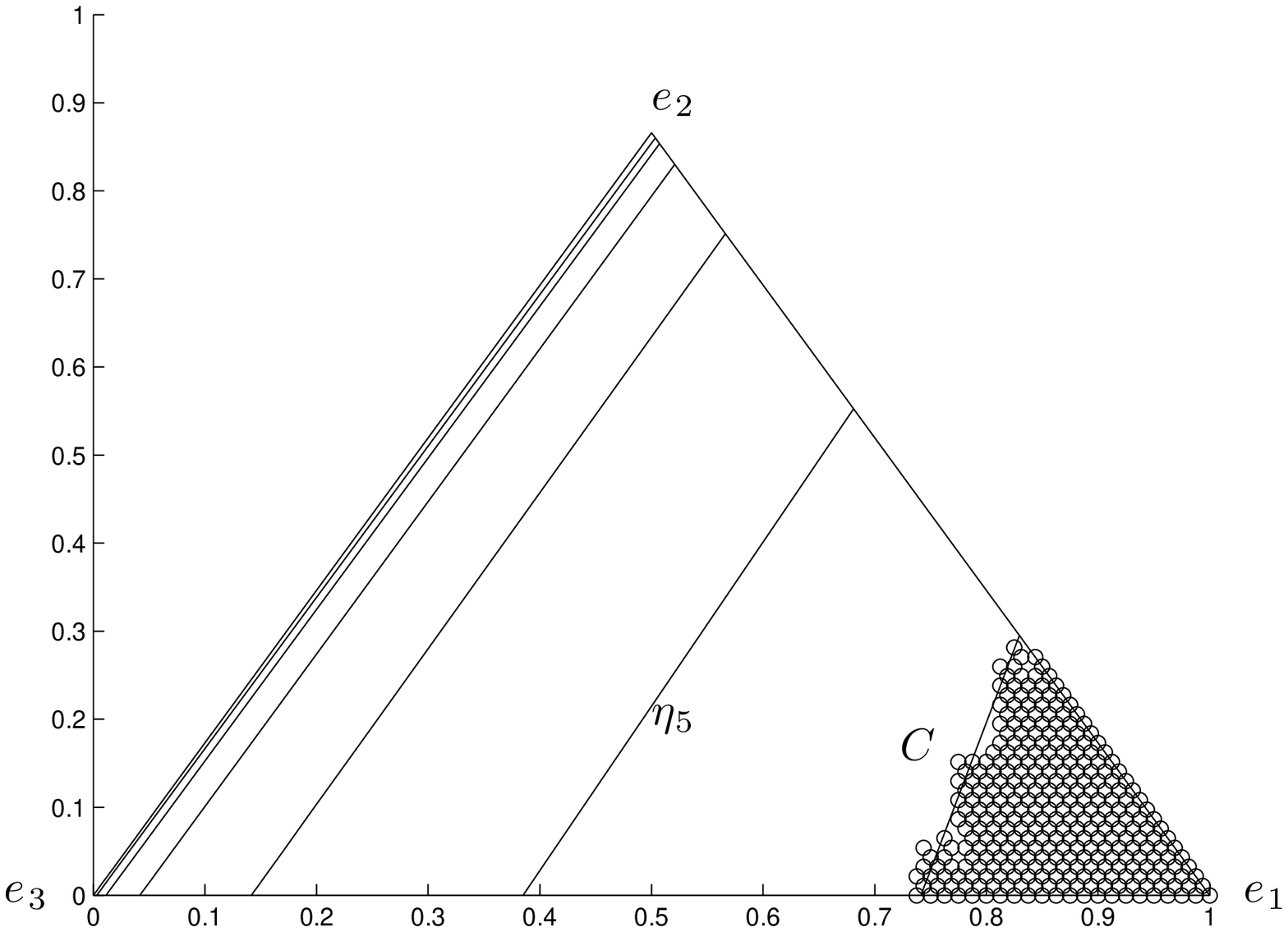,width=0.45\linewidth}}  \quad
\subfigure[$P^{(4)}$]{\epsfig{figure=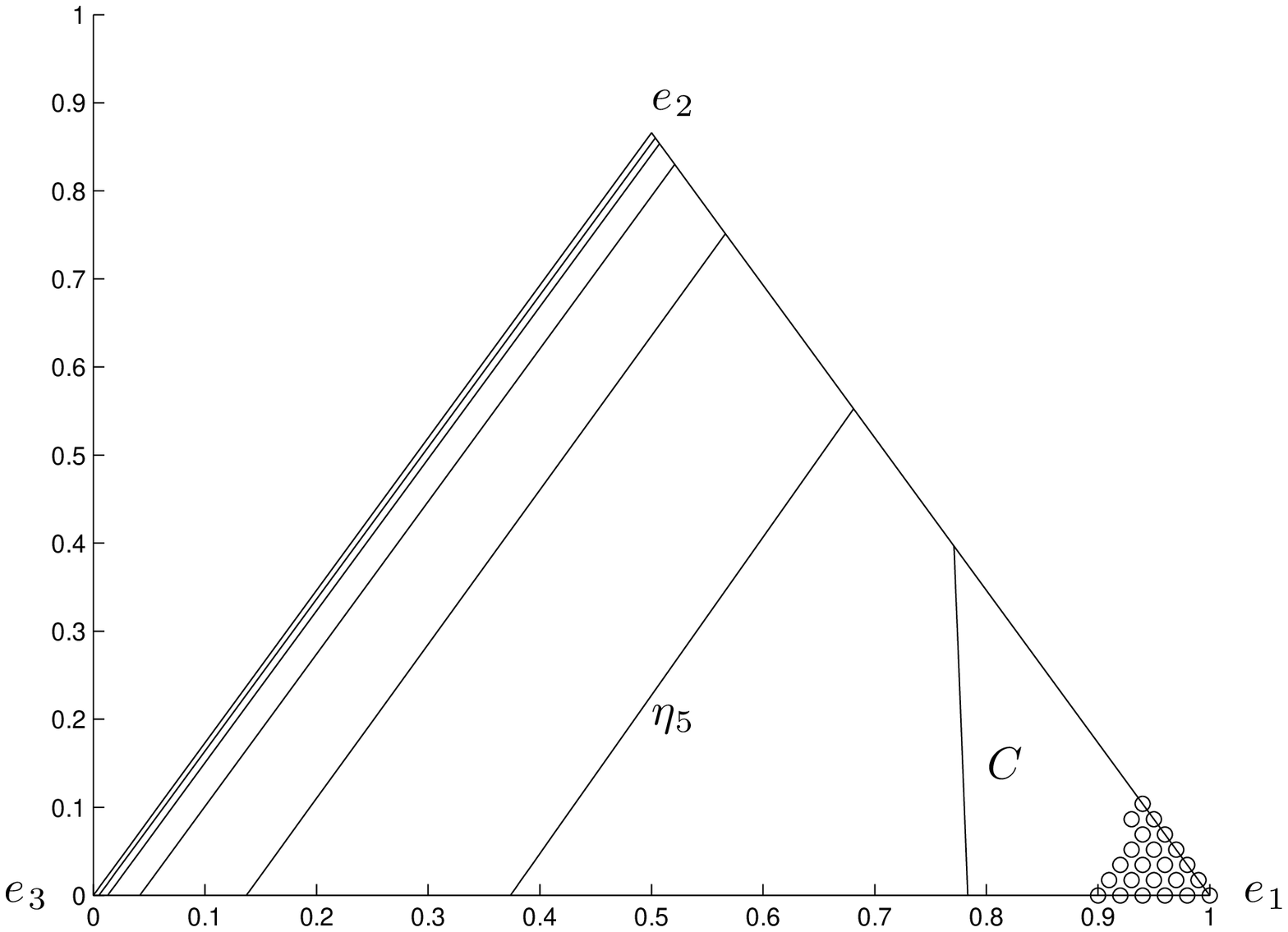,width=0.45\linewidth}}}
\caption{Optimal decision policy for quickest time change time with geometric probability mass function for geometric distribution (transition probability $P^{(1)}$), 
and  phase-type distributions (transition probabilities $P^{(2)}$, $P^{(3)}$ and $P^{(4)}$). The shaded region
depicts the stopping set $\Stop$ in (\ref{eq:stopset}). The parameters are specified in Example 2 of 
Sec.\ref{sec:numerical}.
 } \label{fig:3stateplots}
\end{figure}

\section{Conclusions} Motivated by understanding how local and global decision making interact, 
this paper has presented structural results for quickest time detection when agents
 perform social learning. Also a related model incorporating multi-agent sensor scheduling and quickest time detection
was considered. Unlike classical quickest detection, the optimal policy can have multiple thresholds.
Four main results were presented.
First, Theorem \ref{thm:blackwell}  showed using Blackwell dominance of measures  that social learning based quickest detection always results in more expensive cost
compared to classical quickest detection.
Second, for symmetric observation probabilities and  geometric change times, the explicit
multi-threshold behavior of social learning based quickest detection was characterized in Theorem \ref{thm:ep0} by approximating with a simpler detection problem.
Third, quickest time change detection for more general PH-type distributed change times was considered.
Theorem \ref{thm:cpi} gave sufficient conditions for the optimal policy to be characterized by a single linear hyperplane in the multi-dimensional
simplex of posterior distributions.
Finally, using lattice programming and likelihood ratio dominance Theorem \ref{thm:1} gave sufficient conditions for the optimal policy to be characterized by a single switching curve. The
optimal linear approximation to this curve (that preserves the MLR monotone nature of the policy) was characterized in Theorem \ref{thm:dep}.

The results of this paper are straightforwardly extended to more general stopping problems where the underlying Markov state does not
have an absorbing state, as long as the transition matrix satisfies assumption (A2). In current work, we are using similar social learning models
for ``order-book" trades in agent based models for algorithmic market making, see also \cite{PS11}.

\appendix

\section{Proofs of Theorems} \label{sec:proofs}

\subsection{Preliminaries: Stochastic Dominance,  Submodularity} \label{sec:mlrdef}
Excellent
background references for 
 stochastic dominance and lattice programming are \cite{Top98,Kij97,MS02,KR80}.
The proofs of Theorem \ref{lem:polytopes1} and 
Theorem \ref{thm:1}  require concepts in stochastic dominance.
In particular,  Statement (iv) of Theorem \ref{thm:1}  states
that the optimal social policy $\mu^*(\pi)$ is monotonically increasing in belief
state $\pi$. 
 In order to compare  belief states
$\pi$ and $\tpi$, 
we will use the monotone likelihood ratio (MLR)
stochastic ordering and a specialized version of the MLR order restricted to lines in
the simplex $\I$.
The MLR order is  useful for social
learning 
 since it is preserved
 after conditioning  \cite{Rie91,KR80,MS02}. 
 
\begin{definition}[MLR ordering, {\cite[pp.12--15]{MS02}}]
Let $\pi_1, \pi_2 \in \I$ be any two belief state vectors.
Then $\pi_1$ is greater than $\pi_2$ with respect to the MLR ordering -- denoted as
$\pi_1 \gr \pi_2$,
 if 
\beq \pi_1(i) \pi_2(j) \leq \pi_2(i) \pi_1(j), \quad i < j, i,j\in \{1,\ldots,X\}. 
\label{eq:mlrorder}\eeq
\end{definition}

\begin{definition}[First order stochastic dominance]
 Let $\pi_1 ,\pi_2 \in \I$.
Then $\pi_1$ first order stochastically dominates $\pi_2$  -- denoted as
$\pi_1 \gs \pi_2$ --
 if 
$\sum_{i=j}^X \pi_1(i) \geq \sum_{i=j}^X \pi_2(i)$  for $ j=1,\ldots,X$.
\end{definition}

\begin{result}[\cite{MS02}] \label{res1}
 (i)  
 $\pi_1 \gr \pi_2$ implies $\pi_1\gs \pi_2$. (For $X=2$,  $\gr$ and $\gs$ are equivalent) \\
(ii) Let $\mathcal{V}$ denote the set of all $S$ dimensional vectors
$v$ with 
 nondecreasing components, i.e., $v_1 \leq v_2 \leq \cdots
v_X$.
Then $\pi_1 \gs \pi_2$ iff for all $v \in \mathcal{V}$,
 $v^\p \pi_1 \geq v^\p \pi_2$. \\
 (iii) Suppose $f_i \geq g_i$, $i=1,\ldots,X$ and $f_i,g_i$ are increasing in $i$. Then $\pi \gs \bp$  implies $\sum_i f_i \pi_i \geq \sum_i g_i \bp_i$.
 (This follows since from (ii) $\sum_i g_i \pi_i \geq \sum_i g_i \bp_i$ and $\sum_i f_i \pi_i >\sum_i g_i \pi_i$ since $f_i > g_i$ $\forall i$).
\end{result}

For state-space dimension $X =2$, MLR is a complete order and coincides with
first order stochastic dominance. 
For state-space dimension $X >2$,
MLR is a  {\em partial order}, i.e., $[\I,\gr]$ is a partially ordered set (poset) since it is not always
possible to order any two belief states $\pi \in \I$. 

Finally, we define a modification of the MLR order on certain line
segments in the simplex
which yields   a total ordering.

Define the set of belief states $\mathcal{H}_i = \{\pi \in \I: \pi(i) = 0 \}$.
For each belief state $\bp \in \H_i$, denote the line segment
 $\l(e_{i},\bp)$ that connects $\bp$ to $e_{i}$. 
Thus
\beq \l(e_{i},\bp) = \{\pi \in \I: \pi = (1-\epsilon) \bp + \epsilon e_{i}, \;
0 \leq \epsilon \leq 1 \} ,
 \bp \in \H_i.  \label{eq:lines}
 \eeq

\begin{definition}[MLR ordering  ${\gl}$ and $\glx$  on  lines]  \label{def:tp2l}
 $\pi_1$ is greater than $\pi_2$ with respect to the MLR ordering on
the line $\l(e_{1},\bp)$  -- denoted as $\pi_1\gl \pi_2$ if 
$\pi_1,\pi_2 \in \l(e_1,\bp)$  for  some $\bp \in \H_1$   and
$\pi_1 \gr \pi_2$. Similarly, $\pi_1 \glx \pi_2$, if 
$\pi_1,\pi_2 \in \l(e_X,\bp)$) for  some $\bp \in \H_X$,  and
$\pi_1 \gr \pi_2$.
\end{definition}

Note that $[\I,\gl]$ is a  chain, i.e., all elements
$\pi,\tpi \in \l(e_{1},\bp)$ are comparable, i.e., either $\pi\gl \tpi$ or $\tpi \gl \pi$. 
Similarly  $[\I,\glx]$ is a chain.
In Lemma \ref{lem:convex}, we summarize useful properties of $[\I,\gl]$ that
will be used in our proofs.

\begin{lemma} \label{lem:convex} Consider
 $[\I,\gr]$, $[\l(e_X,\bp),\gl]$.
(i) On  $[\I,\gr]$, $e_1$ is the least and  $e_X$ is the greatest element.
On $[\l(e_X,\bp),\gl]$, $\bp$ is the least  and $e_X$ is the greatest. \\
(ii) Convex combinations of MLR comparable belief states form a chain. 
For any $\gamma \in [0,1]$,
$\pi \lr \tpi \implies \pi \lr \gamma\pi + (1-\gamma) \tpi \lr \tpi $.\\
(iii) All points on a line $\l(e_X,\bp)$ 
are MLR comparable. Consider  any two points
$\pi^{\gamma_1},\pi^{\gamma_2}\in \l(e_X,\bp)$ (\ref{eq:lines}) where
$\pi^\gamma = \gamma e_X + (1-\gamma) \bp$.
Then 
$\gamma_1 \geq \gamma_2$, implies $\pi^{\gamma_1}
\gl \pi^{\gamma_2}$. A similar result holds for $\l(e_1,\bp)$. \end{lemma}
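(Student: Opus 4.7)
The plan is to verify each of the three parts by direct computation from the definition of MLR order in (\ref{eq:mlrorder}), exploiting the sparsity of $e_1, e_X$ and the affine structure of line segments in the simplex. Nothing deep is needed; the value of the lemma lies in assembling these observations in a form suitable for later use in Theorem \ref{thm:1} and Theorem \ref{thm:dep}.

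For part (i), I would check $e_1 \lr \pi \lr e_X$ for every $\pi \in \I$ component-wise. Since $e_1(j) = 0$ for $j > 1$ and $e_X(i) = 0$ for $i < X$, each $2 \times 2$ minor appearing in (\ref{eq:mlrorder}) either vanishes trivially or reduces to nonnegativity of a single entry of $\pi$. The extremality of $\bp$ and $e_X$ on $[\l(e_X,\bp), \glx]$ is then an immediate corollary of part (iii), since $\bp$ corresponds to the parameter value $\gamma = 0$ and $e_X$ to $\gamma = 1$.

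For part (ii), I would set $\sigma = \gamma \pi + (1-\gamma)\tpi$ and test the defining cross-product inequality for each $i < j$. The key algebraic identities are $\sigma(j)\pi(i) - \sigma(i)\pi(j) = (1-\gamma)[\tpi(j)\pi(i) - \tpi(i)\pi(j)]$ and $\tpi(j)\sigma(i) - \tpi(i)\sigma(j) = \gamma[\tpi(j)\pi(i) - \tpi(i)\pi(j)]$. The bracketed quantity is nonnegative by the hypothesis $\pi \lr \tpi$, and nonnegativity of $\gamma$ and $1-\gamma$ then yields $\pi \lr \sigma \lr \tpi$.

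For part (iii), with $\pi^\gamma = \gamma e_X + (1-\gamma)\bp$ and $\bp \in \H_X$, the coordinates are $\pi^\gamma(i) = (1-\gamma)\bp(i)$ for $i < X$ and $\pi^\gamma(X) = \gamma(1 - \bp(X)) + \bp(X)$. For $i < j < X$ the relevant cross-products in (\ref{eq:mlrorder}) cancel exactly between $\pi^{\gamma_1}$ and $\pi^{\gamma_2}$, so only the case $j = X$ is substantive, and an elementary expansion reduces the required inequality to $\gamma_1 \geq \gamma_2$. Hence any two points on $\l(e_X,\bp)$ are MLR-comparable and $[\l(e_X,\bp),\glx]$ is a chain; the analogous identity for $\l(e_1,\bp)$ with $\bp \in \H_1$ follows by the symmetric substitution. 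The one point worth emphasizing at the end is the structural reason that $[\l(e_X,\bp), \glx]$ is a total order while $[\I, \gr]$ is only partial: along the line segment the MLR comparison collapses to a comparison of the single scalar $\gamma$. This is precisely the property that Theorem \ref{thm:1} and Theorem \ref{thm:dep} invoke to characterize the monotone threshold structure along such lines, so I would not bury it inside the calculation.
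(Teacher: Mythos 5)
Your proof is correct. The paper states Lemma \ref{lem:convex} without proof (it is presented as a summary of standard facts about the MLR order), and your direct component-wise verification of the cross-product inequalities in (\ref{eq:mlrorder}) is exactly the intended elementary argument: the vanishing minors for $e_1,e_X$ in part (i), the identities $\sigma(j)\pi(i)-\sigma(i)\pi(j)=(1-\gamma)[\tpi(j)\pi(i)-\tpi(i)\pi(j)]$ and its companion in part (ii), and the reduction to the single case $j=X$ in part (iii) all check out. The only cosmetic remark is that the relation in part (iii) of the statement should read $\glx$ rather than $\gl$ (a typo in the lemma as printed), which you have implicitly corrected; and if one wants the literal ``form a chain'' assertion of part (ii) for two arbitrary convex combinations, the same expansion gives $\sigma_{\gamma_1}(i)\sigma_{\gamma_2}(j)-\sigma_{\gamma_2}(i)\sigma_{\gamma_1}(j)=(\gamma_1-\gamma_2)[\pi(i)\tpi(j)-\tpi(i)\pi(j)]$, so comparability is immediate.
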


\begin{definition}[Submodular function \cite{Top98}] \label{def:supermod}
 $f:\l(e_1,\bp)\times \u \rightarrow \reals$  is  submodular (antitone differences)
if 
$f(\pi,u) - f(\pi,\bar{u}) \leq f(\tilde{\pi},u)-f(\tilde{\pi},\bar{u})$, for
$\bar{u} \leq u$, $\pi \gl \tpi$.
\end{definition}


The following result says that for a submodular function  $Q(\pi,u)$, 
$u^*(\pi)=\argmin_u Q(\pi,u)$  is increasing in its argument $\pi$. This implies $\mu^*(\pi)$ is MLR increasing  on 
the line segments $\l(e_{x},\bp)$, which in turn will be used to prove the existence
of as threshold decision curve.

\begin{theorem}[\cite{Top98}] \label{res:monotone}
\label{res:supermod} If $f:\l(e_1,\bp)\times \u \rightarrow \reals$ is submodular, then
there exists a \\ $u^*(\pi) = \argmin_{u\in \u} f(\pi,u)$, that  is increasing on
 $[\l(e_1,\bp),\gl]$,
i.e., $\tpi \gl {\pi} \implies u^*(\pi) \leq u^*(\tpi)$.
\end{theorem}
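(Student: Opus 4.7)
The plan is to apply the classical Topkis monotone-selection argument, specialized to the chain $[\l(e_1,\bp),\gl]$ and the binary action set $\u$. Because $\u$ is finite, existence of a minimizer at every $\pi$ is automatic, so the only content of the theorem is producing a selection that is monotone with respect to $\gl$.

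First I would introduce the ``switching gap'' along $\l(e_1,\bp)$, namely $\Delta(\pi)\ole f(\pi,2)-f(\pi,1)$, so that $\argmin_{u\in\u} f(\pi,u)=\{1\}$ when $\Delta(\pi)>0$, $=\{2\}$ when $\Delta(\pi)<0$, and $=\u$ when $\Delta(\pi)=0$. Applying the submodularity hypothesis of Definition \ref{def:supermod} with $\bar u=1<u=2$ and with the ordered pair $\tpi\gl\pi$ yields
\beq
\Delta(\tpi)=f(\tpi,2)-f(\tpi,1)\;\le\; f(\pi,2)-f(\pi,1)=\Delta(\pi),
\eeq
so $\Delta$ is nonincreasing along the chain $[\l(e_1,\bp),\gl]$.

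Next I would define the canonical selection
\beq
u^*(\pi)=\begin{cases}1,& \Delta(\pi)\ge 0,\\ 2,& \Delta(\pi)<0,\end{cases}
\eeq
which is a minimizer at every $\pi$ by the trichotomy above. The implication $\tpi\gl\pi\Rightarrow u^*(\pi)\le u^*(\tpi)$ then has only one nontrivial case: if $u^*(\pi)=2$ then $\Delta(\pi)<0$, and monotonicity of $\Delta$ gives $\Delta(\tpi)\le\Delta(\pi)<0$, forcing $u^*(\tpi)=2$. The alternative case $u^*(\pi)=1$ is vacuous since $1\le u^*(\tpi)$ for any choice in $\u$.

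There is no genuine analytic obstacle in this proof; the only delicate point is tie-breaking on the level set $\{\pi:\Delta(\pi)=0\}$, where both actions are optimal and a careless assignment could destroy monotonicity. Assigning $u^*=1$ on ties (``break ties toward the smaller action at the smaller argument'') is the canonical fix: it preserves measurability and the monotone property simultaneously, which is what the statement of Theorem \ref{res:supermod} asserts via the word ``exists.'' This completes the argument for the existence of a $\gl$-increasing selection.
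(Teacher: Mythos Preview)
Your argument is correct. Note, however, that the paper does not supply its own proof of this statement: Theorem~\ref{res:supermod} is quoted as a preliminary result from \cite{Top98}, so there is no in-paper proof to compare against. What you have written is exactly the standard Topkis monotone-selection argument specialized to a two-point action set and a totally ordered chain, with the tie-breaking rule (assign $u^*=1$ when $\Delta(\pi)=0$) made explicit; this is the canonical way to establish the result and matches the general treatment in \cite{Top98}.
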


\begin{definition}[Single Crossing Condition \cite{Top98,Ami05}] \label{def:scc}
$g:\Y\times \A\rightarrow \reals $ satisfies a single crossing condition in  $(y,a)$  if $g(y,a) - g(y,\bar{a}) \geq 0$ implies $g(\bar{y},a)- g(\bar{y},\bar{a}) \geq 0$ for $\bar{a}>a$ and $\bar{y} > y$.
Then $a^*(y) = \argmin_{a} g(y,a)$ is increasing in $y$.
\end{definition}

\begin{definition}[TP2 ordering and Reflexive TP2 distributions]
Let $P $ and $Q$ 
 denote any two multivariate probability mass functions.
Then: \\
 (i) $P \gtp Q$ if $P(\i) Q(\j) \leq P(\i \vee  \j) Q(\i \wedge \j)$.
 If $P$ and $Q$ are univariate, then this definition is equivalent to
 the MLR ordering  $P\gr Q$ defined above.\\
(ii)  A multivariate
distribution $P$ is said to be multivariate TP2 if $P \gtp P$  holds,
i.e.,  $P(\i) P(\j) \leq P(\i\vee  \j) P(\i \wedge \j)$.\\
(iii) If
 $\i,\j\in \{1,\ldots,X\}$ are scalar indices,
Statement (ii) is equivalent to saying that a $M \times N$ matrix $A$ is  TP2  if all 
second order minors are non-negative. 
 if $i \geq j$, then  the $i$-th row of $A$ MLR dominates the 
$j$-th row.
 \label{def:tp2}
\end{definition}

 \subsection{Proof of Theorem \ref{thm:blackwell}} \label{sec:pblackwell}

 Let $\uV_k(\pi)$ denote the value function at iteration $k$ of the value iteration algorithm
(\ref{eq:vi}) associated with the classical quickest detection Bellman equation (\ref{eq:dp_algc}).
Recall $V_k(\pi)$ is the value function associated with the social learning based quickest detection problem (\ref{eq:dp_alg}).

We start with the following lemma which is proved at the end of Appendix \ref{sec:pblackwell}

\begin{lemma} \label{lem:blackwell}  
$ \sum_a \uV_k(\T(\pi,a)) \sigma(\pi,a)  \geq \sum_y \uV_k(T(\pi,y)) \sigma(\pi,y) $.
\end{lemma}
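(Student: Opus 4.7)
The plan is to exploit the Blackwell-dominance relation $\Bs = B M^{\pi}$ stated in Lemma \ref{lem:disc}, which expresses the local-decision likelihoods as the composition of the true observation likelihoods with the (row-stochastic) matrix $M^{\pi}$. Since the classical quickest-detection problem (\ref{eq:dp_algc}) is a standard POMDP stopping problem, $\uV_k$ is piecewise linear and concave on $\I$ for every $k$; I would verify this by an easy induction on the classical value iteration, using positive homogeneity to write $\uV_k(T(\pi,y))\sigma(\pi,y)=\uV_k(B_y P^{\p}\pi)$ so that the Bellman update is a min of concave functions of $\pi$. This is the only ingredient from classical POMDP theory that is needed.

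Next I would unpack how the social-learning update $\T(\pi,a)$ relates to the HMM update $T(\pi,y)$. Using $\Bs_{ia}=\sum_y B_{iy}M^{\pi}_{ya}$ from (\ref{eq:aprob}) together with the diagonality of $\Bs_a$ and $B_y$, a direct calculation gives
\begin{equation*}
\Bs_a P^{\p}\pi \;=\; \sum_{y\in\Y} M^{\pi}_{ya}\,B_y P^{\p}\pi,\qquad \sigma(\pi,a)\;=\;\sum_{y\in\Y} M^{\pi}_{ya}\,\sigma(\pi,y),
\end{equation*}
so that
\begin{equation*}
\T(\pi,a)\;=\;\sum_{y\in\Y} w_a(y)\,T(\pi,y),\qquad w_a(y)\;\ole\;\frac{M^{\pi}_{ya}\,\sigma(\pi,y)}{\sigma(\pi,a)},\qquad \sum_{y}w_a(y)=1.
\end{equation*}
This is the precise sense in which the observation $y$ Blackwell-dominates the local decision $a$: the posterior produced by observing $a$ is a convex combination of the posteriors produced by the richer observation $y$.

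Combining these two steps finishes the proof. By concavity of $\uV_k$ and Jensen's inequality applied to the convex combination above,
\begin{equation*}
\uV_k\bigl(\T(\pi,a)\bigr)\;\geq\;\sum_{y\in\Y} w_a(y)\,\uV_k\bigl(T(\pi,y)\bigr).
\end{equation*}
Multiplying through by $\sigma(\pi,a)$, summing over $a\in\A$, and using $\sum_{a}M^{\pi}_{ya}=1$ (rows of $M^{\pi}$ sum to one) collapses the double sum:
\begin{equation*}
\sum_{a\in\A}\uV_k\bigl(\T(\pi,a)\bigr)\sigma(\pi,a)\;\geq\;\sum_{a\in\A}\sum_{y\in\Y}M^{\pi}_{ya}\,\sigma(\pi,y)\,\uV_k\bigl(T(\pi,y)\bigr)\;=\;\sum_{y\in\Y}\uV_k\bigl(T(\pi,y)\bigr)\sigma(\pi,y),
\end{equation*}
which is the claim. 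The main obstacle is really just bookkeeping: one must be careful that $\sigma(\pi,\cdot)$ denotes two different normalizers on the two sides of the inequality, and that the Blackwell decomposition $\Bs=BM^{\pi}$ is being used with the \emph{same} $M^{\pi}$ (fixed by the polytope containing $\pi$) on both sides before any Jensen step is invoked. No further properties of the social-learning filter — and in particular none of (A1), (A2), (S) — are needed here; concavity of $\uV_k$ and the matrix identity $\Bs=BM^{\pi}$ suffice.
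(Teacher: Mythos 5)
Your proof is correct and follows essentially the same route as the paper's: establish concavity of $\uV_k$ by induction on the classical value iteration, write $\T(\pi,a)$ as a convex combination of the HMM posteriors $T(\pi,y)$ with weights $M^{\pi}_{ya}\sigma(\pi,y)/\sigma(\pi,a)$ (the paper writes $P(a|y,\pi)$ for $M^{\pi}_{ya}$), and apply Jensen's inequality before summing over $a$. The only difference is that you spell out the bookkeeping of the double sum more explicitly, which the paper leaves implicit.
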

 
The proof of Theorem~\ref{thm:blackwell} then follows by mathematical induction using the value iteration algorithm~(\ref{eq:vi}).
Assume $V_k(\pi) \geq \uV_k(\pi)$ for $\pi \in  \I$.
Then 
\begin{align*}C(\pi,2) +  \sum_a V_k(\T(\pi,a)) \sigma(\pi,a) & \geq
C(\pi,2) +  \sum_a \uV_k(\T(\pi,a)) \sigma(\pi,a) \\  & \geq C(\pi,2) + \sum_y \uV_k(T(\pi,y)) \sigma(\pi,y)  
 \end{align*}
where the second inequality follows from Lemma \ref{lem:blackwell}.
Thus  $V_{k+1}(\pi) \geq \uV_{k+1}(\pi)$. This completes the induction step. Since value iteration converges pointwise,
$V(\pi) \geq \uV(\pi)$ thus proving the theorem.

{\bf Proof of Lemma \ref{lem:blackwell}}. \\
{\em Step 1}: First, let us show that $\uV_k(\pi)$ is concave over $\I$ for any $k$ by induction.
Recall from (\ref{eq:vi}) that $\uV_0(\pi) = - \Cb(\pi,1)$ which is linear in $\pi \in \I$.
Assume $\uV_k(\pi)$ is concave at iteration $k$. Note that $\uV_k(k)$ is positively homogeneous, i.e.,
for any $c\geq 0$, $\uV_k(c \pi) = c \uV_k(\pi)$. So the value iteration algorithm (\ref{eq:vi}) associated with Bellman's 
equation (\ref{eq:dp_algc}) is 
$$
 \uV_{k+1}(\pi) = \min\{C^\p \pi + \rho \sum_y \uV_k(B_y P^\p \pi) , 0 \} 
$$
Since the composition of concave function with a linear function preserves concavity, therefore
$ \sum_y \uV_k(B_y P^\p \pi) $ is concave and so $\uV_{k+1}(\pi)$ is concave.

{\em Step 2}:
 We then use the Blackwell dominance condition (\ref{eq:aprob}). The social learning filter (\ref{eq:piupdate}) can be expressed in terms of the
 Hidden Markov Model filter (\ref{eq:privu}) as
$$ \T(\pi,a ) =   \sum_{y \in \Y} T(\pi,y) \frac{\sigp(\pi,y)}{\sigp(\pi,a)} P(a|y,\pi) 
\quad \text{ and } \sigp(\pi,a) = \sum_{y \in \Y} \sigp(\pi,y) P(a|y,\pi).$$
Therefore, $\frac{\sigp(\pi,y)}{\sigp(\pi,a)} P(a|y,\pi) $ is a probability measure wrt $y$.
Since from Step 1, $\uV_k(\cdot)$ is concave for $\pi \in \I$, using Jensen's inequality it follows that
\begin{align*}
\uV_k(\T(\pi,a) ) & = \uV_k \left(\sum_{y \in \Y} T(\pi,y) \frac{\sigp(\pi,y)}{\sigs(\pi,a)} P(a|y,\pi) \right)
\geq \sum_{y \in \Y}  \uV_k (T(\pi,y)) \frac{\sigp(\pi,y)}{\sigp(\pi,a)} P(a|y,\pi)\\
\text{ implying }&  \sum_{a}  \uV_k(\T(\pi,a) ) \sigp(\pi,a) \geq
\sum_{y} \uV_k(T(\pi,y)\sigp(\pi,y).
\end{align*}

\subsection{Proof of Theorem \ref{lem:polytopes1}} \label{sec:polytopes1}
Here we present  a detailed version of Theorem \ref{lem:polytopes1} that was presented in Sec.\ref{sec:assumptions}.
\begin{reptheorem}{lem:polytopes1}[Detailed version]  Under (A1),  (A2), (S), \\
(i) The local decision $a^*(\pi,y) = \arg\min_a c_a^\p B_y P^\p \pi$  (see (\ref{eq:step2})) 
 is increasing in $y$.\\
 (ii) $a^*(\pi,y) $ is 
MLR increasing in $\pi$,  i.e., $\pi \gr \bp \implies a^*(\pi,y) \geq a^*(\bp,y)$.
   \\
(iii) The $Y$ linear hyperplanes $ (c_1 - c_2)^\p B_y P^\p \pi = 0$, $y=1,\ldots,Y$ 
 do not intersect
within the interior of the belief space $\I$. Thus, out of the $2^Y$ polytopes in (\ref{eq:expopoly}),
there are a maximum of $Y+1$ non-empty polytopes in $\Pi$, namely (\ref{eq:reduced}). \\
(iv) Let $i^*_y = \max\{i: e_i \in \{\pi:  (c_1 - c_2)^\p B_y P^\p \pi < 0\}$. Then
each of the $Y$  hyperplanes $ (c_1 - c_2)^\p B_y P^\p \pi = 0$, $y=1,\ldots,Y$ partitions $\I$
such that  the vertices $e_1,e_2,\ldots,e_{{i^*_y}}$ lie in the convex polytope
$ (c_1 - c_2)^\p B_y P^\p \pi < 0$ and  the vertices $e_{{i^*_y+1}},\ldots, e_X$ lie in the convex polytope
$ (c_1 - c_2)^\p B_y P^\p \pi > 0$.\\
(v) 
 $i^*_y$ decreases with $y$.\\
 (vi) $M^\pi$ defined in (\ref{eq:aprob}) has the following structure:
\beq  M^\pi = \begin{bmatrix} 1_{Y-l+1} & 0_{Y-l+1} \\
  0_{l-1} &  1_{l-1}  \end{bmatrix}, \; \text{ for } \pi \in \mathcal{P}_l , l=1,\ldots,Y+1 \label{eq:Mstructure}\eeq
 \qed
\end{reptheorem}

{\bf Proof}:
(i)  From \cite[Lemma1.2(1)]{Lov87} if $B$ and $P$ are  TP2 (i.e., (A1), (A2) hold)  then 
$\frac{B_y P^\p \pi}{1^\p B_y P^\p \pi} \lr \frac{B_{y+1} P^\p \pi}{\ones^\p B_{y+1} P^\p \pi}$. Next MLR dominance implies first order stochastic dominance.
Then since from (S),  $c(i,1)-c(i,2)$ is increasing in $i$, it follows that 
$\frac{(c_1-c_2)^\p B_y P^\p \pi}{\ones^\p B_y P^\p \pi} <  \frac{(c_1-c_2)^\p B_{y+1} P^\p \pi}{\ones^\p B_{y+1} P^\p \pi}$. Since the denominators are non-negative,
this implies that 
$(c_1-c_2)^\p B_y P^\p \pi \geq 0 \implies (c_1-c_2)^\p B_{y+1} P^\p \pi\geq 0$. That is, 
the single crossing condition (\ref{eq:scmean}) holds, see Definition \ref{def:scc}. So $a^*(\pi,y) \uparrow y$.

(ii) To prove $a^*(\pi,y) \uparrow \pi$ wrt $\gr$, we use a similar approach to Part (i). From \cite[Lemma 1.2(2)]{Lov87}, assuming (A3), $\pi \lr \bp$ implies
$\frac{B_y P^\p \pi}{\ones^\p B_y P^\p \pi} \lr \frac{B_{y} P^\p \bp}{\ones^\p B_{y} P^\p \bp}$. As in the proof above, using (S) this implies
$\frac{(c_1-c_2)^\p B_y P^\p \pi}{\ones^\p B_y P^\p \pi} <  \frac{(c_1-c_2)^\p B_{y} P^\p \bp}{\ones^\p B_{y} P^\p \bp}$. Since the denominators are non-negative,
this implies that 
 \beq\pi \lr \bp \text{ and }
 (c_1-c_2)^\p B_y P^\p \pi \geq 0 \implies (c_1-c_2)^\p B_{y} P^\p \bp\geq 0. \label{eq:scpi}
 \eeq
  That is a single crossing condition (see Definition \ref{def:scc}) holds wrt $(\pi,a)$ and
the partial order $\gr$. So $a^*(\pi,y) \uparrow \pi$.

(iii) follows immediately from  (\ref{eq:scmean}).

 (iv) Since $i^*_y = \max\{i: e_i \in \{\pi:  (c_1 - c_2)^\p B_y P^\p \pi < 0\}$, clearly
 $e_{i^*_y+1} \in \{\pi: (c_1 - c_2)^\p B_y P^\p \pi > 0\}$.
Next since $e_{i^*_y +1} \lr e_{i^*_y +2}
 \cdots \lr e_X$,  the single crossing condition (\ref{eq:scpi}) yields $ (c_1-c_2)^\p B_y P^\p e_{i^*_y+2}\geq 0,
 \ldots, (c_1-c_2)^\p B_y P^\p e_{X}\geq 0$. 

(v) Start with the single crossing condition (\ref{eq:scmean}) repeated below for clarity:
\begin{align*}
 \{\pi: (c_1-c_2)^\p B_{y+1} P^\p \pi \leq 0 \} &\subseteq \{\pi: (c_1-c_2)^\p B_y P^\p \pi \leq 0 \} \\
\text{ Therefore } \max\{i: e_i \in \{\pi:  (c_1 - c_2)^\p B_{y+1} P^\p \pi \leq 0\} & \leq
\max\{i: e_i \in \{\pi:  (c_1 - c_2)^\p B_{y} P^\p \pi \leq 0\} \end{align*}
 
 (vi) follows by enumerating all matrices $M^\pi$ that satisfy (i) and (ii); see (\ref{eq:Mexample})
 for an example.

\subsection{Proof of Theorem \ref{thm:ep0}}
Similar to the example given below Theorem \ref{lem:polytopes1},
it can be verified from (\ref{eq:aprob}) that  there are only 3 possible values for $\Bs$, namely,
\beq  \label{eq:bsvalues}\Bs=\begin{bmatrix} 0 & 1 \\ 0 & 1 \end{bmatrix},
\pi \in \mathcal{P}_1, \quad
\Bs =  B, \pi \in
 \mathcal{P}_2 , \quad \Bs = \begin{bmatrix} 1 & 0 \\ 1 & 0 \end{bmatrix}, \text{ and }
\pi \in \mathcal{P}_3.\eeq
Thus Bellman's equation 
 (\ref{eq:dp_alg}),
reads
\begin{multline}
V(\pi) = \min\{ C(\pi,2) + \rho V(\pi) I(\pi \in \mathcal{P}_1) + \rho \sum_{a\in \A} V(\T(\pi,a)) \sigma(\pi,a) \, I(\pi \in \mathcal{P}_2) 
 \\   +   \rho V(\pi) I(\pi \in \mathcal{P}_3)  , 0 \}  \label{eq:visocialstop}\end{multline}
 {\em Claim (i)}:
 For $\pi \in \mathcal{P}_1 \cup \mathcal{P}_3$,
 $V(\pi) = \min\{ C(\pi,2) + \rho V(\pi), 0\}$. This can be solved explicitly
as
 $$ V(\pi) = \min \{C(\pi,2)/(1-\rho), 0\} \text{ implying } 
\mu^*(\pi) = \begin{cases} 1 & C(\pi,2) < 0 \\
 						2 & C(\pi,2) \geq 0 \end{cases} .$$
Since $C(\pi,2)$ is MLR decreasing in $\pi$, the optimal policy for $\pi \in \mathcal{P}_1 \cup \mathcal{P}_3$ is a threshold policy
with threshold at $C(\pi^*,2) = 0$. This proves the first claim of the theorem.
 \\
 {\em Claim (ii)} Since $\T(\pi,a) = \pi$ for $\pi \in \mathcal{P}_1 \cup \mathcal{P}_3$, the private belief state update 
 (\ref{eq:privu})
 freezes in these regions,
 i.e., $\pi_{k-1} \in \mathcal{P}_1 \cup \mathcal{P}_3$ implies
that , $\hy_k = \pi_{k-1}$.
 Therefore all agents take the same local decision $a$ according to (\ref{eq:step2}) implying an information cascade.

{\em Claim (iii}) The proof of this 
 is more involved.

We need the following property of the social learning Bayesian filter which is a detailed
version of Lemma \ref{lem:fixed} in Sec.\ref{sec:convex}.
Since we are going to partition $\I$ into four intervals, namely $[0,\hy_2(2))$, $[\hy_2(2),\q(2))$, $[q(2), \hy_1(2))$ and $[\hy_1(2),1]$, it is convenient to introduce the following notation:
Denote these intervals as $\D^1,\D^2,\D^3,\D^4$, respectively.
Note $\mathcal{P}_1 = \D_1$, $\mathcal{P}_2 = \D_2 \cup \D_3$, $\mathcal{P}_3 = \D_4$.

\begin{replemma}{lem:fixed}[Detailed version]
Consider the  social learning Bayesian filter  (\ref{eq:piupdate}). 
Then
$T^{\hy_1}(\hy_1,1) = q$,  $T^{\hy_2}(\hy_2,2) = q$.  Furthermore if $B$ is symmetric TP2, then 
$T^{q}(q,2) = \hy_1$,  $ T^{q}(q,1) = \hy_2$ and $\hy_2 \lr q\lr \hy_1$.
So \\ (i)  $\pi \in \D_2$  implies   $\T(\pi,2) \in \D_1$ and 
   $\T(\pi,1)  \in \D_3$. \\
(ii)  $\pi \in \D_3$  implies   $\T(\pi,2)  \in \D_2$ and
  $\T(\pi,1)  \in \D_4$. \qed
\end{replemma}
The proof of Lemma \ref{lem:fixed} is as follows. Recall from (\ref{eq:bsvalues}) that
on interval $\mathcal{P}_2$, $\Bs = B$.
Then it is straightforwardly verified from 
(\ref{eq:piupdate}) that $T^{\hy_1}(\hy_1,1) = T^{\hy_2}(\hy_2,2) = q$.
Next, using (\ref{eq:piupdate}) it follows that $B_{12}B_{11} = B_{22} B_{21}$ is a sufficient
condition for $T^{q}(q,2) = \hy_1$ and $T^{q}(q,1) = \hy_2$. 
Also,  since by (A1) $B$ is TP2,  applying Theorem \ref{thm:key}(2), implies $\hy_2 \lr 
q \lr  \hy_1$.
So $B$ symmetric TP2 is sufficient for the claims of the lemma to hold.
Statements (i) and (ii) then follow straightforwardly. In particular, from Theorem \ref{thm:key}(1), 
$\hy_1 \gr \pi \gr q$ implies $T^{\hy_1}(\hy_1,1) = q \gr \T(\pi,1) \gr T^{q}(q,1) = \hy_2$,
which implies Statement (i) of the lemma. Statement (ii) follows similarly.

Returning to the proof of Theorem \ref{thm:ep0}, we use mathematical induction on the value iteration algorithm (\ref{eq:vi}).
Clearly $V_0(\pi) = - \Cb(\pi,1)$ is linear.
Assume now that $V_k(\pi)$ is  piecewise linear and concave on each 
of the four intervals $\D_1,\ldots\D_4$.
That is, for two dimensional vectors $\gamma_{m_l}$ in the set $\Gamma_l$,
$$
V_k(\pi) = \sum_l \min_{m_l \in \Gamma_l}  \gamma_{m_l}^\p \pi\, I(\pi \in \D_l) $$
Consider $\pi \in \D_2$. 
From (\ref{eq:bsvalues}), since 
 $\Bs_a = B_a$, $a = 1,2$,
Lemma \ref{lem:fixed} (i) together with the value iteration algorithm (\ref{eq:visocialstop}) yields
$$ V_{n+1}(\pi) =  \min\{ C(\pi,2) + \rho \left[\min_{m_3 \in \Gamma_3} \gamma_{m_3}^\p B_1 \pi + \min_{m_1 \in \Gamma_1}
\gamma_{m_1}^\p B_2 \pi\right], 0 \}. $$
Note the crucial point in the above equation: as a result of Lemma \ref{lem:fixed} (i) -- the
social learning filter maps $\D_2$ 
to only $\D_1$ (for $a=1$) and $\D_3$ (for $a=2$).
Since each of the terms in the above equation  are piecewise linear and concave, it follows that 
 $V_{k+1}(\pi)$ is piecewise linear and concave on $\D_2$. A similar proof holds for $\D_3$ and this involves using Lemma \ref{lem:fixed}(ii).
As a result the stopping set on each interval $\D_l$, $l=1,\ldots,4$ is a convex region, i.e., an interval. This proves claim (ii).

\subsection{Proof of Lemma \ref{lem:mappoly} and Theorem \ref{thm:cpi}} \label{app:cpi}

\subsubsection*{Proof of Lemma \ref{lem:mappoly}}
Let us introduce the following notation. Define
\beq S^+ = \{\pi: C^\p \pi > 0\} \; \text{  and } S^= = \{\pi: C^\p \pi = 0\}.  \label{eq:spe} \eeq
The proof comprises of three parts.\\
{\em Statement (i)}: Under (PH), for every $\pi \in S^+$, there exists a $\bpi \in S^=$ such that
$\bpi \gr \pi$.\\
{\em Proof}:
Consider any belief state $\pi \in S^+$.
Construct a line segment from $e_1$ through the belief state $\pi$ and let this line segment intersect the hyperplane $S^=$.
Denote $\bpi$ as this point of intersection.
Clearly $\bpi = \alpha e_1 + (1-\alpha) \pi$ where $\alpha = C^\p \pi/(C^\p \pi - C_1)$. It is straightforwardly established
that $\bpi \gr \pi$ if $\alpha \geq 1$ which is clearly true since $C_1 > 0$ and $C^\p \pi > 0$ for $\pi \in S^+$.
\\
{\em Statement (ii)}: Under (A1), (A2), (A3),  if $\bpi \gr \pi$, then 
$C^\p  T^{\bpi}(\bpi,a) > 0 \implies C^\p  \T(\pi,a) > 0$.
\\ {\em Proof}:
Under (A1), (A2), (A3),  it follows from Theorem \ref{thm:key}(2) in  Appendix \ref{app:thm1} that $\T(\pi,a)$ is MLR increasing, that is, $\bpi \gr \pi$ implies $T^{\bpi}(\bpi,a) \gr \T(\pi,a)$.
Under (A3), the elements of $C$  are decreasing.
So  from Result \ref{res1} in Appendix \ref{sec:mlrdef}, it follows that
 $C^\p  T^{\bpi}(\bpi,a) \leq C^\p \T(\pi,a)  $.
So 
$C^\p T^{\bpi}(\bpi,a) > 0$ implies
$C^\p \T(\pi,a) > 0$.
\\
Statements (i) and (ii) imply that if the social learning filter~(\ref{eq:piupdate})  maps  belief states in $S^=$  to $S$, then all belief states in $\{\pi:C^\p \pi > 0\}$ are also mapped to $S$.
 Since the hyperplane  $S^= = \{\pi:C^\p \pi = 0\}$ has infinite points, how can we formulate a sufficient
  condition for belief states $\{\pi:C^\p \pi = 0\}$ to be mapped to the polytope $\mathcal{P}_{Y+1}$?
(C1) serves as a sufficient condition as proved in Statement (iii) below.\\
{\em Statement (iii)}: A sufficient condition for $C^\p  T^{\bpi}(\bpi,a) < 0 $ to hold for all $\bpi \in S^=$ is that 
$C^\p  T^{\ver_i}(\ver_i,a) < 0 $ for all  $X-1$ vertices $\ver_j$ of (\ref{eq:ver}).\\
{\em Proof}: Clearly every belief state $\pi \in S^=$ is a convex combination of the vertices, i.e.,
$\pi = \sum_i \alpha_i \ver_i$, for some  $\alpha_i \geq 0$ and $\sum_i \alpha_i = 1$. Now
$C^\p T^{\ver_i}(\ver_i,a) < 0$ is equivalent to $ C^\p \Bsi^{\ver_i} P^\p \ver_i  < 0$ since the normalization term in $\T(\cdot)$ is 
non-negative.
This implies $C^\p \sum_i \alpha_i  \Bsi^{\ver_i} P^\p \ver_i < 0$, and this is equivalent
to $C^\p  \Bs P^\p \pi < 0$.

\subsubsection*{Proof of Theorem \ref{thm:cpi}}
Define  $S = \{\pi:C^\p \pi \geq 0\}$.\\
{\em Step 1}: We first prove that $V(\pi) = 0$ for $\pi \in S$. This is equivalent to saying 
that for $\{\pi:C^\p \pi \geq 0\}$, the optimal policy $\mu^*(\pi) = 1$.

The proof of Step 1 is 
 by induction on the value iteration algorithm
(\ref{eq:vi}).
Suppose $V_0(\pi) = 0$. Then it trivially satisfies $V_0(\pi) = 0$ for $\pi \in S$. Next suppose
$V_k(\pi) = 0$ for $\pi \in S$. Then for $\pi \in S$, Assumption (C1) implies that $\T(\pi,a) $ belongs to $S$
implying that $V(\T(\pi,a)) = 0$. So from (\ref{eq:vi}), it follows that $V_{k+1}(\pi) = \min \{C^\p \pi, 0 \} = 0$ since $C^\p \pi \geq 0$
for $\pi \in S$.
Since $V_k(\pi)$ converges pointwise to $V(\pi)$, Step 1 follows.
For initial condition $V_0(\pi) = -\Cb(\pi,1)$ (see (\ref{eq:vi})), $V(\pi)$ obtained as the limit
of the value iteration algorithm is identical to that
with initial condition $V_0(\pi) = 0$.

{\em Step 2}: From Bellman's equation it follows trivially that for $\{\pi: C^\p \pi < 0\}$, $\mu^*(\pi ) =2$.

From  Steps 1 and 2, we have $C^\p \pi \geq  0$  iff $\mu^*(\pi) = 1$.

\subsection{Proof of Theorem \ref{thm:1}} \label{app:thm1}

This section is in two parts. 
We start with several preliminary results that are
similar to the results in \cite{Lov87}.  Then the proof of  Theorem \ref{thm:1} is presented.

\subsubsection{Structural Properties of Social Learning Filter}
\begin{theorem}\label{thm:key} The following structural properties hold for the public belief  update evaluated by 
the social learning Bayesian filter defined in (\ref{eq:piupdate}):
\begin{enumerate}
\item Under (S), $M^\pi$ is TP2 for $\pi \in \I$, see Definition \ref{def:tp2}.
\item Under (A1), (A2), (S) if $\pi_1,\pi_2 \in \mathcal{P}_l$, then
$\pi_1 \gr \pi_2$ implies 
 $T^{\pi_1}(\pi_1,a)\gr T^{\pi_2}(\pi_2,a)$ \label{part3}
\item Under (A1), (A2), (S),
if  $\pi_1,\pi_2 \in \I$,  then  $\pi_1 \gr \pi_2 \implies \sigp(\pi_1,\cdot) \gs \sigp(\pi_2,\cdot)$.
\item Under (A1), (A2), if $\pi \in \I$, then
  $a > \bar{a}$  implies $\T(\pi,a)\gr \T(\pi,\bar{a})$.
\end{enumerate}
\end{theorem}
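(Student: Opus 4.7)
The plan is to tackle the four claims in order, with the polytope decomposition of Theorem~\ref{lem:polytopes1} as the main scaffolding. The engine throughout will be the two classical facts: (i) products of TP2 matrices are TP2, and (ii) a Bayesian update with a TP2 likelihood matrix and TP2 transition matrix is MLR monotone in the prior (this is \cite[Lemma 1.2(2)]{Lov87}, already used in the proof of Theorem~\ref{lem:polytopes1}).

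For Part 1, I would simply read off the block form of $M^\pi$ from Theorem~\ref{lem:polytopes1}(vi): the rows are $(1,0)$-type stacked above $(0,1)$-type. Any $2\times 2$ minor with $i<j$ equals $M_{i,1}M_{j,2}-M_{i,2}M_{j,1}$, and the block structure forces this to be $0$ (when both rows are of the same type) or $+1$ (when the top row is $(1,0)$ and the bottom is $(0,1)$). Hence $M^\pi$ is TP2. For Part 2, since $\pi_1,\pi_2\in\mathcal{P}_l$ share the same local-decision likelihood $\Bsl=BM^l$, the update is an ordinary HMM filter with likelihood $\Bsl$. Under (A1) and Part 1, $\Bsl$ is a product of TP2 matrices, hence TP2; combined with (A2), the claim is immediate from \cite[Lemma 1.2(2)]{Lov87}.

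For Part 3, I would write $\sigma(\pi,a)=\sum_y M^\pi_{y,a}\,\bar{B}(y\mid\pi)$ where $\bar{B}(y\mid\pi)=\mathbf{1}^\p B_y P^\p\pi$ is the predictive marginal of $y$. Since $M^\pi$ is the deterministic indicator of the optimal local decision, for any increasing $\phi:\A\to\reals$,
\[
\E_\pi[\phi(a)] \;=\; \sum_y \bar{B}(y\mid\pi)\,\phi(a^*(\pi,y)).
\]
The plan is to chain two monotonicities.  First, using MLR monotonicity of $a^*(\cdot,y)$ in $\pi$ (Theorem~\ref{lem:polytopes1}(ii)), replace $\phi(a^*(\pi_2,y))$ by the pointwise larger $\phi(a^*(\pi_1,y))$. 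Second, using that $\phi(a^*(\pi_1,\cdot))$ is increasing in $y$ (Theorem~\ref{lem:polytopes1}(i)) together with $\bar{B}(\cdot\mid\pi_1)\gs\bar{B}(\cdot\mid\pi_2)$ (which follows from TP2-ness of $B$ and $P$ via Result~\ref{res1}), swap the weighting to $\bar{B}(\cdot\mid\pi_1)$. Composing the two inequalities yields $\E_{\pi_1}[\phi(a)]\ge \E_{\pi_2}[\phi(a)]$ for all increasing $\phi$, i.e.\ $\sigma(\pi_1,\cdot)\gs\sigma(\pi_2,\cdot)$.

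For Part 4, I would use Part 1 together with (A1) to conclude that $\Bs=BM^\pi$ is TP2 as an $X\times A$ matrix, so its columns are MLR-ordered in $a$. Pointwise multiplication with the vector $P^\p\pi$ and renormalization preserve the MLR order between columns, so $\T(\pi,a)\gr\T(\pi,\bar{a})$ whenever $a>\bar{a}$. The main obstacle is Part 3: unlike the other parts, $\pi_1$ and $\pi_2$ may lie in different polytopes, so $M^{\pi_1}\ne M^{\pi_2}$ and the filter is genuinely discontinuous in $\pi$. The coupling strategy above finesses this by routing everything through the scalar monotone quantity $a^*(\pi,y)$ rather than through $M^\pi$ itself, allowing the polytope-dependence to drop out cleanly.
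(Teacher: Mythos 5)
Your proposal is correct and follows essentially the same route as the paper's proof: Part 1 rests on the monotone (block) structure of $M^\pi$ inherited from $a^*(\pi,y)$ being increasing in $y$, Part 2 on TP2-ness of the fixed likelihood $B M^{l}$ within a polytope combined with the MLR-monotonicity of the HMM update, and Part 4 on the column-wise MLR ordering of a TP2 likelihood matrix (the paper cites Whitt for this last step where you compute it directly). Your Part 3 is an equivalent repackaging of the paper's argument: the paper compares the tail sums $\sum_{a>\bar{a}} R^\pi_{ia}$ row-wise in $i$ and then sums against $P^\p\pi$, whereas you condition on $y$ first and chain the two monotonicities of $a^*(\pi,y)$ (in $\pi$ and in $y$) with first-order dominance of the $y$-marginal; both versions hinge on exactly the same facts from Theorem \ref{lem:polytopes1}.
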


\begin{proof}
1). We need to show that for fixed $\pi \in \I$,
\beq  M^\pi_{ya} M^\pi_{y'a'} \leq M^\pi_{y \wedge y', a \wedge a'} M^\pi_{y \vee y', a \vee a'}  
\label{eq:tpcheck} \eeq
Recall from (\ref{eq:aprob}) that $M^\pi$ is a matrix with a single 1 in each row at $a^*(\pi,y)$ and all other elements
zero. 
So the only non trivial  case to prove is when both terms on the LHS are 1, i.e.,  $M^\pi_{y,a^*(\pi,y)}=1$ and 
$ M^\pi_{y',a^*(\pi,y')}=1$.
Assuming  (A2), (S), Theorem \ref{lem:polytopes1}(i) says that $a^*(\pi,y) \uparrow y$.
This means that $y < y' \implies a^*(\pi,y) < a^*(\pi,y')$ and $y \geq  y' \implies a^*(\pi,y) \geq a^*(\pi,y')$.
In either case (\ref{eq:tpcheck}) holds with equality since the RHS is identical to the LHS.

2). Since $P$ is TP2 (A2), we have $P^\p  \pi \gr P^\p  \bp$ for $\pi \gr \bp$, see \cite{KR80}.
So it suffices to show that 
$ \frac{\Bs_a \pi }{\ones^\p B^\pi_a \pi} \gr \frac{B^\bp_a \bp }{\ones^\p B^\bp_a \bp} $ for $\pi \gr \bp$.
Moreover, since $\pi,\bp$ belong to the same polytope $\mathcal{P}_l$, $\Bs_a = R^\bp_a = \Bsl_a$ (say), see (\ref{eq:expopoly}).
From \cite{KR80}, a sufficient condition for $ \frac{\Bsl_a \pi }{\ones^\p \Bsl_a \pi} \gr \frac{\Bsl_a \bp }{\ones^\p \Bsl_a \bp} $ is that $\Bsl$ is TP2. Of course
we need this to hold on  each of the $Y+1$  polytopes, i..e, for $l=1,\ldots,Y+1$.

So under what conditions is $\Bs$ TP2 in each of the $Y+1$ polytopes? Note (A2) says $B$ is TP2. Since $\Bs = B M^\pi$ (see (\ref{eq:aprob})) and the product of TP2 matrices
is TP2  \cite[pp.471]{KR80},  it only remains to prove that 
$M^\pi$ is TP2. This follows from (S) as proved in (i) above.

3). Since $P$ is TP2 (A2), it suffices to prove that  $\pi \gr \bp$ implies $\ones^\p B_a^\pi \pi \gs \ones^\p B_a^\bp \bp$, i.e.,
$\sum_i \sum_{a>\bar{a}} B_{ia}^\pi \pi_i \geq \sum_i \sum_{a \geq \bar{a}} B_{ia}^\bp \bp_i $. 
From Statement 1, $M^\pi$ and $M^\bp$ are TP2 and from (A2) $B$ is TP2. So
$B_\pi = B M^\pi$  and $B^\bp = B M^\bp$ are TP2.
Therefore, from Definition \ref{def:tp2}(iii), the rows of $\Bs$ and $R^\bp$ are MLR increasing.
Since MLR dominance implies first order stochastic dominance, this means that  both $\sum_{a>\bar{a}} \Bs_{ia} $ and $\sum_{a>\bar{a}} R^\bp_{ia}$ are increasing with $i$.
Since $\pi \gr \bp$, 
Result~\ref{res1}(i),(ii) and (iii),  imply that a sufficient condition for $\pi \gr \bp \implies\ones^\p \Bs_a \pi \gs \ones^\p R^\bp_a \bp$
 is that $\sum_{a>\bar{a}}\Bs_{ia} > \sum_{a>\bar{a}} R^\bp_{ia} $ or equivalently,
 $\sum_y B_{iy} \sum_{a>\bar{a}} M_{ya}^\pi \geq \sum_y B_{iy} \sum_{a> \bar{a}}  M_{ya}^\bp $. A sufficient condition for this
 is 
 $\pi \gr \bp \implies \sum_{a>\bar{a}} M_{ya}^\pi \geq \sum_{a> \bar{a}}  M_{ya}^\bp$. But this condition holds from the structure of $M$ in (\ref{eq:Mstructure}) and the fact that $a^*(\pi,y)$
 is MLR increasing wrt $\pi$ (Statement (ii) of Theorem \ref{lem:polytopes1} in Appendix \ref{sec:polytopes1}).

4). Since $P$ is TP2 (A2), it suffices to prove that $\pi \gr \bp \implies
\frac{\Bs_a P^\p \pi }{\ones^\p \Bs_a P^\pi \pi} \gr \frac{\Bs_{a'} P^\p \pi }{\ones^\p \Bs_{a'} P^\p \pi}$ for $a \geq a'$. Since $\Bs$ is TP2 (A1),
this result follows straightforwardly from \cite[Theorem 4]{Whi79}.
\end{proof}

\subsubsection{Proof of Theorem \ref{thm:1}}
Here we prove 
Theorem \ref{thm:1}.  The update of belief state in $\mathcal{P}_{Y+1}$ is simple, since
$\Bs_{ia}  = 1/X$ (uniformly distributed) for each $i$, see (\ref{eq:ex31}) for example. In comparison, the  sensor management case of Theorem~\ref{thm:manga} on $\mathcal{P}_2$   with update
given by (\ref{eq:pia}) requires an arbitrary TP2 matrix $\Bs$.    To allow for this generality,
in the proof below, we  assume
$\Bs$ is an arbitrary TP2 matrix on $\mathcal{P}_{Y+1}$.


{\bf Part 1}:  Under (A1), (A2), (A3), (S), (C3), (C2), (PH),   $V(\pi) $ is MLR decreasing  on   polytope $\mathcal{P}_{Y+1}$:\\
The proof of Part 1 is by mathematical induction on the value iteration algorithm (\ref{eq:vi}). Start
with  $V_0(\pi) = -\Cb(\pi,1)$
 in (\ref{eq:vi}). Clearly this is MLR decreasing on $\I$ and therefore on polytope  $\mathcal{P}_{Y+1}$ since
 $\f$ is chosen with increasing elements, see (\ref{eq:cp1}).
Now for the inductive step:  Assume at iteration $k$, $V_k(\pi)$ is MLR decreasing on  polytope  $\mathcal{P}_{Y+1}$.
Then since $\T(\pi,a) $ is MLR increasing in $a$ (Theorem \ref{thm:key}(4)) and $T(\pi,a) \in \mathcal{P}_{Y+1}$ by (C2), it follows that 
$V_k(\T(\pi,1)) \geq V_k(\T(\pi,2))$.  

Consider any $\pi\gr \bp \in  \mathcal{P}_{Y+1}$. 
Since $\sigma(\pi,.) \gs \sigma(\bp,.) $ (see Theorem \ref{thm:key}(3)), 
\beq \sum_a V_k(\T(\pi,a)) \sigma(\pi,a) \leq  \sum_a V_k(\T(\pi,a)) \sigma(\bp,a)  \label{eq:qbar}\eeq
Next since $\pi \gr \bp \implies \T(\pi,a) \gr \T(\bp,a)$ (Theorem \ref{thm:key}(2)),
so $V_k(\pi)$ MLR decreasing in $\pi$ implies $V_k(\T(\pi,a)) \leq V_k( \T(\bp,a))$. So from (\ref{eq:qbar}),  $\pi \gr \bp$ implies
\beq\sum_a V_k(\T(\pi,a)) \sigma(\pi,a) \leq  \sum_a V_k(\T(\pi,a)) \sigma(\bp,a) \leq  \sum_a V_k(\T(\bp,a)) \sigma(\bp,a)  
\label{eq:qbar2}
\eeq

From (A3),  $C(\pi,2)$ is MLR decreasing. So  $\pi \gr \bp$ implies $C(\pi,2)\leq C(\bp,2)$.
Therefore $\pi \gr \bp$ implies   $Q_{k+1}(\pi,2) \leq Q_{k+1}(\bp,2) $. 
Thus $\min_u Q_{k+1}(\pi,u) \leq \min_u Q_{k+1}(\bp,u)$, i.e., $V_{k+1}(\pi) \leq V_{k+1}(\bp)$. This completes the induction step.
Finally, since $V_k \rightarrow V$ as $k\rightarrow \infty$ pointwise (see discussion below (\ref{eq:vi})), $V$ is  MLR decreasing on 
polytope  $\mathcal{P}_{Y+1}$.

{\bf Part 2}: Under the above conditions,  $\mu^*(\pi)$ is MLR increasing on polytope $\mathcal{P}_{Y+1}$.
It suffices to show that
 $Q(\pi,u)$ is submodular (see Definition \ref{def:supermod}) on $\mathcal{P}_{Y+1}$ wrt the MLR ordering since then 
Theorem \ref{res:monotone} applies implying that $\mu^*(\pi)$ is MLR decreasing in $\pi \in \mathcal{P}_{Y+1}$.
To show that $Q(\pi,u)$ in (\ref{eq:dp_alg}) is submodular, we need to show that 
 $Q(\pi,2) $ is MLR decreasing in $\pi$. But this follows from (A3) and Part 1.
 Thus from Theorem \ref{res:monotone}, (\ref{eq:structure}) holds.

\subsection{Proof of Theorem \ref{thm:dep}}  \label{app:dep}
Given any $\pi_1,\pi_2 \in \l(e_X,\bp)$ with $\pi_2\glX \pi_1$, we need to prove:
$\mu_\theta(\pi_1) \leq \mu_\theta(\pi_2)$ iff  $\theta(X-2) \geq 1 $, $\theta(i) \leq \theta(X-2)$ for $i< X-2$.
But from the structure of (\ref{eq:linear}), obviously $\mu_\theta(\pi_1) \leq \mu_\theta(\pi_2)$ 
is equivalent to 
$ \begin{bmatrix} 0 & 1 & \theta^\p\end{bmatrix}^\p \begin{bmatrix}\pi_1 \\ -1 \end{bmatrix} 
\leq \begin{bmatrix} 0 & 1 & \theta^\p\end{bmatrix}^\p \begin{bmatrix}\pi_2 \\ -1 \end{bmatrix}$,
or equivalently, 
$\begin{bmatrix} 0 & 1 & \theta(1) & \cdots & \theta(X-2)\end{bmatrix} (\pi_1 - \pi_2 ) \leq 0$.

Now from Lemma \ref{lem:convex}(iii),  $\pi_2 \glX\pi_1 $ implies that
 $\pi_1 = \epsilon_1 e_{X} + (1-\epsilon_1) \bp$,
$\pi_2 = \epsilon_2 e_{X} + (1-\epsilon_2) \bp$ and $\epsilon_1 \leq \epsilon_2$.
Substituting these into the above expression, we need to prove
$$
( \epsilon_1 -  \epsilon_2)\bigl(\theta(X-2) -  \begin{bmatrix} 0 & 1 & \theta(1) & \cdots & \theta(X-2)\end{bmatrix}^\p \bp \bigr) \leq 0, \quad \forall \bp \in \H_X$$  iff  $\theta(X-2) \geq 1 $, $\theta(i) \leq \theta(X-2)$, $i < X-2$. This is obviously true.

A similar proof shows that  on lines $\l(e_1,\bp)$ the linear
threshold policy satisfies
$\mu_\theta(\pi_1) \leq \mu_\theta(\pi_2)$ iff  $\theta(i) \geq 0$ for $i < X-2$.

\begin{biography}[]{Vikram Krishnamurthy}
(S'90-M'91-SM'99-F'05) was born in 1966.  He received his bachelor's
degree from the University of Auckland, New Zealand in 1988, and
Ph.D.\ from the Australian National University in 1992.
He currently is  a professor and Canada Research Chair at the
Department of Electrical Engineering, University of British Columbia,
Vancouver, Canada. 

Dr Krishnamurthy's  current research interests include computational game theory,
stochastic control in sensor networks, and stochastic
dynamical systems for
modeling of biological ion channels and biosensors.
Dr. Krishnamurthy currently serves as Editor in Chief of IEEE Journal Selected Topics
in Signal Processing. He has served as associate editor for several journals
including IEEE Transactions Automatic Control and 
IEEE Transactions on Signal Processing. In 2009-2010, he served as Distinguished lecturer for the IEEE signal
processing society.
\end{biography}


\begin{thebibliography}{10}

\bibitem{AO10}
D.~Acemoglu and A.~Ozdaglar.
\newblock Opinion dynamics and learning in social networks.
\newblock {\em Dynamic Games and Applications}, pages 1--47, 2010.

\bibitem{AOT10}
D.~Acemoglu, A.~Ozdaglar, and A.~Tahbaz-Salehi.
\newblock Cascades in networks and aggregate volatility.
\newblock Technical report, National Bureau of Economic Research, 2010.

\bibitem{ASS02}
I.~F. Akyildiz, W.~Su, Y.~Sankarasubramaniam, and E.~Cayirci.
\newblock Wireless sensor networks: A survey.
\newblock {\em Computer Networks}, 38(4):393--422, 2002.

\bibitem{Ami05}
R.~Amir.
\newblock Supermodularity and complementarity in economics: An elementary
  survey.
\newblock {\em Southern Economic Journal}, 71(3):636--660, 2005.

\bibitem{AT96}
M.S. Andersland and D.~Teneketzis.
\newblock Measurement scheduling for recursive team estimation.
\newblock {\em Journal of Optimization Theory and Applications},
  89(3):615--636, June 1996.

\bibitem{Ban92}
A.~Banerjee.
\newblock A simple model of herd behavior.
\newblock {\em Quaterly Journal of Economics}, 107:797--817, 1992.

\bibitem{BB02}
P.~Bartlett and J.~Baxter.
\newblock Estimation and approximation bounds for gradient-based reinforcement
  learning.
\newblock {\em J. Comput. Syst. Sci.}, 64(1):133--150, 2002.

\bibitem{BN93}
M.~Basseville and I.V. Nikiforov.
\newblock {\em Detection of Abrupt Changes --- Theory and Applications}.
\newblock Information and System Sciences Series. Prentice Hall, New Jersey,
  USA, 1993.

\bibitem{Ber00b}
D.P. Bertsekas.
\newblock {\em Dynamic Programming and Optimal Control}, volume 1 and 2.
\newblock Athena Scientific, Belmont, Massachusetts, 2000.

\bibitem{BHW92}
S.~Bikchandani, D.~Hirshleifer, and I.~Welch.
\newblock A theory of fads, fashion, custom, and cultural change as information
  cascades.
\newblock {\em Journal of Political Economy}, 100:992--1026, 1992.

\bibitem{BV02}
L.~Bru and X.~Vives.
\newblock Informational externalities, herding, and incentives.
\newblock {\em Journal of Institutional and Theoretical Economics JITE},
  158(1):91--105, 2002.

\bibitem{Cha04}
C.~Chamley.
\newblock {\em Rational herds: Economic models of social learning}.
\newblock Cambridge, 2004.

\bibitem{CG94}
C.~Chamley and D.~Gale.
\newblock Information revelation and strategic delay in a model of investment.
\newblock {\em Econometrica}, 62(5):1065--1085, 1994.

\bibitem{CZK07}
Y.~Chen, Q.~Zhao, V.~Krishnamurthy, and D.~Djonin.
\newblock Transmission scheduling for optimizing sensor network lifetime: A
  stochastic shortest path approach.
\newblock {\em IEEE Trans.\ Signal Proc.}, 55(5):2294--2309, May 2007.

\bibitem{CH70}
T.~Cover and M.~Hellman.
\newblock The two-armed-bandit problem with time-invariant finite memory.
\newblock {\em Information Theory, IEEE Transactions on}, 16(2):185--195, 1970.

\bibitem{Gan60}
F.R. Gantmacher.
\newblock {\em Matrix Theory}, volume~2.
\newblock Chelsea Publishing Company, New York, 1960.

\bibitem{Har05}
S.~Hart.
\newblock {Adaptive heuristics}.
\newblock {\em Econometrica}, 73(5):1401--1430, 2005.

\bibitem{HM00}
S.~Hart and A.~Mas-Colell.
\newblock A simple adaptive procedure leading to correlated equilibrium.
\newblock {\em Econometrica}, 68(5):1127--1150, 2000.

\bibitem{HC70}
M.E. Hellman and T.M. Cover.
\newblock Learning with finite memory.
\newblock {\em The Annals of Mathematical Statistics}, pages 765--782, 1970.

\bibitem{HL96}
O.~Hern\'{a}ndez-Lerma and J.~Bernard Laserre.
\newblock {\em Discrete-Time {M}arkov Control Processes: Basic Optimality
  Criteria}.
\newblock Springer-Verlag, New York, 1996.

\bibitem{HS84}
D.P. Heyman and M.J. Sobel.
\newblock {\em Stochastic Models in Operations Research}, volume~2.
\newblock McGraw-Hill, 1984.

\bibitem{Kar68}
S.~Karlin.
\newblock {\em Total Positivity}, volume~1.
\newblock Stanford Univ., 1968.

\bibitem{KR80}
S.~Karlin and Y.~Rinott.
\newblock Classes of orderings of measures and related correlation
  inequalities. {I}. {M}ultivariate totally positive distributions.
\newblock {\em Journal of Multivariate Analysis}, 10:467--498, 1980.

\bibitem{KS60}
J.G. Kemeny and J.L. Snell.
\newblock {\em Finite {M}arkov Chains}.
\newblock Van Nostrand, New York, 1960.

\bibitem{Kij97}
M.~Kijima.
\newblock {\em Markov Processes for Stochastic Modelling}.
\newblock Chapman and Hall, 1997.

\bibitem{Kri11}
V.~Krishnamurthy.
\newblock Bayesian sequential detection with phase-distributed change time and
  nonlinear penalty -- a lattice programming pomdp approach.
\newblock {\em IEEE Trans.\ Inform.\ Theory}, 57(3), Oct. 2011.
\newblock http://arxiv.org/abs/1011.5298.

\bibitem{KMY08}
V.~Krishnamurthy, M.~Maskery, and G.~Yin.
\newblock Decentralized activation in a {Z}ig{B}ee-enabled unattended ground
  sensor network: A correlated equilibrium game theoretic analysis.
\newblock {\em IEEE Trans.\ Signal Proc.}, 56(12):6086--6101, December 2008.

\bibitem{KW09}
V.~Krishnamurthy and B.~Wahlberg.
\newblock {POMDP} multiarmed bandits -- structural results.
\newblock {\em Mathematics of Operations Research}, 34(2):287--302, May 2009.

\bibitem{KY02}
V.~Krishnamurthy and G.~Yin.
\newblock Recursive algorithms for estimation of hidden {M}arkov models and
  autoregressive models with {M}arkov regime.
\newblock {\em IEEE Trans.\ Inform.\ Theory}, 48(2):458--476, February 2002.

\bibitem{KY03}
H.J. Kushner and G.~Yin.
\newblock {\em Stochastic Approximation Algorithms and Recursive Algorithms and
  Applications}.
\newblock Springer-Verlag, 2nd edition, 2003.

\bibitem{LADO07}
I.~Lobel, D.~Acemoglu, M.~Dahleh, and A.E. Ozdaglar.
\newblock Preliminary results on social learning with partial observations.
\newblock In {\em Proceedings of the 2nd International Conference on
  Performance Evaluation Methodolgies and Tools}, Nantes, France, 2007. ACM.

\bibitem{Lov87a}
W.S. Lovejoy.
\newblock On the convexity of policy regions in partially observed systems.
\newblock {\em Operations Research}, 35(4):619--621, July-August 1987.

\bibitem{Lov87}
W.S. Lovejoy.
\newblock Some monotonicity results for partially observed {M}arkov decision
  processes.
\newblock {\em Operations Research}, 35(5):736--743, Sept.-Oct. 1987.

\bibitem{Mou86}
G.B. Moustakides.
\newblock Optimal stopping times for detecting changes in distributions.
\newblock {\em Annals of Statistics}, 14:1379--1387, 1986.

\bibitem{MS02}
A.~Muller and D.~Stoyan.
\newblock {\em Comparison Methods for Stochastic Models and Risk}.
\newblock Wiley, 2002.

\bibitem{Neu89}
M.F. Neuts.
\newblock {\em Structured stochastic matrices of M/G/1 type and their
  applications}.
\newblock Marcel Dekker, N.Y., 1989.

\bibitem{PPA89}
J.~Papastravrou, J.~Pothiawala, and M.~Athans.
\newblock Designing an organization in a hypothesis testing framework.
\newblock Technical report, Laboratory for Information and Decision Systems,
  MIT, 1989.

\bibitem{PS11}
A.~Park and H.~Sabourian.
\newblock Herding and contrarian behavior in financial markets.
\newblock {\em Econometrica}, 79(4):973--1026, 2011.

\bibitem{PH08}
H.V. Poor and O.~Hadjiliadis.
\newblock {\em Quickest Detection}.
\newblock Cambridge, 2008.

\bibitem{Rie91}
U.~Rieder.
\newblock Structural results for partially observed control models.
\newblock {\em Methods and Models of Operations Research}, 35:473--490, 1991.

\bibitem{Ros83}
S.~Ross.
\newblock {\em Introduction to Stochastic Dynamic Programming}.
\newblock Academic Press, San Diego, California., 1983.

\bibitem{RIM09}
S.~Ross, M.~Izadi, M.~Mercer, and D.~Buckeridge.
\newblock Sensitivity analysis of {POMDP} value functions.
\newblock In {\em IEEE International Conference on Machine Learning and
  Applications ICMLA'09}, 2009.

\bibitem{Shi63}
AN~Shiryaev.
\newblock {On optimum methods in quickest detection problems}.
\newblock {\em Theory of Probability and its Applications}, 8:22, 1963.

\bibitem{Shi78}
A.N. Shiryayev.
\newblock {\em Optimal stopping rules}.
\newblock Springer-Verlag, 1978.

\bibitem{SS97}
L.~Smith and P.~Sorensen.
\newblock Informational herding and optimal experimentation.
\newblock Economics Papers 139, Economics Group, Nuffield College, University
  of Oxford, 1997.

\bibitem{SS00}
L.~Smith and P.~Sorenson.
\newblock Pathological outcomes of observational learning.
\newblock {\em Econometrica}, 68(2):371--398, 2000.

\bibitem{Spa03}
J.~Spall.
\newblock {\em Introduction to Stochastic Search and Optimization}.
\newblock Wiley, 2003.

\bibitem{TV05}
A.G. Tartakovsky and V.V. Veeravalli.
\newblock General asymptotic {B}ayesian theory of quickest change detection.
\newblock {\em Theory of Probability and its Applications}, 49(3):458--497,
  2005.

\bibitem{Top98}
D.M. Topkis.
\newblock {\em Supermodularity and Complementarity}.
\newblock Princeton University Press, 1998.

\bibitem{WD80}
CC~White and DP~Harrington.
\newblock {Application of Jensen's inequality to adaptive suboptimal design}.
\newblock {\em Journal of Optimization Theory and Applications}, 32(1):89--99,
  1980.

\bibitem{Whi79}
W.~Whitt.
\newblock A note on the influence of the sample on the posterior distribution.
\newblock {\em Journal American Statistical Association}, 74:424--426, 1979.

\bibitem{Yak97}
B.~Yakir.
\newblock A note on optimal detection of a change point in distribution.
\newblock {\em Annals of Statistics}, 25:2117--2126, 1997.

\bibitem{YKP99}
B.~Yakir, A.M. Krieger, and M.~Pollak.
\newblock {Detecting a change in regression: First-order optimality}.
\newblock {\em Annals of Statistics}, 27(6):1896--1913, 1999.

\end{thebibliography}
\end{document}